\documentclass[a4paper,UKenglish,cleveref, autoref, thm-restate]{lipics-v2021}

\usepackage{mathdots}
\usepackage{scalerel}
\usepackage[linesnumbered,ruled,vlined]{algorithm2e} 

\newtheorem{assumption}[theorem]{Assumption}

\hideLIPIcs  

\title{The Guarded Fragment with Nested Equivalences} 

\author{Oskar Fiuk}{Institute of Computer Science, University of Wroc\l{}aw, Poland}{contact.oskarfiuk@gmail.com}{https://orcid.org/0009-0006-1312-4899}{}

\authorrunning{O. Fiuk} 

\Copyright{Oskar Fiuk} 

\ccsdesc[500]{Theory of computation~Finite Model Theory}

\keywords{guarded fragment, finite model property, nested equivalence relations} 

\category{} 

\relatedversion{} 

\funding{Polish National Science Center, grant No.~2021/41/B/ST6/00996}

\acknowledgements{The author thanks Emanuel Kiero\'nski for helpful discussions, and the reviewers of LICS~2026 for their valuable suggestions.}

\nolinenumbers 

\EventEditors{Claudia Faggian and Joost-Pieter Katoen}
\EventNoEds{2}
\EventLongTitle{41st Annual Symposium on Logic in Computer Science (LICS 2026)}
\EventShortTitle{LICS 2026}
\EventAcronym{LICS}
\EventYear{2026}
\EventDate{July 20--23, 2026}
\EventLocation{Lisbon, Portugal}
\EventLogo{}
\SeriesVolume{380}
\ArticleNo{40}

\begin{document}

\newcommand{\cD}{\mathcal{D}}
\newcommand{\cE}{\mathcal{E}}
\newcommand{\cP}{\mathcal{P}}
\newcommand{\cF}{\mathcal{F}}
\newcommand{\cQ}{\mathcal{Q}}
\newcommand{\cO}{\mathcal{O}}
\newcommand{\cI}{\mathcal{I}}
\newcommand{\cC}{\mathcal{C}}
\newcommand{\cR}{\mathcal{R}}
\newcommand{\cU}{\mathcal{U}}
\newcommand{\cS}{\mathcal{S}}
\newcommand{\cH}{\mathcal{H}}
\newcommand{\cN}{\mathcal{N}}
\newcommand{\cV}{\mathcal{V}}
\newcommand{\cJ}{\mathcal{J}}

\newcommand{\cB}{\mathcal{B}}
\newcommand{\cT}{\mathcal{T}}
\newcommand{\bC}{\mathbf{C}}
\newcommand{\cK}{\mathcal{K}}
\newcommand{\cG}{\mathcal{G}}
\newcommand{\cL}{\mathcal{L}}
\newcommand{\bbP}{\mathbb{P}}
\newcommand{\fA}{\mathfrak{A}}
\newcommand{\fB}{\mathfrak{B}}
\newcommand{\fC}{\mathfrak{C}}
\newcommand{\fD}{\mathfrak{D}}
\newcommand{\fG}{\mathfrak{G}}
\newcommand{\ff}{\mathfrak{f}}
\newcommand{\fg}{\mathfrak{g}}

\renewcommand{\phi}{\varphi} 
\newcommand{\eps}{\varepsilon} 

\newcommand{\TTT}{\mbox{\large \boldmath $\tau$}}
\newcommand{\AAA}{\mbox{\large \boldmath $\alpha$}}
\newcommand{\BBB}{\mbox{\large \boldmath $\beta$}}

\newcommand{\Sat}{\ensuremath{\textit{Sat}}}
\newcommand{\FinSat}{\ensuremath{\textit{FinSat}}}

\newcommand{\FO}{\mbox{\rm FO}}
\newcommand{\FOt}{\mbox{$\mbox{\rm FO}^2$}}
\newcommand{\Ct}{\mbox{$\mathcal{C}^2$}}
\newcommand{\UOF}{\mbox{$\mbox{\rm UF}_1$}}
\newcommand{\ODF}{\mbox{$\mbox{\rm UF}_1$}}
\newcommand{\SUOF}{\mbox{$\mbox{\rm SUF}_1$}}
\newcommand{\RUOF}{\mbox{$\mbox{\rm RUF}_1$}}
\newcommand{\GFt}{\mbox{$\mbox{\rm GF}^2$}}
\newcommand{\GF}{\mbox{$\mbox{\rm GF}$}}
\newcommand{\GFTG}{\mbox{$\mbox{\rm GF+TG}$}}
\newcommand{\GFU}{\mbox{$\mbox{\rm GFU}$}}
\newcommand{\FF}{\mbox{$\mbox{\rm FF}$}}
\newcommand{\ALC}{$\cal ALC$}
\newcommand{\UF}{\mbox{$\mbox{\rm UF}_1$}}
\newcommand{\FAUF}{\mbox{$\forall\mbox{\rm -UF}$}}
\newcommand{\MK}{\mbox{$\mbox{\rm K}$}} 
\newcommand{\MDK}{\mbox{$\mbox{\rm DK}$}} 
\newcommand{\DMDK}{\mbox{$\overline{\mbox{\rm DK}}$}} 
\newcommand{\UNFO}{\mbox{$\mbox{\rm UNFO}$}}
\newcommand{\GNFO}{\mbox{$\mbox{\rm GNFO}$}}
\newcommand{\TGF}{\mbox{$\mbox{\rm TGF}$}}
\newcommand{\LGF}{\mbox{$\mbox{\rm LGF}$}}

\newcommand{\GC}{\mbox{$\mbox{\rm GC}$}}
\newcommand{\MGC}{\mbox{$\mbox{\rm mGC}$}}
\newcommand{\SC}{\mbox{$\mbox{\rm SC}$}}
\newcommand{\USC}{\mbox{$\mbox{\rm USC}$}}
\newcommand{\GSC}{\mbox{$\mbox{\rm GSC}$}}
\newcommand{\DMK}{\mbox{$\overline{\mbox{\rm K}}$}} 
\newcommand{\GFcp}{\mbox{$\mbox{\rm GF}^{\times_2}$}}
\newcommand{\MGF}{\mbox{$\mbox{\rm mGF}$}}

\newcommand{\LogSpace}{\textsc{LogSpace}}
\newcommand{\NLogSpace}{\textsc{NLogSpace}}
\newcommand{\NP}{\textsc{NPTime}}
\newcommand{\PTime}{\textsc{PTime}}
\newcommand{\PSpace}{\textsc{PSpace}}
\newcommand{\ExpTime}{\textsc{ExpTime}}
\newcommand{\ExpSpace}{\textsc{ExpSpace}}
\newcommand{\NExpTime}{\textsc{NExpTime}}
\newcommand{\TwoExpTime}{2\textsc{-ExpTime}}
\newcommand{\TwoNExpTime}{2\textsc{-NExpTime}}
\newcommand{\ThreeNExpTime}{3\textsc{-NExpTime}}
\newcommand{\APSpace}{\textsc{APSpace}}
\newcommand{\AExpSpace}{\textsc{AExpSpace}}
\newcommand{\TOWER}{\textsc{Tower}}
\newcommand{\RE}{\textsc{RE}}

\newcommand{\str}[1]{{\mathfrak{#1}}}
\newcommand{\restr}{\!\!\restriction\!\!}

\newcommand{\N}{{\mathbb N}}   
\newcommand{\Q}{{\mathbb Q}}   
\newcommand{\Z}{{\mathbb Z}}   

\newcommand{\sss}{\scriptscriptstyle}

\newcommand{\tp}{{\rm tp}}
\newcommand{\type}[2]{{\rm tp}^{{#1}}[{#2}]}
\newcommand{\classtype}[2]{{\rm ctp}^{{#1}}[{#2}]}
\newcommand{\classsort}[2]{{\rm sort}^{{#1}}[{#2}]}
\newcommand{\classspectrum}[2]{{\rm spec}^{{#1}}[{#2}]}
\newcommand{\absclass}[3]{E^{{#1}}_{{#2}}[{#3}]}
\newcommand{\tet}[2]{{\mathfrak{t}({#1},{#2})}}

\newcommand{\rs}{\bar{r}}
\newcommand{\vs}{\bar{v}}
\newcommand{\xs}{\bar{x}}
\newcommand{\ys}{\bar{y}}
\newcommand{\zs}{\bar{z}}
\newcommand{\cs}{\bar{c}}

\newcommand{\Cons}{{\rm{Cons}}}
\newcommand{\FreeVars}{{\rm fv}}

\newcommand{\eqdef}{:=}

\let\oldemptyset\emptyset
\let\emptyset\varnothing

\newcommand{\RN}[1]{%
  \textup{\uppercase\expandafter{\romannumeral#1}}%
}

\newcommand{\num}{\mathfrak{n}}
\newcommand{\EQ}{\ensuremath{{\mathcal{EQ}}^{\scaleto{\subseteq}{5pt}}}}
\newcommand{\KEQ}[1]{\ensuremath{{{#1}\text{\rm{-}}\mathcal{EQ}}^{\scaleto{\subseteq}{5pt}}}}

\newcommand{\bigmid}{~\big|~}

\maketitle

\begin{abstract}
  The Guarded Fragment (\GF{}) is a well-established decidable fragment of first-order logic.
  We study an extension of \GF{} with \emph{nested equivalence relations}, namely a family of distinguished binary predicates $E_1, E_2, \dots$ interpreted as equivalence relations such that $E_{k+1}$ is coarser than $E_k$ for every~$k$.
  We show that the equality-free \GF{} with nested equivalence relations enjoys the finite model property and has a decidable satisfiability problem.
  Moreover, we establish tight complexity bounds for satisfiability: \TOWER{}-completeness in general, and $(K{+}2)$-\ExpTime{}-completeness when the number of distinguished predicates is fixed to~$K$.
  Finally, we show that satisfiability becomes undecidable if either the nesting condition is dropped (already with two equivalence relations) or equality is admitted (already with a single equivalence relation).
\end{abstract}

\section{Introduction}\label{sec:intro}

Many reasoning tasks reduce to satisfiability problems over domains whose elements carry data values drawn from potentially infinite or very large ranges.
To model situations in which these values can be compared at different levels of granularity, one may introduce a family of increasingly coarser equivalence relations: two elements are related by the $k$-th relation if their associated values agree at the $k$-th level of precision, where precision decreases monotonically as~$k$ increases.
We refer to such families of equivalence relations as \emph{nested}.

For example, Geographical Information Systems often organise data using progressively broader categories: 
city, state, region, and country.  
This hierarchy is captured by four equivalence relations $E_1$, $E_2$, $E_3$, and $E_4$, 
where $E_1$ relates locations within the same city, $E_2$ within the same state, 
$E_3$ within the same region, and $E_4$ within the same country.  
These equivalence relations are naturally nested: addresses located in the same city are also in the same state, those in the same state are also in the same region, and so on.

Similar hierarchies appear in numerous other contexts: Data Storage (organised into files and folders), Network Management (organised into networks and subnetworks), Dependency Maintenance (organised into modules and submodules), E-Commerce (organised into products, categories, and subcategories).
Given the naturalness and pervasiveness of hierarchically structured data, it is of broad interest to develop reasoning frameworks capable of modelling such data using nested equivalence relations.

\subparagraph*{Previous work.}
Logics with nested equivalences have so far been studied predominantly within the two-variable regime: the Two-Variable Fragment~(\FOt{}) and related formalisms.

One can reason about nested equivalence relations in expressive description logics;
such as Horrocks and Sattler's~\cite{HS99} logic $\mathcal{SHI}$, that is,
the standard $\mathcal{ALC}$ logic enriched with transitive roles ($\mathcal{S}$), role hierarchies ($\mathcal{H}$), and role inverses ($\mathcal{I}$).
Björklund and Bojańczyk~\cite{BB07} investigated \FOt{} over words with \emph{nested data},
that is, structures interpreting a linear order, nested equivalence relations, and further unary predicates.
Recent work of Fiuk, Kiero\'nski, and Michielini~\cite{FKM25} advanced this line of research by establishing decidability for several variants of \FOt{} with nested equivalence relations; see Section~\ref{sec:related} for a detailed comparison.

Although \FOt{} is a prominent decidable fragment, its expressive power becomes severely limited once one needs to reason about more than two elements simultaneously.
To the best of our knowledge, the only decidable fragment supporting unboundedly many variables together with nested equivalence relations that has been systematically studied is an extension of the Unary Negation Fragment, denoted as $\UNFO+\mathcal{SOH}$~\cite{DK19}.

\subsection{Our Contribution}\label{sec:contribution}

In this work we study the Guarded Fragment (\GF{}) equipped with nested equivalence relations.
Up to our knowledge, even the case of \GF{} with a single equivalence relation had remained open.
Previous work on \GF{} considered equivalence relations only in restricted settings, such as \emph{equivalence guards}; see Section~\ref{sec:related} for discussion.

\subparagraph*{Guarded Fragment.}
Introduced by Andr\'{e}ka, van Benthem, and N\'{e}meti~\cite{ABN98} as a generalisation of modal logic,
the Guarded Fragment has become, alongside \FOt{}, one of the central decidable fragments of first-order logic.
The prominence of \GF{} stems from its robust computational and model-theoretic properties.
In particular, \GF{} provides a logical foundation for basic description logics.
For a comprehensive introduction to the Guarded Fragment, see Chapter 4 in Pratt-Hartmann's monograph~\cite{PH23}.

\begin{definition}\label{def:GFsyntax}
The \emph{Guarded Fragment} (\GF) is the set of function-free first-order formulas generated by the following rules:
\begin{romanenumerate}
  \item\label{GF-item1} Every atomic formula is in \GF{}.
  \item\label{GF-item2} \GF{} is closed under Boolean connectives.
  \item\label{GF-item3} If $\psi(x) \in \GF{}$ has at most one free variable, then both $\exists x\,\psi(x)$ and $\forall x\,\psi(x)$ are in \GF{}.
  \item\label{GF-item4} If $\psi(\bar{x}, \bar{y}) \in \GF{}$ and $\gamma(\bar{x},\bar{y})$ is an atom mentioning every variable from $\bar{x} \cup \bar{y}$,
  then both $\exists\bar{x}\,\big(\gamma(\bar{x},\bar{y}) \wedge \psi(\bar{x},\bar{y})\big)$ and $\forall\bar{x}\,\big(\gamma(\bar{x},\bar{y}) \rightarrow \psi(\bar{x},\bar{y})\big)$ belong to \GF{}.
\end{romanenumerate}
The atom $\gamma(\bar{x},\bar{y})$ in rule~(\ref{GF-item4}) is called a \emph{guard}.
\end{definition}

\subparagraph*{Nested equivalences.}
We assume that logical signatures include a family of \emph{distinguished} binary symbols, denoted as $E_1,E_2,E_3,\dots$.
The interpretation of these distinguished symbols is defined such that for every relevant $k \ge 1$ the following holds:
\begin{itemize}
  \item $E_k$ is an equivalence relation.
  \item $E_{k+1}$ is coarser than $E_k$, that is, $\forall x,y\,\big(x E_k y \rightarrow x E_{k+1} y\big)$ is an axiom.
\end{itemize}

Beyond the distinguished symbols, signatures may also include \emph{non-distinguished} symbols of any arity as well as constant symbols; these symbols are \emph{free}, that is, their interpretation is not constrained.
In this work, function symbols of positive arity are not considered.

For $K \in \N$, we denote by $\GF[\KEQ{K}]$ the logic consisting of all formulas of \GF{} 
whose signatures include the first $K$ distinguished symbols $E_1,\dots,E_K$,
and whose models are restricted to structures satisfying the intended semantics of these predicates.
We write $\GF[\EQ]$ for the union of $\GF[\KEQ{K}]$ over all $K \geq 0$.
More generally, for any fragment $\cF$, we use the notation $\cF[\KEQ{K}]$ and $\cF[\EQ]$ as well.
Finally, for $m \in \N$, we denote by $\cF^m$ the fragment of $\cF$ consisting of formulas
with at most $m$ distinct variables.

\subparagraph*{Satisfiability problem.}
In this work, we study the satisfiability problem for $\GF[\EQ]$ and $\GF[\KEQ{K}]$: given a sentence $\phi$, does there exist a model $\str{A}$ such that $\str{A}\models\phi$?

We also investigate whether these fragments enjoy the \emph{finite model property}: namely, whether every satisfiable sentence of a fragment has a finite model.

It is known that \GF{} (without equivalences) has the finite model property, with a doubly exponential upper bound on the size of minimal models~\cite{BGO14}. The satisfiability problem for \GF{} is $2$-\ExpTime{}-complete;
with predicates of bounded arity or finite number of variables, the complexity becomes \ExpTime{}-complete~\cite{Gra99}.
It is also known that $\FOt[\EQ]$ has the finite model property, with an exponential upper bound on the size of minimal models. The satisfiability problem for $\FOt[\EQ]$ is \NExpTime-complete~\cite{FKM25}.
In what follows, we present the results obtained for \GF{}
with nested equivalence relations and at least $3$ variables.

\subparagraph*{Equality.}
We first observe that $\GF^3{}[\KEQ{1}]$---that is, \GF{} with three variables and a single equivalence relation---is undecidable, provided that equality is available in the syntax.

\begin{proposition}\label{thm:GFequality}
  The satisfiability and finite satisfiability problems are undecidable for the constant-free $\GF^3[\KEQ{1}]$.
\end{proposition}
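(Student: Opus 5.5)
The plan is to reduce from an undecidable tiling problem. Satisfiability will be reduced from tiling the octant or the plane $\N\times\N$. For finite satisfiability I would use periodic tilings of a torus, or better a recursively inseparable pair (tiling systems with no tiling versus systems with a periodic tiling), so that one reduction handles both problems at once. Given a tiling system with tiles $t_1,\dots,t_n$, I will write a constant-free $\GF^3[\KEQ{1}]$ sentence $\phi$ over the following signature:
\begin{itemize}
  \item the equivalence $E_1$;
  \item free binary predicates $H$ (horizontal successor) and $V$ (vertical successor);
  \item unary tile predicates $T_i$;
  \item unary "parity" labels $A_{a,b}$ for $a,b\in\{0,1\}$, recording the column and row coordinates mod~$2$;
  \item possibly one ternary auxiliary predicate $R$, used only as a guard.
\end{itemize}
The key property is that every model of $\phi$ contains a homomorphic image of the grid carrying a correct tiling, and conversely that a (periodic) tiling yields a (finite) model.

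The routine part comes first. Using guards $H(x,y)$ and $V(x,y)$, I would require that every element has an $H$-successor and a $V$-successor, that the parity labels behave correctly along $H$ and $V$, and that horizontally and vertically adjacent tiles match. All of this is already expressible in two-variable \GF{} without equality.

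The essential difficulty, and the step I expect to be the main obstacle, is enforcing confluence. The $V$-successor of the $H$-successor of $x$ must coincide with the $H$-successor of its $V$-successor. Plain \GF{} cannot express this, because of the tree-like model property. This is exactly where $E_1$ and equality are used.

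The idea is to make each $E_1$-class a small "block" of the grid, namely a $2\times 2$ square. Its four elements carry distinct labels $A_{0,0},A_{1,0},A_{0,1},A_{1,1}$. Equality gives uniqueness inside a class:
\[
\forall x y\, \big(E_1xy \rightarrow ((A_{a,b}x\wedge A_{a,b}y)\rightarrow x=y)\big).
\]
Thus "the element of my class with label $A_{a,b}$" becomes a well-defined function. I then require that $H$ and $V$ between an $A_{0,\cdot}$ and an $A_{1,\cdot}$ element, respectively between an $A_{\cdot,0}$ and an $A_{\cdot,1}$ element, stay inside the class. This makes each block commute.

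Blocks alone do not close the squares that straddle two blocks. For these shifted squares I would try to use equality once more, together with the third variable. The guarded formula
\[
\forall xy\,\big(Hxy\rightarrow \forall z\,(Vxz \rightarrow \ldots)\big)
\]
cannot refer to $y$ under a guard mentioning only $x,z$. So the first attempt would be to introduce a ternary guard $R(x,y,z)$ witnessing each corner $x$ with its $H$- and $V$-successors $y,z$. The $E_1$-class structure, plus uniqueness via equality, would then force the fourth corner to be determined: it is the unique element with the appropriate label in the $E_1$-class of $y$, respectively of $z$. The required identification would be stated as an equality under a guard mentioning only two variables.

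If this shifted-square closure cannot be forced with one equivalence, the fallback is to make the $E_1$-classes be entire columns instead. Within a column, $V$ would be made functional and injective using equality and the mod-$2$ row labels. Commutation would then be forced by propagating $H$ along columns: each $V$-step inside the class of $x$ is matched by a $V$-step inside the class of the $H$-successor. This is enforced with 3-variable guarded formulas whose guards are $E_1$ atoms, plus uniqueness via equality.

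Finally, I would verify the converse direction. From a tiling I build the intended grid model, taking $E_1$ as blocks or columns. From a periodic tiling I build a finite torus model, choosing periods even so that the parity labels are consistent. This establishes undecidability of both satisfiability and finite satisfiability.
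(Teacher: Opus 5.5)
There is a genuine gap, and it sits exactly at the step you flag as the main obstacle: neither plan actually forces confluence.

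\textbf{Plan A.} The claim that the fourth corner is ``the unique element with the appropriate label in the $E_1$-class of $y$, respectively of $z$'' holds only for squares straddling two blocks, i.e.\ those whose lower-left corner is labelled $A_{1,0}$ or $A_{0,1}$. For those squares, an existential three-variable formula together with label uniqueness does close the square.

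A square whose lower-left corner $x$ carries $A_{1,1}$ is different: its four corners lie in four different blocks. The fourth corner $w$ is in neither the class of $y$ nor that of $z$, and neither $\{y,z\}$ nor $\{x,w\}$ is ever co-guarded.

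You cannot make $R(x,y,z)$ hold for \emph{all} $y,z$ with $H(x,y)$ and $V(x,z)$; that is precisely the unguarded $\forall z$ you noticed yourself. Guarded formulas can therefore only assert that \emph{some} commuting square at $x$ exists. That would suffice if $H$ and $V$ were functional, but $H(x,y)\wedge H(x,y')\rightarrow y=y'$ is not guarded when $y,y'$ lie in different classes. Meanwhile your construction has already committed to particular successors $y,z$ of $x$, coming from the neighbouring blocks, and the existential witness need not respect them. In effect, the blocks form a coarser grid that needs confluence just as badly as the original one.

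\textbf{Plan B.} This breaks even earlier. Equality gives uniqueness only per label, so with finitely many labels it bounds the size of a class. With fixed column parity and mod-$2$ row labels, a column class has at most two elements, which contradicts infinite columns. Moreover, $V(x,y)\wedge V(x,z)\rightarrow y=z$ inside a class is only loosely guarded by $xE_1y\wedge xE_1z$, which \GF{} does not allow. Without a functional $V$, propagating $H$ along columns again yields only existentially chosen successors, with the same consistency problem.

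\textbf{What to do instead.} The only place where all pairs are co-guarded is a single $E_1$-class. There your language has exactly the power of the prefix class $\forall x\forall y\exists z$ with equality. Building a grid in that class is Goldfarb's undecidability theorem for the G\"odel class with identity, a deep result you should invoke rather than re-derive.

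That is the paper's proof. A G\"odel-class sentence $\forall x,y\,\exists z\,\psi$ is mapped to $\forall x,y\,\big(xE_1y\rightarrow\exists z\,(G(x,y,z)\wedge xE_1z\wedge\psi)\big)$, where $G$ is a fresh dummy guard. Every $E_1$-class of a (finite) model of the translation is a (finite) model of the original sentence. Conversely, any model of the original becomes a model of the translation by interpreting $E_1$ and $G$ as full relations. This settles satisfiability and finite satisfiability at once, with no tiling needed.
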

\begin{proof}
  The satisfiability and finite satisfiability problems are undecidable for the G\"odel class with identity, i.e., sentences in the form $\forall x,y\,\exists z\,\psi(x,y,z)$, where $\psi$ is quantifier-free and without constant symbols, but may use equality (see~\cite{Gol84,BGG}).

  A G\"odel-class sentence $\phi = \forall x,y\,\exists z\,\psi(x,y,z)$ embeds into $\GF^3[\KEQ{1}]$ as follows:
  \[
    \forall x,y\;\big(x E_1 y \rightarrow \exists z\; \big(G(x,y,z) \wedge x E_1 z \wedge \psi(x,y,z)\big)\big),
  \]
  where $E_1$ is the distinguished equivalence symbol and $G$ is a fresh symbol serving as a dummy guard.

  The correctness of this reduction is evident:
  since quantifiers are relativised by $E_1$, the structure on every equivalence class is a model of $\phi$.
  For converse, it suffices to take any model of $\phi$ and interpret $E_1$ and $G$ as full relations.
\end{proof}

In the rest of the paper, we therefore focus on the equality-free fragments of \GF{}$[\EQ]$ and \GF{}$[\KEQ{K}]$.
As we will see, excluding equality is sufficient to restore the finite model property and decidability of satisfiability.

\subparagraph*{Lower bound.}

Our first main contribution is to show that both the minimal model size and the complexity of reasoning grow drastically in $\GF^3[\EQ]$.
More precisely, each additional distinguished equivalence relation may contribute an extra exponential layer to the size of minimal models.
For every $K,n \in \N$, we construct a satisfiable sentence $\phi_{K,n}$ in $\GF^3[\KEQ{K}]$ whose length is polynomial in $K$ and~$n$, yet every model of $\phi_{K,n}$ has domain size at least
\[
  \tet{K}{n} = 2^{\iddots^{2^n}},
\]
where the tower of exponentials contains $K$ occurrences of~$2$.

We then use these formulas to implement succinct ``large counters'' in complexity lower-bound constructions, obtaining $(K{+}1)$-\ExpTime-hardness of satisfiability for $\GF^3[\KEQ{K}]$.
Subsequently, we lift this lower bound to $(K{+}2)$-\ExpTime-hardness for $\GF[\KEQ{K}]$, i.e., for the unrestricted-variable fragment with constants.
When the number of distinguished equivalence symbols is unbounded, i.e., in $\GF[\EQ]$, the satisfiability problem becomes non-elementary; more precisely, it becomes \TOWER{}-hard.

\begin{theorem}\label{thm:GF3lower}
  The satisfiability problem for the constant-free, equality-free $\GF^3[\EQ]$ is \TOWER-hard.
  
  For every $K \ge 1$, the satisfiability problem for the constant-free, equality-free $\GF^3[\KEQ{K}]$ is $(K{+}1)$-\ExpTime-hard.

  For every $K \ge 1$, the satisfiability problem for the equality-free $\GF[\KEQ{K}]$ is $(K{+}2)$-\ExpTime-hard.
\end{theorem}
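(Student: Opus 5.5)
The plan has three stages: build counters, simulate machines, then add the extra exponential.

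\emph{Counters.} I would first construct, by induction on $K$, sentences $\phi_{K,n}\in\GF^3[\KEQ{K}]$ of size polynomial in $K$ and $n$, whose every model contains $E_K$-classes carrying a \emph{counter} with values in $\{0,\dots,\tet{K}{n}-1\}$.

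For $K=1$, elements carry $n$ unary bits, so a value in $\{0,\dots,2^n-1\}$ is encoded. Guarded $E_1$-quantification forces each $E_1$-class to contain elements of every bit-vector. This uses a successor-like axiom of the form $\forall x,y\,(xE_1y\to \dots)$ together with $\exists y\,(xE_1y\wedge\dots)$. Equality is not needed: equal counter values can be treated as ``the same position''. Hence the domain has size at least $2^n$.

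Inductive step from $K$ to $K{+}1$: inside an $E_{K+1}$-class, each $E_K$-subclass is one position, indexed by a level-$K$ counter, and carries one bit $B$ uniformly. So the $E_{K+1}$-class encodes a word of length $\tet{K}{n}$, i.e.\ a level-$(K{+}1)$ value. To express ``the $E_{K+1}$-class contains $E_K$-subclasses for all addresses'' and to compare two level-$(K{+}1)$ values, I would reuse the three variables. Equality of two level-$K$ addresses in different $E_K$-classes of the same $E_{K+1}$-class is defined recursively by a $\GF^3$ formula guarded by $E_{K+1}$; this is where nesting is essential.

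The main obstacle is this recursive equality and successor test. It must stay in three variables with guards only from distinguished relations and fresh binary predicates, and it must be polynomial in size, i.e.\ no formula duplication. I would first try the standard trick of defining auxiliary binary predicates $\mathrm{Eq}_k(x,y)$ and $\mathrm{Succ}_k(x,y)$ by guarded universal axioms at each level. Since these are only forced in the sound direction, I would add axioms ruling out spurious pairs using witnesses: for example, if $\neg\mathrm{Eq}_k(x,y)$ then there exists a $z$, $E_{k}$-related to $x$, witnessing a differing lower-level address.

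\emph{Machine simulation.} Given counters of size $\tet{K}{n}$, I would encode alternating Turing machines using $\tet{K}{n}$ space, which corresponds to $(K{+}1)$-\ExpTime{} since $\mathrm{ASPACE}(f)=\mathrm{DTIME}(2^{O(f)})$. A configuration is an $E_{K}$-structure with cells addressed by the level-$K$ counter. Successor configurations are reached via a fresh guard binary predicate, and are required to exist by $\exists y\,(S(x,y)\wedge\dots)$. Cell-by-cell consistency is checked via $\mathrm{Eq}_K$ between addresses in consecutive configurations. Only existence of accepting computation trees matters, so infinite models are allowed; this gives $(K{+}1)$-\ExpTime-hardness for $\GF^3[\KEQ{K}]$. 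Since $n$ and $K$ are polynomial parameters and $\tet{K}{n}$ with $K$ part of the input is a tower, this also yields \TOWER-hardness for $\GF^3[\EQ]$.

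\emph{Extra exponential.} For $\GF[\KEQ{K}]$ with unbounded variables and constants, I would mimic Gr\"adel's 2-\ExpTime{} lower bound for \GF{}. There, $n$-ary guards let a single tuple of $n$ elements encode an exponentially long address through which atoms hold. Combining this with the nested construction, the base level already encodes $2^{2^n}$ instead of $2^n$ (or, equivalently, the space bound lifts by one exponent), giving $(K{+}2)$-\ExpTime-hardness. Constants serve as the fixed bits $0/1$ in tuple-encoded addresses.
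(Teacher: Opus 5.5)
Your three-stage plan is the paper's: nested counters, then $K$-\AExpSpace{} $=$ $(K{+}1)$-\ExpTime{}, then constants $0,1$ with an $(n{+}1)$-ary bit predicate to lift the base level to $2^{2^n}$. However, the step you flag as the main obstacle is left open, and the remedy you suggest cannot work in \GF{}. The sound direction of $\mathrm{Succ}_k(x,y)$ must constrain \emph{every} pair $x'\in[x]_{E_{k-1}}$, $y'\in[y]_{E_{k-1}}$ at the same lower address. A universal axiom doing this quantifies $x'$ while $y$ is still free, and no atom forced to hold covers $(x,x',y)$; $E_K$ guards only pairs. The natural fix is to close $\mathrm{Succ}_k$ under $E_{k-1}$ in each argument, $\forall x,y,x'\,((\mathrm{Succ}_k(x,y)\wedge xE_{k-1}x')\to\mathrm{Succ}_k(x',y))$, but this is not guarded either, since guarded universal axioms only constrain tuples already covered by an atom. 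Your witness axioms for $\neg\mathrm{Eq}_k$ give only the completeness direction. Without soundness, $\forall x\,\exists y\,(\mathrm{Succ}_k(x,y)\wedge xE_ky)$ no longer forces $\tet{k}{n}$ distinct classes.

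The missing idea is \emph{existential propagation along the cyclic lower-level successor}: $S_k(x_1,x_2)\to\exists y\,(S_{k-1}(x_1,y)\wedge x_1E_{k-1}y\wedge S_k(y,x_2))$, plus the mirror axiom that moves $x_2$. Since $S_{k-1}$ cycles through all positions of an $E_{k-1}$-class, one $S_k$-link spreads to links between all pairs of positions. The purely binary check $(S_k\wedge Q_{k-1})\to\chi$ then reaches every bit. The four-variable negative direction is split into three-variable axioms via an auxiliary $W_k$. Equality comes from a common $S_k$-successor rather than from separate recursive $\mathrm{Eq}_k$ axioms.

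The same gap reappears in your machine simulation. Consecutive configurations are different $E_K$-classes, so nothing guards pairs across them, and $\mathrm{Eq}_K$ between their cells cannot be forced. Moreover, a per-cell $\exists y\,(S(x,y)\wedge\dots)$ lets different cells of one configuration pick different successor classes. The paper plants a single $T_i$-link from the head cell and propagates it through $T^L_i$ and $T^R_i$ along $S_K$. Each new witness is required to be $E_K$-related to the element it succeeds, and $T_i(x_1,x_2)\to(Z_K(x_1)\leftrightarrow Z_K(x_2))$ pins the offset to zero. This yields one successor class in which every cell is linked to its counterpart and its neighbours.

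Finally, fix your indexing. With elements as the base, $E_{k-1}$-classes must carry $k$-numerals whose bits sit on $E_{k-2}$-subclasses; the paper adds a sentinel $E_0$ axiomatised as $E_1\wedge Q_1$. Read literally, your inductive step puts level-$K$ addresses on $E_K$-subclasses, which costs one exponential.
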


\subparagraph*{Upper bound.}
To establish decidability of satisfiability, we first show that the finite model property holds.
Moreover, we derive tight bounds on the size of minimal models.

\begin{theorem}\label{thm:GF-fmp}
  The equality-free \GF{}$[\EQ]$ has the finite model property.

  For every $K \ge 1$, the equality-free \GF{}$[\KEQ{K}]$ has the finite model property
  with a $(K{+}2)$-exponential upper bound on the size of minimal models.
  This bound decreases to $(K{+}1)$-exponential whenever either
  \begin{itemize}
    \item constants are disallowed, or
    \item the number of variables $m \ge 3$ is fixed.
  \end{itemize}
\end{theorem}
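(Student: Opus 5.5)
We argue by induction on the number $K$ of distinguished equivalence symbols, building a finite model bottom-up: first the finest classes ($E_1$-classes), then $E_2$-classes as finite unions of copies of $E_1$-classes, and so on up to $E_K$-classes and the whole model. We start with a Scott-type normal form. Given an equality-free sentence $\phi$ of $\GF[\KEQ{K}]$, we convert it into conjunctions of guarded $\forall$-conjuncts $\forall \bar{x}\,(\gamma(\bar{x})\rightarrow\psi(\bar{x}))$ and $\forall\exists$-conjuncts $\forall \bar{x}\,(\gamma(\bar{x})\rightarrow\exists \bar{y}\,(\gamma'(\bar{x},\bar{y})\wedge\psi'(\bar{x},\bar{y})))$ with $\psi,\psi'$ quantifier-free. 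This costs a polynomial blow-up and preserves (finite) satisfiability. Absence of equality is essential here and later. Truth of the formula depends only on the atomic types of guarded tuples, so we are free to duplicate elements: any two copies of an element are indistinguishable. This \emph{cloning} freedom is exactly what fails in Proposition~\ref{thm:GFequality}.

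The key step is a \emph{class-level abstraction}. For an element $a$ and level $k$, we define the $k$-type of $a$ to record its atomic $1$-type together with the set of (bounded-size) guarded configurations witnessing its existential requirements. Each configuration is split according to which $E_j$ relations hold between $a$ and the witnesses. The $k$-\emph{sort} of an $E_k$-class is the set of $(k{-}1)$-sorts of $E_{k-1}$-subclasses realised inside it, together with the requirements crossing between different $E_{k-1}$-subclasses. Guards may be non-distinguished atoms spanning several $E_k$-classes, so cross-class witnesses must be allowed, but witnesses for a tuple lying inside one $E_k$-class must respect the $E_j$ relations the guard forces. The number of $1$-types in \GF{} is exponential, or doubly exponential with unbounded arity. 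Each level passes to sets of the previous-level sorts, so the number of $k$-sorts is a tower of height roughly $k{+}1$, or $k{+}2$ with unbounded arity. This matches the $(K{+}1)$/$(K{+}2)$-exponential bounds. Since each sort only counts which lower sorts occur, a class needs only boundedly many copies of each realised lower sort, and the recursion gives the claimed tower.

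The construction then proceeds inductively from a (possibly infinite) model $\str{A}$, which we may take to be tree-like by a guarded unravelling adapted to the equivalences. For $k=1$, we build each needed $E_1$-class finitely using the known finite model property of \GF{} [BGO14], together with a Herwig–Lascar style extension (Rosati/B\'ar\'any–Gottlob–Otto). There every equivalence $E_j$ with $j\ge 2$ is simply full on the class, and $E_1$ is full too. Crucially, \GF{} fmp constructions produce models that are conformal or hypergraph-acyclic enough to avoid accidental new guarded tuples.

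For level $k$, we take a few copies of each realised $E_{k-1}$-class sort and glue them into an $E_k$-class. Cross-subclass requirements, which are guarded tuples spanning several $E_{k-1}$-classes, are satisfied by a round-robin or coloured-copy witness assignment, as in \FOt{} or Gr\"adel-style \GF{} constructions. This ensures that no new guarded tuple violates a $\forall$-conjunct. Constants are handled by placing them in a fixed finite kernel, with their classes at every level treated separately. With constants and unbounded arity, the kernel's description adds one more exponential, which explains the extra level.

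The main obstacle is the gluing step. Creating new guarded tuples across subclasses can destroy guardedness properties that the \GF{} fmp arguments rely on, because distinguished atoms $E_j$ themselves act as guards and make whole classes guarded sets of pairs. We would first restrict $E_j$-guarded quantification to a bounded set of witnesses by normalising $E_j$-guarded $\forall$-conjuncts into conditions on $1$-types within a class. Such conditions are checkable from the sort alone, since only $2$ variables are involved in the guard. We would then prove that the gluing preserves the $1$-type-level constraints, which are all that $E_j$-guards can express beyond $2$-variable interaction.
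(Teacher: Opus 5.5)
Your route of building a finite model bottom-up, level by level, differs from the paper's. As sketched, though, it never actually produces a finite equivalence class. The two places where you defer the work are exactly where the difficulty of the theorem lies.

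\textbf{Base case.} Every $E_j$ is full on an $E_1$-class, so making even one class finite already contains finite satisfiability of equality-free \GFU{}: any \GFU{} sentence can be encoded by relativising to $E_1$, exactly as in Proposition~\ref{thm:GFequality}. With equality, that encoding is the undecidable setting of the Proposition. The finite model property of \GF{}~\cite{BGO14} and Herwig--Lascar/conformal-cover techniques do not cover this. Conformality is also vacuous here, since every pair inside the class is guarded.

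\textbf{Gluing step.} Your claim that $E_j$-guards express nothing beyond $1$-type-level and $2$-variable constraints is false. Conjuncts $\forall x_1,x_2\,(x_1E_jx_2\rightarrow\exists y\,\psi(x_1,x_2,y))$ demand, for \emph{every} pair of the class, a third element with a prescribed joint $3$-type. This includes every newly created cross-subclass pair. Each new witness then creates further guarded pairs whose $2$-types may already be fixed. Round-robin or colouring arguments from \FOt{} handle single-element witness demands, not these. The counters of Section~\ref{sec:GF3-lower} rely on precisely such conjuncts (e.g.\ \eqref{phik-pre-last}), so they cannot be abstracted into sorts. Solving either step finitely amounts to the finite model property of equality-free \GFU{}~\cite{KR21}, a nontrivial result that your sketch neither proves nor invokes.

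\textbf{What the paper does instead.} It never builds finite classes by hand.
\begin{itemize}
\item In a possibly infinite model, it shows that each $E_2$-class can be truncated to at most $m$ $E_1$-classes per class-sort (Lemma~\ref{l:GF3-fin-property}). It does this by a chase that copies types from the original model. Your ``boundedly many copies of each lower sort'' is in this spirit, but the paper uses it only to bound the index.
\item Finite index lets $E_1$ be defined from $E_2$ plus fresh unary predicates, or succinct $2$-numerals when constants are present. So $E_1$ becomes a free symbol.
\item Iterating, and relativising to a fresh coarsest $E_{K+1}$ that then plays the universal relation, yields a \GFU{} sentence of $K$-exponential length.
\item \cite{KR21} then supplies finite models: doubly exponential in general, singly exponential for fixed $m$.
\end{itemize}

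\textbf{Counting of the extra exponential.} Your accounting is off. Without constants there are only singly exponentially many $1$-types, whatever the arity. With constants they are doubly exponential even for fixed $m$. So your scheme would not yield the $(K{+}1)$-exponential bound in the fixed-variable case. The paper obtains it from $\phi$-relative $1$-types and the $\GFU^m$ bound.
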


The finite model property alone does not, however, yield complexity upper bounds that match
the lower bounds of Theorem~\ref{thm:GF3lower}.
Indeed, a naïve ``guess-and-verify'' approach would incur nondeterministic complexity
for the fragments $\GF[\KEQ{K}]$ parametrised by~$K$.
We therefore design a dedicated decision procedure for satisfiability and obtain the following result,
complementing Theorem~\ref{thm:GF3lower} with matching upper bounds.

\begin{theorem}\label{thm:GFmain}
  The satisfiability problem for the equality-free \GF{}$[\EQ]$ is \TOWER-complete.

  For every $K \ge 1$, the satisfiability problem for the equality-free \GF{}$[\KEQ{K}]$
  is $(K{+}2)$-\ExpTime-complete.
  The complexity decreases to $(K{+}1)$-\ExpTime-complete whenever either
  \begin{itemize}
    \item constants are disallowed, or
    \item the number of variables $m \ge 3$ is fixed.
  \end{itemize}
\end{theorem}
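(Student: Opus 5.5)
Lower bounds come from Theorem~\ref{thm:GF3lower}. It remains to give a deterministic decision procedure running in $(K{+}2)$-\ExpTime{} for $\GF[\KEQ{K}]$, in $(K{+}1)$-\ExpTime{} in the two restricted cases, and in elementary time for each fixed $K$ (hence in \TOWER{}) for $\GF[\EQ]$.

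The guess-and-verify approach built on the small models of Theorem~\ref{thm:GF-fmp} would only give nondeterministic bounds. I will therefore use a type-elimination procedure that works bottom-up along the nesting.

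\textbf{Normal form.} First I put $\phi$ into a Scott-type normal form for \GF{}: a conjunction of guarded $\forall$-conjuncts $\forall\bar x\,(\gamma\to\psi)$ and guarded $\exists$-conjuncts $\forall\bar x\,(\gamma\to\exists\bar y\,(\gamma'\wedge\psi'))$ with $\psi,\psi'$ quantifier-free. This uses fresh predicates and preserves satisfiability, with only polynomial blow-up. Since equality is absent, guards may be distinguished atoms $x E_k y$, and then elements of a guarded tuple need only be $E_k$-related.

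\textbf{Types bottom-up.} Next I define abstract ``class types'' level by level.
\begin{itemize}
  \item A level-$0$ object is a guarded-tuple type (atomic $m$-type). There are at most $2^{\mathrm{poly}}$ one-types, and doubly exponentially many tuple types when the arity is unbounded.
  \item A level-$k$ class type describes an $E_k$-class. It records which level-$(k{-}1)$ class types occur inside it, up to a threshold, together with the guarded tuple types realised across subclasses that are $E_k$- but not $E_{k-1}$-related. The absence of equality, together with the finite model property arguments behind Theorem~\ref{thm:GF-fmp}, should make it enough to record only the set of realised types, not their multiplicities.
\end{itemize}
Each level therefore adds one exponential to the number of types. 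Level $K$ gives $(K{+}1)$-fold exponentially many types when tuple types number $2^{\mathrm{poly}}$, which happens for bounded arity/variables or without constants. It gives $(K{+}2)$-fold exponentially many when tuple types are doubly exponential, as with unboundedly many variables and constants pinning down the relevant tuples.

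\textbf{Elimination.} On top of these types I run a greatest-fixpoint elimination. At each level, I repeatedly discard class types that are internally inconsistent, or whose existential requirements cannot be witnessed by the remaining lower-level types or by tuple types compatible with the guards. A witness is needed only for guarded tuples, which lie inside a single $E_k$-class for the guard's level, or for tuples guarded by free predicates, which may span classes but only up to $E_K$. I accept $\phi$ iff some surviving top-level configuration satisfies the unguarded $\forall x/\exists x$ parts and the constant requirements. Each round is polynomial in the number of types, so the total time matches the bounds above. Soundness means building a model from surviving types; I do this by tree-like unravelling within classes, followed by the finite-model construction of Theorem~\ref{thm:GF-fmp} if desired. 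Completeness holds because the types realised in any model survive elimination.

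\textbf{Main obstacle.} The hardest step is soundness: showing that a \emph{set} of realised lower-level types, without counts, suffices to assemble a class. A cross-class existential witness must not spoil other guarded tuples, and without equality we can always duplicate witnesses to keep tuples fresh. My first attempt would be to take many disjoint copies of each witnessing subclass and adapt the guarded-bisimulation unravelling argument.
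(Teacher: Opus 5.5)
There is a genuine gap. It sits where you locate the main obstacle, and there is a second one in the complexity accounting. The lower-bound part of your proposal is fine.

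\emph{Soundness.} Your construction cannot be a tree-like unravelling. Under a guard $xE_ky$ every pair inside an $E_k$-class is guarded, so each class is a clique. Every new witness must therefore receive, from the surviving types, a $2$-type with \emph{every} element already in its $E_k$-class, at every level, and each of those pairs then needs witnesses of its own. Duplicating witnessing subclasses is not free either. The sentence $\forall x,y\,\big(xE_2y\rightarrow((P(x)\wedge P(y))\rightarrow xE_1y)\big)$ allows at most one $E_1$-class containing $P$ per $E_2$-class, so adding ``many disjoint copies'' of an $E_1$-subclass inside an $E_2$-class can violate $\phi$. You leave open, and even suggest dropping multiplicities, when a new subclass may be opened rather than an old one reused.

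The paper never assembles classes from abstract types for $K\ge 2$. Instead it proves the finite-index nesting property (Lemma~\ref{l:GF-fin-property}). Starting from a concrete model, it keeps at most $m$ $E_1$-classes of each class-sort in every $E_2$-class. It then rebuilds a model inside this truncation, copying all $E_K$-internal $2$-types from the original model and choosing witnesses via the similarity relation (Lemma~\ref{l:patterns-full}). This lets $E_1$ be defined by $\lceil\log_2(m\cdot 2^{|\AAA|})\rceil$ unary predicates, or by succinct $2$-numerals when constants are present. That reduces $K$ to $K-1$ at the cost of one exponential. Only the case $K=1$ needs type elimination, using labelled types and the condition $P_{4,\Omega}$ to close $E_1$-classes under transitivity.

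\emph{Counting.} Your count does not give the restricted bounds. Your level-$0$ objects are tuple types, and level-$1$ class types are sets of them. Without constants but with unbounded arity and number of variables, there are doubly exponentially many $m$-types: a single $r$-ary predicate over $r$ variables already yields $r^r$ atoms. So your level-$1$ class types are triply exponential and level $K$ is $(K{+}2)$-exponential, missing the $(K{+}1)$-\ExpTime{} bound. Likewise, with a fixed number of variables but constants and unbounded arity, even $1$-types are doubly exponential.

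In the paper, labels are sets of $1$-types only. Tuple types enter just as the first component of a labelled type, contributing a doubly exponential factor rather than an extra exponent. The restricted cases then work for two reasons: $|\AAA|$ is singly exponential without constants, and for a fixed number of variables $\AAA$ can be replaced by the $\phi$-relative $\AAA_\phi$. A minor further point: tuples guarded by free predicates are not confined to a single $E_K$-class, contrary to what your elimination step assumes.
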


\subparagraph*{Equivalences without nesting.}
Kiero\'nski and Otto~\cite{KO12} showed that $\GFt{}$ is undecidable when extended with three \emph{independent}---that is, not necessarily nested---equivalence relations, but remains decidable with at most two of them.
We complement this result by showing that $\GF^3{}$ becomes undecidable when extended with just two independent equivalence relations.

\begin{restatable}{theorem}{undecGF}\label{thm:GFundec}
  The satisfiability and finite satisfiability problems are undecidable for the constant-free, equality-free $\GF^3$
  extended with two distinguished binary symbols $E$ and $F$, each interpreted as an equivalence relation, but with no nesting conditions between them.
\end{restatable}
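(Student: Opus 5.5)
The plan is to reduce from tiling problems. For the finite case I use the recursive inseparability of ``the tiling system admits a periodic tiling of $\N\times\N$'' from ``the tiling system admits no tiling of $\N\times\N$'' (Gurevich--Koryakov). This gives undecidability of satisfiability and of finite satisfiability at once, provided that for a tiling system $\cT$ I construct a constant-free, equality-free $\GF^3$ sentence $\phi_\cT$ over $E,F$ with two properties. First, a periodic tiling (with period divisible by~$3$) must yield a finite model. Second, any model at all must yield a tiling of $\N\times\N$.

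\textbf{Encoding.} Elements are grid points. The $E$-classes play the role of columns and the $F$-classes the role of rows, so a ``point'' is a nonempty intersection of an $E$-class with an $F$-class; this is how the missing equality is circumvented. Unary predicates $C_0,C_1,C_2$ (column index mod~$3$) are made constant on $E$-classes, and $R_0,R_1,R_2$ (row index mod~$3$) constant on $F$-classes. For example,
\[
  \forall x,y\,\big(E(x,y)\rightarrow (C_i(x)\leftrightarrow C_i(y))\big).
\]
A unary predicate $T_t$ for each tile $t$ is placed on elements, with exactly one tile per element. Horizontal compatibility is imposed universally, and the vertical condition is symmetric with $E$ and $R_i$:
\[
  \forall x,y\,\big(F(x,y)\wedge C_i(x)\wedge C_{i+1}(y)\rightarrow \textstyle\bigvee_{(t,t')\in H}T_t(x)\wedge T_{t'}(y)\big).
\]
These are the key \emph{universal} constraints. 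They guarantee that \emph{any} choice of elements from appropriately labelled column/row intersections is locally consistent, so I never need uniqueness of neighbours.

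\textbf{Growth axioms.} I plan to require that each element $x$ has a right neighbour $y$ and an up neighbour $z$, together with a ``diagonal'' element $w$ in the column of $y$ and the row of $z$. This is written with a fresh ternary guard and variable reuse:
\[
  \forall x\,\exists y,z\,\big(T(x,y,z)\wedge F(x,y)\wedge E(x,z)\wedge \mathrm{succ}_C(x,y)\wedge \mathrm{succ}_R(x,z)\wedge \exists x\,(S(y,z,x)\wedge E(y,x)\wedge F(z,x))\big).
\]
Here $\mathrm{succ}_C(x,y)$ abbreviates $\bigvee_i C_i(x)\wedge C_{i+1}(y)$, and $\mathrm{succ}_R$ is analogous. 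The direction from a periodic tiling to a finite model is routine: take a torus with period divisible by~$3$, let elements be grid points, interpret $E$ as same column and $F$ as same row, and interpret $T,S$ as the required triples.

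\textbf{Main obstacle.} The hard direction is extracting a tiling from an arbitrary model. I want columns $D_0,D_1,\dots$ and rows $P_0,P_1,\dots$ with consecutive labels such that every $D_i\cap P_j$ is nonempty; then picking any element of each intersection gives a tiling by the universal constraints. The difficulty is that the axiom above only gives existential witnesses. The right neighbour chosen from one element of $D_i\cap P_j$ may lie in a different column than the one chosen from another element, so the naive induction does not close.

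My first attempt will be to replace the witness-based growth with stronger closure axioms that stay guarded with three variables. For instance, I would like to require: whenever $F(x,y)$ and $\mathrm{succ}_C(x,y)$, and $x$ has some up neighbour, then $y$'s column meets the row above $x$. The hope is to phrase this through guards $E(\cdot,\cdot)$ and $F(\cdot,\cdot)$ so that only two elements are free at each quantifier, perhaps via auxiliary binary ``up-row'' markers that propagate along $F$-classes. If such closure can be expressed, the columns can be fixed once along the bottom row and the rows once along the first column, and the grid is filled by induction on $i+j$. I expect getting this closure condition into the guarded three-variable syntax, without equality, to be the crux of the proof.
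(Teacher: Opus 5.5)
There is a genuine gap, and the easy direction already fails. Because your $F$-classes are entire rows, the constraint $\forall x,y\,(F(x,y)\wedge C_i(x)\wedge C_{i+1}(y)\rightarrow\dots)$ constrains \emph{every} pair of elements in a row whose column labels are consecutive mod~$3$, not only horizontally adjacent ones. For instance, it links an element $x$ with label $C_i$ to the element $y'$ two right-steps later, which has label $C_{i+2}=C_{i-1}$, and so demands $(T(y'),T(x))\in H$.

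Take six tiles $t_0,\dots,t_5$ with $H=\{(t_k,t_{k+1 \bmod 6})\}$ and $V$ the identity. This system has a periodic tiling. Yet your $\phi_\cT$ is unsatisfiable: if $T(x)=t_k$, then $T(y)=t_{k+1}$ and $T(y')=t_{k+2}$, but $(t_{k+2},t_k)\notin H$. So the torus you propose is not a model, and ``periodic tiling $\Rightarrow$ finite model'' breaks. (The Gurevich--Koryakov framing itself would be fine.)

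Separately, the step you call the crux is not supplied: forcing the existential witnesses to close up into a grid. With whole rows and columns as classes, the equivalences give no handle on adjacency. Nothing makes different elements of $D_i\cap P_j$ choose the same neighbouring column or row.

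The missing idea, used in the paper, is to make the classes \emph{small}. $E$ and $F$ partition $\Z\times\Z$ into $2\times 2$ blocks, offset from each other by one in both coordinates, and unary $C$, $R$ mark even columns and rows.

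\begin{itemize}
\item Every horizontally or vertically adjacent pair then shares exactly one class, so adjacency is quantifier-free definable. $\lambda_H(x,y)$ says that $x,y$ agree on $R$ and either $C(x)\wedge\neg C(y)\wedge xEy\wedge\neg xFy$ or $\neg C(x)\wedge C(y)\wedge\neg xEy\wedge xFy$; $\lambda_V$ is symmetric.
\item Let $\mu(x,y)$ say that $x,y$ agree on $C$ and $R$. The axioms $xEy\wedge\mu(x,y)\rightarrow xFy$, $xFy\wedge\mu(x,y)\rightarrow xEy$ and $xEy\wedge xFy\rightarrow\mu(x,y)$ make an $E\cap F$-class play the role of a point, which is how equality is avoided.
\item These axioms also ensure each class contains at most one point per parity label, so $\lambda_H$- and $\lambda_V$-successors are unique up to $E\cap F$.
\item A commutation axiom then closes every square: $G(x_1,x_2)\wedge\lambda_V(x_1,x_2)\rightarrow\exists y_1,y_2\,(\lambda_H(x_1,y_1)\wedge\lambda_H(x_2,y_2)\wedge\lambda_V(y_1,y_2))$ for $G\in\{E,F\}$, split into three-variable sentences as in the counter construction. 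This yields a homomorphism from $\str{G}_\N$ (resp.\ a torus $\str{G}_{2n,2m}$) into every (finite) model.
\item Because the classes have bounded size, tiling constraints guarded by $E$ or $F$ and conditioned on $\lambda_H,\lambda_V$ are genuinely local. Your row/column encoding cannot achieve this.
\end{itemize}
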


\subsection{Overview of Technical Sections}

\smallskip
\noindent\emph{Section~\ref{sec:prelim}.}
Introduces basic notions and definitions.

\smallskip
\noindent\emph{Section~\ref{sec:appDL}.}
Discusses application of our results in ontology languages.
We propose a decidable extension of the description logic $\mathcal{ALCHI}$ equipped with nested equivalence roles.

\smallskip
\noindent\emph{Section~\ref{sec:GF3-lower}.}
Establishes non-elementary lower bounds on the satisfiability problem (Theorem~\ref{thm:GF3lower}).

\smallskip
\noindent\emph{Section~\ref{sec:GF-upper}.}
Establishes the finite model property and the decidability of satisfiability.
This section also presents the decision procedure for the fragments $\GF[\KEQ{K}]$ parametrised by $K \ge 1$ (Theorems~\ref{thm:GF-fmp} and~\ref{thm:GFmain}).

\smallskip
\noindent\emph{Section~\ref{sec:related}.}
Discusses related work.

\smallskip
\noindent\emph{Section~\ref{sec:remarks}.}
Concludes the paper with observations concerning slight extensions of our results.

\smallskip
\noindent\emph{Appendix.}
Part of the material is delegated to the appendices.
In particular, the undecidability result for $\GF^3$ with two independent equivalences
(Theorem~\ref{thm:GFundec}) is proved in Appendix~\ref{sec:GF-undec}.

\section{Preliminaries}\label{sec:prelim}

We denote the set of natural numbers with $0$ by $\N$.
For $k \in \N$, the notation $[k]$ stands for the set $\{1, \dots, k\}$, with the convention that $[0] = \emptyset$. More generally, we use interval notation $[a, b] \subseteq \N$ to denote the set $\{a, a+1, \dots, b\}$ whenever $a \leq b$, and the empty set $\emptyset$ whenever $a > b$.
We denote by $2^S$ the powerset of a set $S$.
If $E \subseteq S \times S$ is an equivalence relation,
then $\absclass{}{}{a}$ is the equivalence class of $a \in S$,
and $S'/E$ is the quotient set of a subset $S' \subseteq S$. 

For $k \in \N$ and $n \in \N$,
the notation $\tet{k}{n}$ stands for \emph{tetration},
defined inductively: $\tet{0}{n} \eqdef n$ and $\tet{k+1}{n} \eqdef 2^{\tet{k}{n}}$.
Whenever we say that $y$ (a quantity) is \emph{$k$-exponential} in $x$ (a parameter), then we really mean that $y \le \tet{k}{p(x)}$ for some \emph{fixed} polynomial $p$.

\subparagraph*{First-Order Logic.}
We assume general familiarity with First-Order Logic ($\FO$).  
The logical symbols are $=, \bot, \top, \vee, \wedge, \neg, \rightarrow, \leftrightarrow$, and the quantifiers $\forall, \exists$.  
Formulas may also use non-logical symbols: relation symbols (from a countably infinite set), constant symbols (also from a countably infinite set), and variables (again countably many).  
Every relation symbol comes with associated arity.
We do not allow function symbols of positive arity.

Let $\phi$ be a first-order formula.
The \emph{length} of $\phi$, denoted $|\phi|$, is the total number of symbols it contains, where each occurrence of a symbol---be it a variable, relation symbol, or constant---contributes~$1$.
We use $\FreeVars(\phi)$ to denote the set of \emph{free} variables of $\phi$;
and write $\phi(\xs)$ to denote $\FreeVars(\phi) \subseteq \xs$.
A~\emph{signature} $\sigma$ of $\phi$ is the finite set of relation and constant symbols that appear in $\phi$.
For $K \ge 1$, if $\phi \in \GF[\KEQ{K}]$, then we assume that $\{ E_1,\dots,E_K\} \subseteq \sigma$ but $E_\ell \not\in \sigma$ for all $\ell > K$, where $E_1,E_2,\dots$ is the family of distinguished binary predicates.

We use Fraktur letters such as $\str{A}, \str{B}, \dots$ to denote structures, and the corresponding Roman letters $A, B, \dots$ for their domains.
A \emph{$\sigma$-structure} $\str{A}$ interprets the symbols from $\sigma$: a relation symbol $R$ as a relation $R^{\str{A}}\subseteq A^k$ with $k$ denoting the arity of $R$; and a constant symbol $c$ as an element $c^{\str{A}} \in A$.
If $B \subseteq A$, we write $\str{A} ~\restr~ B$ for the \emph{restriction} of $\str{A}$ to the subdomain $B$;
to remain a $\sigma$-structure, $B$ must include all interpretations of constants.

\section{Application in Ontology Languages}\label{sec:appDL}

To illustrate the relevance of our results, we discuss a case study inspired by real-life access control policies, a topic naturally arising in knowledge representation and reasoning.

\subparagraph*{Example (Rule-Based Access Control).}
Consider an IT system involving administrators, users, and documents. 
Each entity is affiliated with exactly one organisation, and every organisation is partitioned into departments.
Administrators may manage documents within their departments, and users may request to download documents. 
To enforce data confidentiality, the following access policy is adopted:
\emph{A user may download a document only if an administrator who manages the document has granted the user permission, provided that the user and the administrator belong to the same organisation.}

The described scenario can be formalised in $\GF[\KEQ{2}]$ as follows.
Let $E_1$ relate entities belonging to the same department, and let $E_2$ relate entities within the same organisation, where $E_2$ is coarser than $E_1$. 
The policy is then captured by the following axioms:
\begin{align}\label{formula:accesspolicy-first}
  \forall a,u\;&\big({\rm grant}(a,u) \rightarrow \big({\rm Admin}(a) \wedge {\rm User}(u)\big)\big) \\
  \forall a,d\;&\big({\rm manages}(a,d) \rightarrow \big({\rm Admin}(a) \wedge {\rm Doc}(d)\big)\big) \\
  \forall u,d\;&\big({\rm access}(u,d) \rightarrow \big({\rm User}(u) \wedge {\rm Doc}(d)\big)\big) \\
  \forall a,u\;&\big({\rm grant}(a,u) \rightarrow a E_2 u\big) \\
  \forall a,d\;&\big({\rm manages}(a,d) \rightarrow a E_1 d\big) \\
  \forall u,d\;&\big({\rm access}(u,d) \rightarrow \exists a\, \big({\rm grant}(a,u) \wedge {\rm manages}(a,d) \big)\big)\label{formula:accesspolicy-last}
\end{align}

In~\eqref{formula:accesspolicy-last} the quantifier $\exists a$ is left unguarded; however, w.r.t. satisfiability, one can introduce a fresh ternary symbol ${\rm Guard}$ and put there ${\rm Guard}(u,d,a)$ to serve as a dummy guard.

\subparagraph*{Extending Description Logics.}
We adopt a common terminology of \emph{description logics} (DLs); for a gentle introduction to the topic, see, e.g.,~\cite{Baader_Horrocks_Lutz_Sattler_2017}.

The sentences~\eqref{formula:accesspolicy-first}--\eqref{formula:accesspolicy-last} use only unary and binary predicates. 
Nevertheless, sentence~\eqref{formula:accesspolicy-last} is expressible neither in \FOt{} nor in standard description logics from the $\mathcal{ALC}$ family, including expressive formalisms such as $\mathcal{SHI}$ that can define nested equivalence relations.
To integrate such properties into the DL framework,
we propose an \emph{ad-hoc} extension of the logic $\mathcal{ALCHI}$.

The standard constructs of $\mathcal{ALCHI}$ on top of $\mathcal{ALC}$ are \emph{role inclusions}~($\mathcal{H}$) and \emph{role inverses}~($\mathcal{I}$)---both already expressible in the equality-free \GFt{}. 
The unique features added here are \emph{nested equivalence roles} and \emph{pairwise existential dependencies}.

\emph{Nested equivalence roles.}
The set of role names contains a distinguished family $E_1,E_2,\dots$.
In every interpretation $\cI = (\Delta^{\cI},\cdot^{\cI})$, these roles are interpreted so that:
\begin{itemize}
  \item for every $k \geq 1$, $E_k^{\cI}$ is an equivalence relation over $\Delta^{\cI}$;
  \item for every $k \geq 1$, $E_{k+1}^{\cI}$ is coarser than $E_k^{\cI}$, i.e., \(E_k^{\cI} \subseteq E_{k+1}^{\cI}.\)
\end{itemize}

\emph{Pairwise existential dependencies.}
Let $r$, $p$, $q$ be role names or their inverses.
A \emph{pairwise existential dependency} is an axiom of the form
\begin{equation}\label{eqn:pairwise-inclusion-axiom}
  r \sqsubseteq \exists (p,q)
\end{equation}
An interpretation $\cI = (\Delta^{\cI},\cdot^{\cI})$ satisfies~\eqref{eqn:pairwise-inclusion-axiom} whenever, for every pair $(a,b) \in r^{\cI}$, there exists some $c \in \Delta^{\cI}$ such that $(a,c) \in p^{\cI}$ and $(b,c) \in q^{\cI}$. 
Intuitively, such axioms link pairs of $r$-related individuals to a common witness via $p$ and $q$.
This construct goes beyond $\FOt{}$ as well as the core description logics of the $\mathcal{ALC}$ family.
This additional expressivity is necessary to capture many real-life properties such as~\eqref{formula:accesspolicy-last}.

\smallskip
Having the discussed above features, the sentences~\eqref{formula:accesspolicy-first}--\eqref{formula:accesspolicy-last} can be expressed concisely as:
\begin{equation*}
  \begin{array}{rlrlrl}
    {\rm grant}&\sqsubseteq\; {\rm Admin} \times {\rm User} &
    {\rm manages}&\sqsubseteq\; {\rm Admin} \times {\rm Doc} &
    {\rm access}&\sqsubseteq\; {\rm User} \times {\rm Doc} \\
    {\rm grant}&\sqsubseteq\; E_2 &
    {\rm manages}&\sqsubseteq\; E_1 &
    {\rm access}&\sqsubseteq\; \exists ({\rm grant}^{-1},{\rm manages}^{-1}) \\
  \end{array}
\end{equation*}

From our results it follows that the knowledge-base consistency problem
for the outlined extension of $\mathcal{ALCHI}$ is decidable:
given a knowledge base $\mathcal{K}$ (i.e., a finite set of axioms),
the task is to decide whether there exists an interpretation $\mathcal{I}$
such that $\mathcal{I} \models \mathcal{K}$ (i.e., $\mathcal{I}$ satisfies every axiom from $\mathcal{K}$).
In addition to decidability, we obtain that general and finite consistency coincide: the interpretation $\mathcal{I}$ can be chosen to be finite.

\section{Non-Elementary Lower Bound}\label{sec:GF3-lower}

In this section, we first show that the satisfiability problem for the constant-free, equality-free $\GF^3[\KEQ{K}]$ is $(K{+}1)$-\ExpTime-hard (Subsections~\ref{sec:GF3-large-counters} and~\ref{sec:GF3-complexity}).
We then extend the lower-bound construction to obtain $(K{+}2)$-\ExpTime-hardness for equality-free $\GF[\KEQ{K}]$ with constants and unboundedly many variables (\autoref{sec:lifting-lower-bounds}).
This establishes Theorem~\ref{thm:GF3lower}.

\subsection{Implementing Large Counters}\label{sec:GF3-large-counters}

Fix $K, n \in \N$. For each $k \in [K]$, an integer in the range $[0,\tet{k}{n}-1]$ is called a \emph{$k$-numeral}. 

As a first step in the lower-bound construction, we define a sentence $\phi_{K,n}$ in the constant-free, equality-free $\GF^3[\KEQ{K}]$ that describes a counter over $\tet{K}{n}$ elements.
The sentence $\phi_{K,n}$ is given as the conjunction
\(\phi_{K,n}^{(1)} \land \phi_{K,n}^{(2)} \land \cdots \land \phi_{K,n}^{(K)},\)
realising the following idea.
\begin{itemize}
  \item Using $n$ unary predicates, the sentence $\phi_{K,n}^{(1)}$ assigns to each element a $1$-numeral, and enforces that every $E_1$-class contains $2^n$ elements (at least one for each $1$-numeral).
  
  \item Treating domain elements as bits, the sentence $\phi_{K,n}^{(2)}$ assigns to each $E_1$-class a $2$-numeral, and requires every $E_2$-class to contain $2^{2^n}$ distinct $E_1$-classes (one for each $2$-numeral).
  
  \item Treating entire $E_1$-classes as bits, the sentence $\phi_{K,n}^{(3)}$ assigns to each $E_2$-class a $3$-numeral, and requires every $E_3$-class to contain $2^{2^{2^n}}$ distinct $E_2$-classes (one for each $3$-numeral).
  
  \item Proceeding inductively, for each $k \geq 2$, the sentence $\phi_{K,n}^{(k)}$ assigns a $k$-numeral to every $E_{k-1}$-class, and enforces every $E_k$-class to contain $\tet{k}{n}$ distinct $E_{k-1}$-classes.
  
  \item Consequently, every $E_K$-class contains $\tet{K}{n}$ distinct $E_{K-1}$-classes, each representing a different $K$-numeral.
  
  \item Finally, the length of $\phi_{K,n}$ is polynomial in $n$ and $K$; in fact, \( |\phi_{K,n}| = \cO(n + K). \)
\end{itemize}

The described idea of ``nested counters'' is certainly not new, as similar ones have already appeared, e.g., in Stockmeyer's proof concerning $\FO$ on words~\cite{Sto74}, or in the non-elementary lower bound for the Fluted Fragment~\cite{P-HST16}.
The non-trivial part, however, is to express required properties using just the quite limited syntax of $\GF^3$.

\subparagraph*{Signature.}
The signature of $\phi_{K,n}$ consists of the following predicates.
\begin{itemize}
  \item The distinguished nested equivalence symbols $E_1,\dots,E_K$.
  \item An additional symbol $E_0$ to be used as a sentinel.
  A priori it is a free symbol; however, we later axiomatise it to behave like the finest equivalence relation. Thus, we form an extended family of nested equivalence relations $E_0,E_1,\dots,E_{K}$.
  \item Unary symbols $BB_i$ and $BC_i$ for $i \in [0,n - 1]$ to represent bit and carry values of $1$-numerals, respectively.
  \item Unary symbols $B_k$ and $C_k$ for $k \in [2,K]$ to represent bit and carry values of $k$-numerals, respectively.
  \item Binary symbols $S_k$ for $k \in [K]$ to connect (elements of) $E_{k-1}$-classes that represent circularly successive $k$-numerals.
  \item Unary symbols $Z_k$ for $k \in [K]$ to decorate (elements of) $E_{k-1}$-classes that represent the $k$-numeral~$0$.
  \item Binary symbols $Q_k$ for $k \in [K]$ to connect (elements of) $E_{k-1}$-classes that represent equal $k$-numerals.
  \item Additional symbols serving auxiliary roles, e.g., as guards; not listed here.
\end{itemize}

\subparagraph*{Interpretation.}
The sentence $\phi_{K,n}$ will be constructed so that every model $\str{A} \models \phi_{K,n}$ will satisfy the following conditions.

First, the sentinel symbol $E_0$ shall be interpreted as an equivalence relation which is finer than $E^{\str{A}}_1$. In particular, all below quotients via $E_0$ are well-defined.

For every $k \in [K]$ and every $E_k$-class $\cC \in A/E^{\str{A}}_{k}$, we shall have
\[ \big|\cC/E^{\str{A}}_{k-1}\big| = \tet{k}{n}. \]

To formally state the intended interpretation of the predicates $S_k$, $Z_k$, and $Q_k$,
we introduce a family of functions $\num_{k} \colon A \rightarrow [0,\tet{k}{n}{-}1]$ for $k \in [K]$ (i.e., the $k$th function assigns $k$-numerals to domain elements). These functions are defined inductively:
\begin{itemize}
  \item The first function $\num_1$ on element $a \in A$ is given by
  \[ \num_1(a) \eqdef \sum\nolimits_{i=0}^{n-1} b_i \cdot 2^i, \]
  where $b_i = 1$ if $\str{A}\models BB_i(a)$; and otherwise $b_i = 0$.
  \item For $k \in [2,K]$, the function $\num_{k}\colon A \rightarrow [0,\tet{k}{n}-1]$ is given by mapping $a \in A$ to
  \[ \num_{k}(a) \eqdef \sum\nolimits_{i=0}^{\tet{k-1}{n}-1}{b_i \cdot 2^i}, \]
  where $b_i = 1$ if $\str{A}\models B_k(b)$ for some $b \in \absclass{\str{A}}{k-1}{a}$ with $\num_{k-1}(b) = i$; and otherwise $b_i = 0$.
\end{itemize}

Now, for every $k \in [K]$ and every pair $(a,b) \in E^{\str{A}}_K$, the following shall hold:
\begin{itemize}
  \item $\str{A} \models E_{k-1}(a,b)$ \, iff \, $(a,b) \in E^{\str{A}}_{k}$ and $\num_k(a) = \num_k(b)$.
  \item $\str{A} \models S_k(a,b)$ \, iff \, $\num_k(a) + 1 \equiv \num_k(b) \mod \tet{k}{n}$.
  \item $\str{A} \models Z_k(a)$ \, iff \, $\num_k(a) = 0$.
  \item $\str{A} \models Q_k(a,b)$ \, iff \, $\num_k(a) = \num_k(b)$.
\end{itemize}

Note that every $E_K$-class is supposed to describe an independent $K$-exponential counter.
In particular, we do not impose any constraints on pairs $(a,b) \not\in E^{\str{A}}_K$.

\subparagraph*{Convention.}
In the presentation that follows, we occasionally write formulas that do not match Definition~\ref{def:GFsyntax} (the syntax of \GF{}),
but are equivalent to guarded ones modulo simple rewritings.
We also omit guards for existential quantifiers, since one may always introduce fresh symbols to act as dummy guards.

\subparagraph*{Base conjunct.}
We now construct the first sentence~$\phi_{K,n}^{(1)}$.

To work with $1$-numerals, i.e., integers in $[0,2^n{-}1]$,
we introduce auxiliary formulas implementing cyclic successor (i.e., successor modulo $2^n$), zero test, and equality test.

Recall the standard algorithm for incrementing a number $m$ modulo a power of two.
We first locate the least significant zero bit of $m$, say at position $i$,
and then flip every bit $j \le i$.
If all bits of $m$ are already equal to $1$, then every bit is flipped.
Equivalently, the increment operation can be seen as a bitwise XOR with the \emph{carry bits}.
For example, if
\(m = 1010111,\)
then the carry bits are
\(0001111,\)
and the bitwise XOR produces the successor
\(1011000.\)

We implement increments as described above.
Axiomatise first the carry bits:
\begin{flalign}\label{phi0-first}
  \forall x\;\Big(BC_0(x) \; \wedge \; \bigwedge\nolimits_{i=0}^{n-2} \big(BC_{i+1}(x) \leftrightarrow \big(BB_i(x) \wedge BC_i(x)\big)\big)\Big)
\end{flalign}
Define then a formula testing whether two elements represent circularly successive $1$-numerals:
\begin{flalign*}
    \varphi_{S}(x_1,x_2) \; \eqdef \; \bigwedge\nolimits_{i=0}^{n-1} \big(BB_i(x_2) \leftrightarrow \big(BB_i(x_1) \oplus BC_i(x_1)\big)\big)
\end{flalign*}
The formulas for zero and equality tests are straightforward:
\begin{flalign*}
  \varphi_{Z}(x) \; \eqdef \; \bigwedge\nolimits_{i=0}^{n-1} \neg BB_i(x)
  \quad\text{and}\quad
  \varphi_{Q}(x_1,x_2) \; \eqdef \; \bigwedge\nolimits_{i=0}^{n-1}\big(BB_i(x_1) \leftrightarrow BB_i(x_2)\big)
\end{flalign*}

We now axiomatise the symbols $S_1$, $Z_1$, and $Q_1$; and enforce the existence of successors:
\begin{flalign}\label{unary-arithmetic-0}
  \forall x_1,x_2\; &\big(x_1 E_K x_2 \rightarrow \big(S_1(x_1,x_2) \leftrightarrow \varphi_{S}(x_1,x_2)\big)\big) \\
  \forall x\; &\big(Z_1(x) \leftrightarrow \varphi_{Z}(x)\big) \\
  \forall x_1,x_2\; &\big(x_1 E_K x_2 \rightarrow \big(Q_1(x_1,x_2) \leftrightarrow \varphi_{Q}(x_1,x_2)\big)\big)\label{unary-arithmetic-2} \\
  \forall x\;\exists y\; &\big(S_1(x,y) \wedge x E_1 y\big)
\end{flalign}

Since $S_1$ is cyclic,
inside every $E_1$-class all $2^n$ elements will be eventually generated, regardless of the initial value.

We make the sentinel symbol $E_0$ act as the finest equivalence relation:
\begin{flalign}
  \forall x_1,x_2\;& \big(x_1 E_0 x_2 \rightarrow x_1 E_1 x_2\big) \\
  \forall x_1,x_2\;& \big(x_1 E_1 x_2 \rightarrow \big(x_1 E_0 x_2 \leftrightarrow Q_1(x_1,x_2)\big)\big)\label{phi0-last}
\end{flalign}

Define $\phi_{K,n}^{(1)}$ as the conjunction of sentences \eqref{phi0-first}--\eqref{phi0-last}.

\subparagraph*{Inductive conjuncts.}
Let $k \in [2,K$].
Suppose that the sentences
$\phi_{K,n}^{(1)}, \dots, \phi_{K,n}^{(k-1)}$ have already been defined.
We now construct the next sentence $\phi_{K,n}^{(k)}$, which axiomatises the
predicates $S_k$, $Z_k$, and $Q_k$.
As a result, each $E_{k-1}$-class is associated with a $k$-numeral, that is,
an integer in the range $[0, \tet{k}{n} - 1]$.

\smallskip\noindent\emph{Bits and carry bits.}
We treat the entire $E_{k-2}$-classes as individual bits, where the respective bit values are given by the predicate $B_{k}$.
We require that elements of every $E_{k-2}$-class are assigned consistent bit values:
\begin{flalign}\label{phik-first}
  \forall x_1,x_2\;\big(x_1 E_{k-2} x_2 \rightarrow \big(B_k(x_1) \leftrightarrow B_k(x_2)\big)\big)
\end{flalign}

We next axiomatise the carry bits.
First, the carry bit of the least significant position is always on:
\begin{flalign}
  \forall x\;\big(Z_{k-1}(x) \rightarrow C_k(x)\big)
\end{flalign}
We then propagate the carry until we encounter the first bit off or reach the last bit position; since $S_{k-1}$ is cyclic, the latter is detected by $\neg Z_{k-1}(x_2)$:
\begin{flalign}
  \forall x_1,x_2\;\Big(\big(x_1 E_{k-1} x_2{\wedge}S_{k-1}(x_1,x_2){\wedge}\neg Z_{k-1}(x_2)\big) \rightarrow
  \big(C_k(x_2) \leftrightarrow \big(C_k(x_1){\wedge}B_k(x_1)\big)\big)\Big)
\end{flalign}

\smallskip\noindent\emph{Successors.}
We next ensure that whenever $S_k$ connects two elements, then their $E_{k-1}$-classes indeed represent consecutive $k$-numerals.

First we ensure the following property:
whenever $S_k$ connects a pair of elements across two $E_{k-1}$-classes, say $\cC$ and $\cC'$,
then necessarily $S_k$ connects a pair of elements across every $E_{k-2}$-classes $\cD \in \cC/E_{k-2}$ and $\cD' \in \cC'/E_{k-2}$.
This property translates to:
\begin{flalign}
  &\forall x_1,x_2\,\big(S_k(x_1,x_2) \rightarrow
  \exists y\,\big(S_{k-1}(x_1,y) \wedge x_1 E_{k-1} y \wedge S_k(y,x_2)\big)\big) \label{eqn:stinki-axiom-1} \\
  &\forall x_1,x_2\,\big(S_k(x_1,x_2) \rightarrow
  \exists y\,\big(S_{k-1}(x_2,y) \wedge x_2 E_{k-1} y \wedge S_k(x_1,y)\big)\big) \label{eqn:stinki-axiom-2}
\end{flalign}

We next apply the bitwise XOR with the respective carry bits:
\begin{flalign}
  \forall x_1,x_2\; \big(\big(S_k(x_1,x_2) \wedge Q_{k-1}(x_1,x_2)\big) \rightarrow \chi(x_1,x_2)\big)
\end{flalign}
where $\chi(x_1,x_2) \eqdef B_k(x_2) \leftrightarrow \big(B_k(x_1) \oplus C_k(x_1)\big)$.

So far, whenever $S_k$ links any two elements, then their $E_{k-1}$-classes indeed represent successive $k$-numerals.
Though, we did not actually enforced any $S_k$-connections.
To fix this, we require that whenever $S_k$ does not link some two elements, then their $E_{k-1}$-classes do not represent successive $k$-numerals, i.e., there is a position where the respective bit and carry-bit values disagree.
With $4$ variables, this property is expressed by the sentence:
\begin{flalign*}
  \forall x_1,x_2\; \Big(&\big(x_1 E_K x_2 \wedge \neg S_k(x_1,x_2)\big) \rightarrow \nonumber \\
  &\exists y_1,y_2\; \big(x_1 E_{k-1} y_1 \wedge x_2 E_{k-1} y_2 \wedge Q_{k-1}(y_1,y_2) \wedge \neg\chi(y_1,y_2)\big)\Big)
\end{flalign*}
Using a fresh binary symbol $W_k$, we rewrite the above sentence into a pair of three-variable sentences:
\begin{flalign}\label{eqn:three-variable-trik1}
  &\forall x_1,x_2\;\big(\big(x_1 E_K x_2 \wedge \neg S_k(x_1,x_2)\big) \rightarrow
  \exists y_1\;\big(x_1 E_{k-1} y_1 \wedge W_k(y_1,x_2)\big)\big) \\
  &\forall y_1,x_2\;\big(W_k(y_1,x_2) \rightarrow
  \exists y_2\;\big(x_2 E_{k-1} y_2 \wedge Q_{k-1}(y_1,y_2) \wedge \neg\chi(y_1,y_2)\big)\big)\label{eqn:three-variable-trik2}
\end{flalign}

We next postulate the existence of cyclic successors:
\begin{flalign}
  \forall x\;\exists y\;\big(S_k(x,y) \wedge x E_{k} y\big)
\end{flalign}

\smallskip\noindent\emph{Zeros.}
We now decorate with $Z_k$ every $E_{k-1}$-class that represents the $k$-numeral $0$.
First, when $Z_k$ holds, every $B_k$ is off:
\begin{flalign}
  \forall x_1,x_2\;\big(\big(x_1 E_{k-1} x_2 \wedge Z_k(x_1)\big) \rightarrow \neg B_k(x_2)\big)
\end{flalign}
For converse, if $Z_k$ does not hold, then some $B_k$ is on:
\begin{flalign}
  \forall x\;\big(\neg Z_k(x) \rightarrow \exists y\;\big(x E_{k-1} y \wedge B_k(y)\big)\big)
\end{flalign}

At this point, every $E_k$-class decomposes into at least $\tet{k}{n}$ inner $E_{k-1}$-classes.
The $E_{k-1}$-classes representing circularly successive $k$-numerals are linked by $S_k$, provided that they are contained in the same $E_K$-class.
Finally, the $E_{k-1}$-classes representing the $k$-numeral~$0$ are marked by $Z_k$.
It remains therefore to axiomatise the $k$th level equality predicate $Q_k$.

\smallskip\noindent\emph{Equality.}
The predicate $Q_k$ shall connect $E_{k-1}$-classes representing the same $k$-numeral, provided that these classes are contained in the same $E_K$-class.
We use the fact that a common $S_k$-successor exists precisely when the $k$-numerals agree:
\begin{flalign}
  \forall x_1,x_2\;\Big(&x_1 E_K x_2 \rightarrow \exists y\;\big(S_k(x_1,y) \wedge
  \big(Q_k(x_1,x_2) \leftrightarrow S_k(x_2,y)\big)\big)\Big)\label{phik-pre-last}
\end{flalign}

Having the predicate $Q_k$ axiomatised,
we enforce that in every $E_k$-class each $k$-numeral is represented by a unique $E_{k-1}$-class:
\begin{flalign}\label{phik-last}
  \forall x_1,x_2\;\big(\big(x_1 E_k x_2 \wedge Q_k(x_1,x_2)\big) \rightarrow x_1 E_{k-1} x_2\big)
\end{flalign}

Finally, define $\phi_{K,n}^{(k)}$ as the conjunction of \eqref{phik-first}--\eqref{phik-last}.

\subsection{Complexity Lower Bound}\label{sec:GF3-complexity}

Using the sentences $\phi_{K,n}$ constructed in \autoref{sec:GF3-large-counters},
we encode accepting runs of alternating Turing machines operating in $K$-exponential space.
The desired lower bound for the constant-free, equality-free $\GF^3[\KEQ{K}]$ follows then from the result of
Chandra, Kozen, and Stockmeyer~\cite{CKS81}, stating that
$K\text{-}\AExpSpace = (K{+}1)\text{-}\ExpTime$.

\subparagraph*{Alternating machines.}
Let $K \in \N$.
Suppose that $\mathbf{M}$ is an alternating Turing machine that operates in a space bound $\tet{K}{p(n)}$ for some polynomial $p$.

The machine $\mathbf{M}$ is defined by the finite set of \emph{states} $S$ with a distinguished \emph{initial state} $s_0 \in S$;
the finite \emph{tape alphabet} $\Gamma$;
the \emph{transition functions} $\delta_1$ and $\delta_2$;
and a partition of $S$ into \emph{universal}, \emph{existential}, \emph{accepting}, and \emph{rejecting} states.

Let $m = \tet{K}{p(|x|)}$ denote the space bound for the input $x$.
A \emph{configuration} of $\mathbf{M}$ is a complete description of the tape with $m$ cells, the head position, and the current state.
The \emph{run} of $\mathbf{M}$ on input $x$ is represented as a directed tree $T$ with a distinguished root $r$.
Each node $u$ of $T$ is labelled by a configuration $C(u)$, where $C(r)$ is the \emph{initial} configuration.
Every non-leaf node has exactly two children. For each edge $v \to u$, the configuration $C(u)$ is obtained
from $C(v)$ by a single deterministic transition step: the left child applies the transition rule $\delta_1$, and the right child applies $\delta_2$.
W.l.o.g.~only the initial configuration $C(r)$ has state $s_0$ and a sequence of configurations $C(v)$ on any path from the root to a leaf is without repetitions; in particular, by the space bound, the tree $T$ is finite.

Leaves of $T$ carry accepting or rejecting states, while internal nodes carry existential or universal ones.
We define \emph{positivity} inductively: a leaf is positive if its state is accepting;
a node with an existential state is positive if at least one of its children is positive;
a node with a universal state is positive if all its children are positive.
Finally, the machine \emph{accepts} $x$ if the root $r$ is positive.

In the following,
we encode the run of $\mathbf{M}$ on an input string $x$: we construct a $\GF^3[\KEQ{K}]$ sentence $\Theta_{\mathbf{M},x}$
such that $\mathbf{M}$ accepts $x$ iff $\Theta_{\mathbf{M},x}$ is satisfiable.

\subparagraph*{Signature.}
The signature of $\Theta_{\mathbf{M},x}$ consists of the following symbols:
\begin{itemize}
  \item The distinguished symbols $E_1,\dots,E_K$, where every $E_{K}$-class models a single configuration with the inner $E_{K-1}$-classes representing individual tape cells.
  \item The binary symbol $S_K$ and the unary symbol $Z_K$, which are shared with the sentence $\phi_{K,n}$ (where $n \eqdef p(|x|)$) from~\autoref{sec:GF3-large-counters}.
  \item Unary symbols $P_a$ for $a \in \Gamma$ and $P_s$ for $s \in S$.
  \item Unary symbols $P_>$ and $P_<$, marking the head being located to the left and to right, respectively.
  \item Binary symbols $T_i$, $T^L_i$, $T^R_i$ for $i \in \{1,2\}$ for modelling $\delta_i$-transitions, where $T_i$ links the $j$th cell of a configuration to the $j$th cell of a successive configuration, $T^{L}_i$ links to the $(j{-}1)$th cell, and $T^{R}_i$ links to the $(j{+}1)$th cell.
  \item A symbol $F$ to mark positive nodes in the computation tree.
\end{itemize}

\subparagraph*{Configurations.}
Every $E_K$-class shall represent a single configuration of $\mathbf{M}$:
the inner $E_{K-1}$-classes correspond to the individual tape cells;
the sentence $\phi_{K,n}$ from~\autoref{sec:GF3-large-counters} enforces that the correct number of such cells is present and arranges them in a linear order.
The symbol written in each cell is indicated by the unary predicates $P_a$ (for $a \in \Gamma$), which are made $E_{K-1}$-universal.
The current state $s$ and the head position are encoded analogously: exactly one $E_{K-1}$-class is marked with $P_s$, identifying the cell scanned by the head.

A typical configuration $C$ is represented schematically as
\[
\begin{pmatrix}
  a_1 & a_2 & \dots & a_{i-1} & a_i & a_{i+1} & \dots & a_{m-1} & a_m \\
  <   & <   & \dots & <        & s_i & >        & \dots & >        & > 
\end{pmatrix}
\]
Here $a_j \in \Gamma$ denotes the tape symbol in cell $j$, $s_i \in S$ is the machine state, and the markers $<$ and $>$ indicate whether a cell is located to the left or to the right of the head.

Let $\num_K$ be the function assigning $K$-numerals as in \autoref{sec:GF3-large-counters}.
Formally, an $E_K$-class $\cC$ models the configuration $C$ if, for every element $b \in \cC$, the following conditions hold:
\begin{itemize}
  \item for each $a \in \Gamma$, 
    \(\str{A} \models P_a(b) \quad\text{iff}\quad a = a_{\num_K(b)};\)
  \item for each $s \in S$,
    \(\str{A} \models P_{s}(b) \quad\text{iff}\quad \num_K(b) = i \;\text{and}\; s = s_i; \)
  \item for each $\bowtie \, \in \{<,>\}$,
    \(\str{A} \models P_{\bowtie}(b) \quad\text{iff}\quad \num_K(b)\,\bowtie\, i.\)
\end{itemize}

We now express that every $E_K$-class models some configuration.

We first enforce that each element carries some tape symbol and some state marker:
\begin{flalign}
  &\forall x\;\bigvee\nolimits_{a \in \Gamma} P_a(x)
  \quad\wedge\quad
  \forall x\;\bigvee\nolimits_{s \in S \cup \{<,>\}} P_s(x)
\end{flalign}

To guarantee that every $E_{K-1}$-class encodes exactly one tape cell, we express that no two distinct elements of the same $E_{K-1}$-class may carry different tape symbols or different state markers:
\begin{flalign}
  &\forall x_1,x_2\;\Big(x_1 E_{K-1} x_2 \rightarrow \bigwedge\nolimits_{a,a' \in \Gamma \,:\, a \neq a'}
      \bigl(\neg P_a(x_1) \vee \neg P_{a'}(x_2)\bigr)\Big) \\[1mm]
  &\forall x_1,x_2\;\Big(x_1 E_{K-1} x_2 \rightarrow \bigwedge\nolimits_{s,s' \in S \cup \{<,>\} \,:\, s \neq s'}
      \bigl(\neg P_s(x_1) \vee \neg P_{s'}(x_2)\bigr)\Big)
\end{flalign}

Next, we propagate the markers $<$ and $>$ along the successor relation $S_K$ to enforce that exactly one cell carries the state symbol (that is, the cell corresponding to the head position):
\begin{flalign}
  &\forall x_1,x_2\;\Big(\Big(S_K(x_1,x_2) \wedge \neg Z_K(x_2) \wedge \bigvee\nolimits_{s \in S \cup \{>\}} P_s(x_1)\Big) \rightarrow P_>(x_2)\Big) \\[1mm] 
  &\forall x_1,x_2\;\Big(\Big(S_K(x_1,x_2) \wedge \neg Z_K(x_2) \wedge \bigvee\nolimits_{s \in S \cup \{<\}} P_s(x_2)\Big) \rightarrow P_<(x_1)\Big)
\end{flalign}
Above, $Z_K$ is used to detect whether $S_K$ cycles back to zero, or connects two elements successive without cyclicity (i.e., $S_K(x_1,x_2) \wedge \neg Z_K(x_2)$ holds when the cells are consecutive).

Finally, we constrain the boundary cells and rule out adjacent $<$ and $>$ markers.  
That is, the leftmost cell cannot carry $>$; the rightmost cannot carry $<$; and $<$ cannot be immediately followed by $>$ (i.e., there must be the head between them):
\begin{flalign}
  &\forall x_1,x_2\;\Big(\big(S_K(x_1,x_2) \wedge Z_K(x_2)\big) \rightarrow \big(\neg P_<(x_1) \wedge \neg P_>(x_2)\big)\Big) \\[1mm]
  &\forall x_1,x_2\;\Big(\big(S_K(x_1,x_2) \wedge \neg Z_K(x_2)\big) \rightarrow \big(\neg P_<(x_1) \vee \neg P_>(x_2)\big)\Big)
\end{flalign}
Here, $Z_K$ is again used to detect the boundary of the tape. 

\subparagraph*{Transitions.}
Let $\cC$ and $\cC'$ be two $E_K$-classes modelling configurations $C$ and $C'$, respectively.
We axiomatise that whenever a relation $T_i$ links (elements of) $\cC$ to (elements of) $\cC'$, then $C'$ must be the $\delta_i$-successor of $C$.

First, for every tape position $j \in [0,m-1]$,  
there must exist elements $a \in \cC$ and $b \in \cC'$ such that 
$\num_K(a)=j$, $\num_K(b)=j$, and $\str{A} \models T_i(a,b)$.
Analogous constraints are imposed for the shifted relations $T^L_i$ and $T^R_i$:  
for $T^L_i$ we require $\num_K(b)=j-1$, and for $T^R_i$ we require $\num_K(b)=j+1$;
the conditions are modulo $m$.

We enforce these constraints using the following formulas:
\begin{flalign}\label{phi-trans-first}
  \forall x_1,x_2\;&\big(T_i(x_1,x_2) \rightarrow \big(Z_K(x_1) \leftrightarrow Z_K(x_2)\big)\big) \\
  \forall x_1,x_2\;&\big(T_i(x_1,x_2) \rightarrow \exists y\;\big(S_K(x_1,y) \wedge x_1 E_K y \wedge T^L_i(y,x_2)\big)\big) \\
  \forall x_1,x_2\;&\big(T_i(x_1,x_2) \rightarrow \exists y\;\big(S_K(x_2,y) \wedge x_2 E_K y \wedge T^R_i(x_1,y)\big)\big) \\
  \forall x_1,x_2\;&\big(T^L_i(x_1,x_2) \rightarrow \exists y\;\big(S_K(x_2,y) \wedge x_2 E_K y \wedge T_i(x_1,y)\big)\big) \\
  \forall x_1,x_2\;&\big(T^R_i(x_1,x_2) \rightarrow \exists y\;\big(S_K(x_1,y) \wedge x_1 E_K y \wedge T_i(y,x_2)\big)\big)\label{phi-trans-last}
\end{flalign}

We require the existence of $\delta_i$-successive configurations:
\begin{flalign}
  \bigwedge\nolimits_{i \in \{1,2\}}\forall x\;\Big(\Big(\bigvee\nolimits_{s \in S \,:\, s\text{ is universal or existential}} P_s(x)\Big) \rightarrow \exists y\;T_i(x,y)\Big)
\end{flalign}

Having the above formulas, it is routine to express transition functions of $\mathbf{M}$.
For instance, to express cell content staying the same for positions not under the head, we write:
\begin{flalign}
  \forall x_1,x_2\;\Big(&\Big(T_i(x_1,x_2) \wedge \bigwedge\nolimits_{s \in S} \neg P_s(x_1)\Big) \rightarrow \bigwedge\nolimits_{a \in \Gamma}\big(P_a(x_1) \leftrightarrow P_a(x_2) \big)\Big)
\end{flalign}

\subparagraph*{Accepting condition.}
To express the accepting condition, we decorate $E_K$-classes modelling positive nodes in the computation tree with the unary predicate $F$.

First, require the symbol $F$ to be $E_K$-universal:
\begin{flalign}
  \forall x_1,x_2\; \big(x_1 E_K x_2 \rightarrow \big(F(x_1) \leftrightarrow F(x_2)\big)\big)
\end{flalign}
Then propagate the positivity condition through the computation tree:
\begin{flalign}
  &\forall x\; \big(P_{s_0}(x) \rightarrow F(x)\big) \quad \wedge \quad \forall x\; \Big(\Big(\bigvee\nolimits_{s \in S \,:\, s\text{ is rejecting}} P_{s}(x)\Big) \rightarrow \neg F(x)\Big) \\
  &\forall x\; \Big(\Big(F(x) \wedge \bigvee\nolimits_{s \in S \,:\, s\text{ is universal}} P_s(x)\Big) \rightarrow \bigwedge\nolimits_{i \in \{1,2\}}\exists y \;\big(T_i(x,y) \wedge F(y)\big)\Big) \label{eqn:forall-positivity} \\
  &\forall x\; \Big(\Big(F(x) \wedge \bigvee\nolimits_{s \in S \,:\, s\text{ is existential}} P_s(x)\Big) \rightarrow \bigvee\nolimits_{i \in \{1,2\}}\exists y \;\big(T_i(x,y) \wedge F(y)\big)\Big) \label{eqn:exists-positivity}
\end{flalign}

To conclude the reduction, it remains to express the initial condition: the tape content of a configuration with the initial state $s_0$ shall contain the input string $x$.
Yet, this final step is relatively straightforward, so we leave details to the reader.

\subsection{Lifting Lower Bound}\label{sec:lifting-lower-bounds}

We now extend the lower-bound construction to
obtain the $(K{+}2)$-\ExpTime-hardness of the satisfiability problem for $\GF[\KEQ{K}]$ (with constants and unboundedly many variables).

Recall that in \autoref{sec:GF3-large-counters}, the formulas $\phi_{S}$, $\phi_{Z}$, and $\phi_{Q}$ were used to encode numbers in the range $[0,2^n-1]$, relying on $n$ unary predicates $BB_i$ to represent individual bits. 

The key insight underlying the $(K{+}2)$-\ExpTime-hardness of $\GF[\KEQ{K}]$ is that numbers from the much larger range $[0,2^{2^n}-1]$ can already be encoded at the level of individual elements: 
let $BB$ be a predicate of arity $n+1$, and let $0$ and $1$ be constants identified with the usual numbers $0$ and $1$.
An element $a \in A$ represents a number $m \in [0,2^{2^n}-1]$ if, for every sequence
$j_0,\dots,j_{n-1} \in \{0,1\}$, we have
$\str{A} \models BB(a,j_{n-1},\dots,j_{0})$ precisely when the $k$-th bit of
$m$ is equal to~$1$, where
$k = \sum_{i=0}^{n-1} j_i \cdot 2^{i}$.
Clearly, this condition cannot be expressed directly for all
sequences $j_0,\dots,j_{n-1}$, as a central requirement is
that the resulting sentences shall have length polynomial in~$n$.
We therefore employ a succinct encoding technique, whose details are deferred to \autoref{appendix:GF3-lower}.
A closely related construction appears in the proof of $2$-\NExpTime{}-completeness for the Triguarded Fragment~\cite{RS18}.

Now, let $\phi_{K,n} = \phi^{(1)}_{K,n} \wedge \dots \wedge \phi^{(K)}_{K,n}$ be as constructed in
\autoref{sec:GF3-large-counters}.
We replace the sentences~\eqref{unary-arithmetic-0}--\eqref{unary-arithmetic-2}
occurring in $\phi^{(1)}_{K,n}$ with their succinct counterparts, yielding an
exponential increase in the representable counting range.
All remaining components of the lower-bound construction from
Subsections~\ref{sec:GF3-large-counters} and~\ref{sec:GF3-complexity} remain
unchanged.
As a result, we can encode runs of alternating Turing machines operating in
$(K{+}1)$-exponential space, rather than $K$-exponential space.
This yields $(K{+}2)$-\ExpTime-hardness for the
equality-free $\GF[\KEQ{K}]$.

\section{Upper Bound}\label{sec:GF-upper}

Having established the lower bounds, we now turn to upper bounds (Theorems~\ref{thm:GF-fmp} and~\ref{thm:GFmain}).
To avoid technical distraction unrelated to nested equivalence relations,
the results in this section are established for the constant-free, equality-free $\GF^3[\EQ]$.
In Appendix~\ref{appendix:GF-upper}, we provide full proofs for the general case,
that is, for the equality-free $\GF[\EQ]$ with constants and an unbounded number of variables.

\subsection{Normal Form and Atomic Types}

\begin{assumption}\label{assumption:GF3-normal-form}
  For simplicity, in this section we work with the constant-free, equality-free $\GF^3[\EQ]$.
  Given a sentence $\phi$ of this fragment, we assume that it is in \emph{normal form}:
  \begin{flalign*}
      \Psi
      \quad\wedge\quad 
      \bigwedge\nolimits_{t \in \cI}\forall x_1,x_2\, \big(G_t(x_1,x_2) \rightarrow \exists y\, \psi_t(x_1,x_2,y)\big), 
  \end{flalign*}
  where
  each $G_t$ is a binary guard,
  each $\psi_t$ is quantifier-free, and
  $\Psi$ is a conjunction of purely universal prenex sentences:
  \begin{flalign*} 
    \bigwedge\nolimits_{t \in \cJ}\forall x_1,\dots,x_{k_t}\, \big(F_t(x_1,\dots,x_{k_t}) \rightarrow \chi_t(x_1,\dots,x_{k_t})\big), 
  \end{flalign*}
  where $1 \le k_t \le 3$, each $F_t$ is a guard of arity $k_t$, and each $\chi_t$ is quantifier-free. 

  The guards $G_t$ and $F_t$ are allowed to be distinguished symbols.
\end{assumption}

Using standard techniques (see, e.g.,~\cite{Gra99,PH23}),
given an arbitrary $\GF^3$-sentence, one can compute in polynomial time a sentence in normal form which is satisfiable over the same domains.
This reduction remains sound also for the fragment $\GF^3[\EQ]$.
Hence, for establishing decidability and the finite model property for $\GF^3[\EQ]$, Assumption~\ref{assumption:GF3-normal-form} is w.l.o.g.

\subparagraph*{Types.}
Let $\sigma$ be a constant-free signature.
For $k \in \N$, let $\operatorname{Lit}_k(\sigma)$ denote the set of all literals over relation symbols from $\sigma$ and variables $X_k \eqdef \{ x_1,\dots,x_k \}$; literals with ``='' are not included.
A \emph{$k$-type} over $\sigma$ is a maximal consistent subset of $\operatorname{Lit}_k(\sigma)$,
that is, for every atom $\gamma$ over $\sigma$ and $X_k$, precisely one of $\gamma$, $\neg\gamma$ is present.
A \emph{type} is a $k$-type for some $k \ge 1$.  

Given a $\sigma$-structure $\fA$ and $k$ distinct elements $a_1,\dots,a_k \in A$, we write 
\( \type{\fA}{a_1,\dots,a_k} \) 
for the $k$-type \emph{realised} by $\langle a_1,\dots,a_k \rangle$ in $\fA$.  
Formally, $\type{\fA}{a_1,\dots,a_k}$ is the set of literals $\gamma \in \operatorname{Lit}_k(\sigma)$ such that $\fA,f \models \gamma$, where $f$ sends $x_i \mapsto a_i$ for all $i \in [k]$.

It is convenient to view $k$-types themselves as $\sigma$-structures over the canonical domain $\{x_1,\dots,x_k\}$.
We allow structure-like notation also for types; for instance, if $\tau$ is a $3$-type, we may write \( \type{\tau}{x_1,x_3} \) for the $2$-type induced by $\tau$ on the variables $x_1$ and $x_3$.

\subsection{Finite-Index Nesting Property}\label{sec:GF3-fin-property}

Our main technical tool is the \emph{finite-index nesting property}:
namely, if a sentence $\phi$ is satisfiable, then it admits a model in which the finest distinguished relation $E_1$ sits inside $E_2$ with \emph{finite index}, that is, every $E_2$-class decomposes into only finitely many $E_1$-classes.

Lemma~\ref{l:GF3-fin-property} establishes that the finite-index nesting property holds for the constant-free, equality-free $\GF^3[\EQ]$.
It also provides an upper bound on the \emph{index} of $E_1$ in every $E_2$-class, that is, on the number of inner $E_1$-classes sitting in an $E_2$-class.

\begin{lemma}\label{l:GF3-fin-property}
  Let $K \ge 2$,
  and let $\phi$ be a constant-free, equality-free sentence of $\GF^3[\KEQ{K}]$ in normal form as in Assumption~\ref{assumption:GF3-normal-form}.
  Let $\AAA$ denote the set of all $1$-types over the signature of~$\phi$.
  If $\phi$ is satisfiable, then it has a model $\fA \models \phi$ such that
  \[
    |\cC / E^{\fA}_1| \;\le\; 3 \cdot 2^{|\AAA|}
    \qquad\text{for every}\qquad
    \cC \in A / E^{\fA}_2.
  \]
\end{lemma}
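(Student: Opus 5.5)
Starting from an arbitrary model $\fB \models \phi$, we rebuild each $E_2$-class so that it contains only boundedly many $E_1$-classes, while keeping every other feature of $\fB$ intact. The key observation is that, because there is no equality and every normal-form conjunct has at most three variables, only two kinds of information about an $E_1$-class matter. The first is its \emph{profile}, i.e.\ the set of $1$-types realised in it. The second is the guarded $2$-types and $3$-types that connect it to other elements. Moreover, any guarded pair $(a,b)$ whose guard implies $aE_1b$ lies inside one $E_1$-class. Any guarded tuple whose elements are pairwise $E_2$-related but not all $E_1$-related involves at most three $E_1$-classes.

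The construction proceeds as follows. Fix an $E_2$-class $\cC$ of $\fB$ and let $\cP\subseteq 2^{\AAA}$ be the set of profiles realised by its $E_1$-classes. For each $P\in\cP$ choose one representative $E_1$-class $\cD_P$ and take three disjoint copies $\cD_P^{(0)},\cD_P^{(1)},\cD_P^{(2)}$ of it, with the structure inside each copy inherited from $\fB$. This gives at most $3\cdot 2^{|\AAA|}$ $E_1$-classes in the new $E_2$-class, as the lemma requires. All relations $E_k$ with $k\ge 2$ are kept as in $\fB$, with each copy placed in the position of $\cC$. Everything inside a single copy is inherited from $\fB$, so witnesses for pairs inside one $E_1$-class, which stay inside the $E_2$-class by guardedness, can be taken from $\fB$ when they lie in $\cD_P$. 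When the witness in $\fB$ lies in another $E_1$-class $\cD'$, we redirect it to a copy of $\cD_{P'}$, where $P'$ is the profile of $\cD'$. The element we pick there has the same $1$-type as the original witness.

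The remaining step is to define the guarded relations between \emph{different} copies, namely the pairs inside the $E_2$-class that are not $E_1$-related, together with any ternary relations among them. We do this circularly, in the style of Grädel's/Bárány--Gottlob--Otto's tripartite construction for \GF{}. A witness requested by an element of copy index $i$ is realised in copy index $i+1 \bmod 3$. The $2$-type (and the $3$-type with the third element) is copied from the corresponding configuration in $\fB$. Since the new witness has the same $1$-type as the original one, the copied type is consistent at the $1$-type level. Using three indices guarantees that each ordered pair of $E_1$-classes has its connections assigned by a single witness request. This is exactly why the conflicts that arise with two copies are avoided. All pairs left undefined receive a ``minimal'' $2$-type taken from $\fB$, i.e.\ the type of some non-guarded pair with the right $1$-types. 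Universal conjuncts $\Psi$ are preserved because every guarded tuple of the new model is isomorphic to a tuple of $\fB$. Elements outside $\cC$ connect to the copies exactly as they did to the originals, again via $1$-type-preserving substitution.

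I expect the main obstacle to be the compatibility of $3$-types. A witness $y$ for a pair $(x_1,x_2)$ may be forced to lie in a class different from the classes of both $x_1$ and $x_2$. The $3$-type then fixes the $2$-types of $(x_1,y)$ and $(x_2,y)$ at the same time, so these two $2$-types must not clash with assignments made for other requests. My first attempt is to allocate, for each guarded pair $(x_1,x_2)$ and each conjunct $t\in\cI$, a \emph{fresh} witness element rather than reusing existing ones. This is done by enlarging each copy with extra elements of the appropriate $1$-types; since this adds no new $E_1$-classes, the profile count is unaffected. Any remaining $E_2$-level conflicts are then resolved by the index-shifting argument above. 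The rest of the proof is routine verification that every conjunct of the normal form of Assumption~\ref{assumption:GF3-normal-form} holds in the resulting model.
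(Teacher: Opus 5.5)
There is a genuine gap, and it starts at the first step. Always taking three copies of every realised profile can violate $\phi$. Consider the universal conjunct $\forall x_1,x_2\,\big(x_1E_2x_2 \rightarrow ((A(x_1)\wedge A(x_2))\rightarrow x_1E_1x_2)\big)$: it allows at most one $E_1$-class containing an $A$-element in each $E_2$-class. Triplicating that class produces $E_2$-related but non-$E_1$-related $A$-elements, whatever $2$-types you put between the copies. You must keep only $\min\{\text{count},3\}$ classes of each sort, and these should be genuinely distinct classes of the original model, as in the paper's $3$-truncation. The constant $3$ there comes from the three variables, not from a tripartite indexing: a witness forced into a further $E_1$-class of sort $\Gamma$ needs a class of sort $\Gamma$ that avoids the classes of $b_1$ and $b_2$.

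The deeper problem is that distinguished symbols may serve as guards (Assumption~\ref{assumption:GF3-normal-form}), so every pair inside an $E_K$-class is guarded. There are therefore no ``non-guarded pairs'' to receive a minimal $2$-type. The circular-index trick relies on only the explicitly connected tuples being guarded, so here it has nothing to act on. Each of the following pairs must carry a $2$-type realised by a correspondingly $E_k$-related pair of the original model:
\begin{itemize}
\item a pair between two copies;
\item a pair between a copy and an outside element of the same $E_K$-class;
\item a pair between a fresh witness and the rest of its $E_1$-class.
\end{itemize}
Each such pair then needs its own witnesses, whose $3$-types fix yet more such pairs. Your proposal gives no consistent rule for this. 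Moreover, ``connect as to the originals'' is undefined once the originals' classes have been discarded.

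The paper avoids inventing elements altogether. It first passes to a model $\fA$ in which every $1$-type occurring in an $E_1$-class occurs there infinitely often and witnesses can be taken distinct (Lemma~\ref{l:propermodels}). It then restricts to the truncated domain $B \subseteq A$ and builds the model by a fair chain that adds one real element $b_3 \in B$ at a time. The element $b_3$ is chosen with $(a_i,a_3)\sim(b_i,b_3)$ via Lemma~\ref{l:patterns}. The $3$-type of $(b_1,b_2,b_3)$ is copied from the witness triple in $\fA$, and all other $2$-types of $b_3$ within its $E_K$-class are inherited from $\fA$. Consequently:
\begin{itemize}
\item $E_1,\dots,E_K$ agree with $\fA$;
\item every guarded type is realised in $\fA$, so $\Psi$ holds;
\item the map $\rho$ supplies, for each guarded pair, a similar pair of $\fA$ of the same type from which later witnesses are drawn.
\end{itemize}
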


The key idea is that
Lemma~\ref{l:GF3-fin-property} allows us to eliminate
the first distinguished symbol $E_1$ from the logic:
a finite number of equivalence classes can be fully described without relying on the externally constrained semantics of the predicate $E_1$.
For instance, we can introduce a family of fresh unary predicates $\{ U_i \}_{i=1}^{n}$, for an appropriate $n \in \N$, and
decorate elements of the domain so that the following equivalence holds:
\[ x E_1 y \; \leftrightarrow \; \Big(x E_2 y \wedge \bigwedge\nolimits_{i=1}^{n}\big(U_i(x) \leftrightarrow U_i(y)\big)\Big). \]
This transformation of models serves as a basis for a reduction from $\GF^3[\KEQ{K}]$ to the simpler fragment $\GF^3[\KEQ{(K{-}1)}]$,
and by iterative applications to $\GF^3[\KEQ{1}]$. 
Technical details of this reduction are developed in the subsequent \autoref{sec:GF3-reduction},
and we now focus on proving Lemma~\ref{l:GF3-fin-property}.

\subparagraph*{Proof of Lemma~\ref{l:GF3-fin-property}.}
Let $K \ge 2$, and let $\phi$ be a constant-free, equality-free sentence of $\GF^3[\KEQ{K}]$
in normal form as in Assumption~\ref{assumption:GF3-normal-form}.
We write $\sigma$ for the signature of $\phi$, and $\AAA$ for the set of all $1$-types over $\sigma$.
The signature $\sigma$ includes the first $K$ distinguished symbols $E_1,\dots,E_K$ and some number of non-distinguished ones.

We begin with a simple lemma about equality-free \FO{}.

\begin{lemma}\label{l:propermodels}
  If $\phi$ is satisfiable,
  then there exists a model~$\str{A} \models \phi$ with the following properties.
  \begin{enumerate}
    \item\label{l:propermodelsitem1} For every index $t \in \cI$ and every pair $(a_1,a_2) \in A^2$ with $\str{A} \models G_t(a_1,a_2)$,
    there exists $a_3 \in A \setminus \{ a_1,a_2\}$ such that
    \[ \str{A} \models \psi_t(a_1,a_2,a_3). \]
    \item\label{l:propermodelsitem2} If $\cC \in A/E^{\str{A}}_1$ and $\alpha \in \AAA$, then the set
    \[ \{ a \in \cC \mid \type{\str{A}}{a} = \alpha \} \]
  is either empty or countably infinite.
  \end{enumerate}
\end{lemma}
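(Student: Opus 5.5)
The natural approach is to build a new model from an arbitrary one by \emph{duplication}, that is, by replacing every element with countably many copies. Since $\phi$ is equality-free, such a blow-up cannot be detected.

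Let $\str{B} \models \phi$. Define $\str{A}$ with domain $A \eqdef B \times \N$ and project along $\pi\colon (b,i) \mapsto b$. For every relation symbol $R$ of arity $r$, and in particular for each distinguished $E_k$, set
\[
  R^{\str{A}} \eqdef \{ \langle a_1,\dots,a_r\rangle \mid \langle \pi(a_1),\dots,\pi(a_r)\rangle \in R^{\str{B}} \}.
\]
The first thing to check is that the semantics of the distinguished symbols survive. Preimages of equivalence relations under a map are again equivalence relations, and $E_k^{\str{B}} \subseteq E_{k+1}^{\str{B}}$ gives $E_k^{\str{A}} \subseteq E_{k+1}^{\str{A}}$. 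Reflexivity holds because copies of the same $b$ are related exactly when $b$ is related to itself, and it is.

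The second step is the standard fact that, for equality-free first-order logic, the map $\pi$ preserves truth. By induction on formulas, for every equality-free $\theta(\xs)$ and every tuple $\bar a$ we have $\str{A} \models \theta(\bar a)$ iff $\str{B} \models \theta(\pi(\bar a))$. Atoms hold by definition and Boolean connectives are immediate. For quantifiers, $\pi$ is surjective, so any witness in $\str{B}$ lifts to $\str{A}$, and conversely any witness in $\str{A}$ projects to $\str{B}$. Hence $\str{A} \models \phi$.

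Now the two properties. For item~\ref{l:propermodelsitem1}, suppose $\str{A} \models G_t(a_1,a_2)$. Then $\str{B} \models G_t(\pi(a_1),\pi(a_2))$, and since $\str{B} \models \phi$ there is some $b_3$ with $\str{B} \models \psi_t(\pi(a_1),\pi(a_2),b_3)$. Among the infinitely many copies $(b_3,i)$, choose one different from both $a_1$ and $a_2$. The quantifier-free equality-free formula $\psi_t$ is preserved, so this copy is the required $a_3$.

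For item~\ref{l:propermodelsitem2}, fix an $E_1^{\str{A}}$-class $\cC$. It is the full preimage of an $E_1^{\str{B}}$-class. Also, $1$-types are preserved under $\pi$: the type of $(b,i)$ equals that of $b$, including the truth of $E_k(x_1,x_1)$ and the other reflexive atoms. So the set of elements of $\cC$ realising $\alpha$ is the preimage of the corresponding set in $\str{B}$. That set is either empty or a union of copies of the form $\{b\} \times \N$. Since $B$ may be uncountable, I would first apply downward L\"owenheim--Skolem to assume $\str{B}$ is countable. Then each nonempty set is a countable union of countably infinite sets, hence countably infinite.

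I do not expect a serious obstacle here. The only points needing care are that the duplication respects the equivalence-relation and nesting axioms, and that the base model must be made countable first so that ``countably infinite'' is exact.
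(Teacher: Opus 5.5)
Your proposal is correct and is essentially the paper's proof. The paper also takes an arbitrary model $\fA_0$, blows it up to domain $A_0 \times \N$ by pulling back every relation along the projection, and then says the required properties are readily verified. Your extra step of first passing to a countable model via downward L\"owenheim--Skolem is legitimate, since the nested-equivalence semantics is first-order axiomatisable. It fills in a detail the paper leaves implicit: without it the sets in item~2 could be uncountable, so they would be infinite but not necessarily countably infinite.
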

\begin{proof}
  Let $\fA_0$ be a model of $\phi$.
  We define the structure $\fA$ over the domain $A_0 \times \N$:
  for each relation symbol $R \in \sigma$, denoting its arity as $k$, interpret $R$ in ${\str{A}}$ as the relation
  \[ \bigcup_{i_1,\dots,i_k \in \N}\big\{\, ((a_1,i_1),\dots(a_k,i_k)) \bigmid (a_1,\dots,a_k) \in R^{\str{A}_0} \,\big\}.\]
  The required properties are readily verified.
\end{proof}

To the end of this subsection,
assuming that $\phi$ is satisfiable, we fix a model $\str{A} \models \phi$ as in Lemma~\ref{l:propermodels},
and subsequently, construct a new model $\str{B} \models \phi$ that will satisfy Lemma~\ref{l:GF3-fin-property}.

\subparagraph*{Sorts and spectra.}
We introduce the following definitions.
\begin{itemize}
  \item A \emph{class-sort} of an element $a \in A$ is the set of $1$-types realised by elements that are $E^{\str{A}}_1$-related to $a$; formally:
\[ \classsort{\str{A}}{a} \eqdef \big\{ \type{\str{A}}{b} \bigmid b \in \absclass{\str{A}}{1}{a} \big\} \subseteq \AAA. \]

  We also define $\classsort{\str{A}}{\cC} \eqdef \classsort{\str{A}}{a}$, where
  $\cC \in A/E^{\str{A}}_1$ and $a \in \cC$ is an arbitrarily chosen representative.
  For $\Gamma \subseteq \AAA$, the $E_1$-class $\cC$ is said to \emph{realise} $\Gamma$ if $\classsort{\str{A}}{\cC} = \Gamma$.

  \item Let $S \subseteq A$ be such that $S/E^{\str{A}}_1 \subseteq A/E^{\str{A}}_1$ (i.e., $S$ is the union of some $E_1$-classes).
  Then a \emph{class-spectrum} of $S$ is a function $2^{\AAA} \rightarrow \N$, denoted as $\classspectrum{\str{A}}{S}$, that sends $\Gamma \subseteq \AAA$ to
\[ \classspectrum{\str{A}}{S}(\Gamma) \eqdef \big|\big\{ \cC \in S/E^{\str{A}}_1 \bigmid \classsort{\str{A}}{\cC} = \Gamma \big\}\big|. \]
\end{itemize}

\subparagraph*{Truncation.}
We now select a subset of the domain $A$ that will serve as a foundation for the new model.
It will contain a \emph{small} number of $E_1$-classes inside every $E_2$-class but at the same time will be rich enough to provide all necessary witnesses.
To define this subset, we employ an auxiliary operation which we call a \emph{$3$-truncation}.

Let $\cC \in A/E^{\str{A}}_{2}$.
A subset $T \subseteq \cC$ is a \emph{$3$-truncation} of $\cC$ if it satisfies the following conditions:
\begin{romanenumerate}
  \item $T/E^{\str{A}}_1 \subseteq \cC/E^{\str{A}}_1$;
  \item for every $\Gamma \subseteq \AAA$, we have
  \[ \classspectrum{\str{A}}{T}(\Gamma) \;=\; \inf\big\{\, \classspectrum{\str{A}}{\cC}(\Gamma),\, 3 \, \big\}. \]
\end{romanenumerate}
In other words,
$T$ is the union of inner $E_1$-classes of $\cC$ which considers at most $3$ representatives of every class-sort $\Gamma \subseteq \AAA$.
The parameter $3$ is chosen because $\phi$ uses $3$ variables.

For every $\cC \in A/E^{\str{A}}_2$, we arbitrarily fix a $3$-truncation for $\cC$ and denote it as $T_3(\cC)$.
The existence of $T_3(\cC)$ is straightforward, but naturally its choice might not be unique.

Finally, we define the desired subset as
\[B \eqdef \bigcup_{\cC \in A/E^{\str{A}}_2}T_3(\cC).\]

\subparagraph*{Similar pairs.}
We now introduce a \emph{similarity} relation $\sim$ on pairs of elements that exhibit analogous characteristics when considering their $1$-types and interactions through the nested equivalence relations $E^{\str{A}}_1 \subseteq \dots \subseteq E^{\str{A}}_K$.

Let $(a_1,a_2) \in A^2$ and $(b_1,b_2) \in A^2$. 
We write $(a_1,a_2) \sim (b_1,b_2)$ if the following holds:
\begin{romanenumerate}
  \item \label{patternitem0} $a_1 = a_2$ if and only if $b_1 = b_2$.
  \item \label{patternitem1} For $i \in \{ 1, 2\}$, $\type{\str{A}}{a_i} = \type{\str{A}}{b_i}$.
  \item \label{patternitem2} For $i \in \{1, 2 \}$, $\classsort{\str{A}}{a_i} = \classsort{\str{A}}{b_i}$.
  \item $(a_1,a_2) \in E^{\str{A}}_1$ if and only if $(b_1,b_2) \in E^{\str{A}}_1$.
  \item \label{patternitem3} For every $k \in [2,K]$, $(a_1,b_1) \in E^{\str{A}}_k$ and $(a_2,b_2) \in E^{\str{A}}_k$.
\end{romanenumerate}

It is readily verified that $\sim$ is indeed an equivalence relation.
The following lemma will be used in the construction of a new model for $\phi$.

\begin{lemma}\label{l:patterns}
  Let $(a_1,a_2) \in A^2$ and $(b_1,b_2) \in B^2$. 
  Suppose that
  \[ \text{$(a_1,a_2) \sim (b_1,b_2)$ and $F \subseteq B$ is a finite set with $\{ b_1,b_2 \} \subseteq F$.}\]
  Then, for any $a_3 \in A \setminus \{ a_1,a_2 \}$,
  one can find an element $b_3 \in B \setminus F$ satisfying that
  \[ (a_1,a_3) \sim (b_1,b_3) \quad \text{and} \quad (a_2,a_3) \sim (b_2,b_3). \]
\end{lemma}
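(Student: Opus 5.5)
The plan is to choose $b_3$ by a case distinction on whether $a_3$ is $E^{\str{A}}_1$-related to $a_1$ or $a_2$. In each case I first pick a suitable $E_1$-class inside $B$ and then pick $b_3$ inside it. For the second choice I use item~\ref{l:propermodelsitem2} of Lemma~\ref{l:propermodels}: every $1$-type realised in an $E_1$-class is realised there infinitely often. So once an $E_1$-class $\cD \subseteq B$ with $\tp(a_3) \in \classsort{\str{A}}{\cD}$ is fixed, I can take $b_3 \in \cD \setminus F$ with $\type{\str{A}}{b_3} = \type{\str{A}}{a_3}$, because $F$ is finite. Membership in $B$ is automatic, since $B$ is a union of whole $E_1$-classes. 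Also $b_3 \notin F$ gives $b_3 \neq b_1, b_2$, which is condition~(\ref{patternitem0}) of $\sim$.

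\emph{Case 1: $a_3 E^{\str{A}}_1 a_i$ for some $i$, say $i=1$.} Take $\cD = \absclass{\str{A}}{1}{b_1}$. This gives $\classsort{\str{A}}{b_3} = \classsort{\str{A}}{b_1} = \classsort{\str{A}}{a_1} = \classsort{\str{A}}{a_3}$, and this set contains $\tp(a_3)$, so the choice above is possible. For every $k \ge 2$ we have $a_3 E_1 a_1 E_k b_1 E_1 b_3$, so $(a_3,b_3) \in E^{\str{A}}_k$. Also $b_3 E_1 b_1$, matching $a_3 E_1 a_1$. For the pair with index $2$ I chain the biconditionals:
\[ a_2 E_1 a_3 \iff a_2 E_1 a_1 \iff b_2 E_1 b_1 \iff b_2 E_1 b_3, \]
where the middle step is $(a_1,a_2)\sim(b_1,b_2)$. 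Hence both required similarities hold.

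\emph{Case 2: $a_3$ is $E_1$-related to neither $a_1$ nor $a_2$.} Let $\cC = \absclass{\str{A}}{2}{a_3}$ and $\Gamma = \classsort{\str{A}}{a_3}$. I want an $E_1$-class $\cD \subseteq T_3(\cC)$ of sort $\Gamma$ that differs from $\absclass{\str{A}}{1}{b_1}$ and $\absclass{\str{A}}{1}{b_2}$. Choosing $b_3$ in it then gives the following:
\begin{itemize}
  \item equal $1$-types and class-sorts with $a_3$;
  \item $(a_3,b_3) \in E^{\str{A}}_k$ for all $k\ge 2$, since both lie in $\cC$;
  \item no $E_1$-links from $b_3$ to $b_1$ or $b_2$, matching $a_3$.
\end{itemize}

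The main (though still elementary) step is the counting argument for the existence of $\cD$. Since $a_i E_2 b_i$, we have $b_i \in \cC$ iff $a_i \in \cC$. Class-sorts agree by~(\ref{patternitem2}), and $[b_1]_{E_1} = [b_2]_{E_1}$ iff $[a_1]_{E_1} = [a_2]_{E_1}$. Let $j \le 2$ be the number of distinct $E_1$-classes among those of $a_1,a_2$ that lie in $\cC$ and have sort $\Gamma$. Then $j$ is also the corresponding number for $b_1,b_2$. The class of $a_3$ is different from all of these, so $\classspectrum{\str{A}}{\cC}(\Gamma) \ge j+1$. Hence $T_3(\cC)$ contains $\inf\{\classspectrum{\str{A}}{\cC}(\Gamma),3\} \ge j+1$ classes of sort $\Gamma$, because $j+1 \le 3$. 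At most $j$ of them are classes of $b_1$ or $b_2$, so a suitable $\cD$ exists. This is exactly where the truncation parameter $3$ is needed.
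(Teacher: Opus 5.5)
Your proof is correct and follows the paper's argument essentially step for step. You use the same case split on whether $a_3$ is $E_1$-related to $a_1$ or $a_2$: in the first case you pick $b_3 \notin F$ of the right $1$-type inside the $E_1$-class of $b_1$, and in the second you pick a class of sort $\classsort{\str{A}}{a_3}$ in $T_3(\cC)$ that avoids the classes of $b_1,b_2$. Your explicit count via $j$ just makes precise the paper's brief remark that the truncation bound $3$ suffices to avoid the two ``bad'' classes.
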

\begin{proof}
  We consider two cases.
  
    First, suppose that either $(a_1,a_3) \in E^{\str{A}}_1$ or $(a_2,a_3) \in E^{\str{A}}_1$;
    w.l.o.g. assume the former.
    We arbitrarily select $b_3$ from
    \[ \big\{ b \in \absclass{\str{A}}{1}{b_1} \setminus F \bigmid \type{\str{A}}{b} = \type{\str{A}}{a_3} \big\}. \]
    The above set is non-empty, since $F$ is assumed to be finite and yet the equivalence class $\absclass{\str{A}}{1}{b_1}$ realises every $1$-type of $\classsort{\str{A}}{b_1}$ ($= \classsort{\str{A}}{a_1}$) infinitely often (item~\ref{l:propermodelsitem2} of Lemma~\ref{l:propermodels}).

    Now, assume that neither $(a_1,a_3)$ nor $(a_2,a_3)$ is $E^{\str{A}}_1$-related.
    Let $\cC = \absclass{\str{A}}{2}{a_3}$, and let $\cD = \absclass{\str{A}}{1}{a_3}$.
    We claim that one can select an $E_1$-class $\cD' \in T_3(\cC)/E^{\str{A}}_1$
    such that
    \[ \cD' \cap \{ b_1,b_2 \} = \emptyset \quad\text{and}\quad \classsort{\str{A}}{\cD'} = \classsort{\str{A}}{\cD}. \]
    The existence of such $E_1$-class $\cD'$ follows from the definition of $3$-truncation:
    $T_3(\cC)$ includes the same number of realisations of $\classsort{\str{A}}{\cD}$ as $\cC$ does when this number is at most $2$;
    and otherwise it includes precisely $3$ realisations of $\classsort{\str{A}}{\cD}$.
    In either case, this suffices to avoid---when necessary---the two ``bad'' $E_1$-classes: namely, $\absclass{\str{A}}{1}{b_1}$ and $\absclass{\str{A}}{1}{b_2}$.

    Finally, we arbitrarily select $b_3$ from
    \[ \big\{ b \in \cD' \setminus F \bigmid \type{\str{A}}{b} = \type{\str{A}}{a_3} \big\}. \]
    As before this set is non-empty.
\end{proof}

\subparagraph*{Model construction.}
We now construct a chain of finite structures
\[ \str{B}_1 \subseteq \str{B}_2 \subseteq \str{B}_3 \subseteq \dots, \quad\quad \text{where } \str{B}_{n+1} \,\restr\, B_n = \str{B}_n \text{ for all } n \ge 1. \]
The natural limit structure $\str{B}_{\omega} = \bigcup_{n \ge 1}{\str{B}_n}$ will be a model of $\phi$ satisfying Lemma~\ref{l:GF3-fin-property}.
For every $n \in \N$, the structure $\str{B}_n$ will be defined over a subset $B_n \subseteq B$,
and the inclusion mapping $\str{B}_n \hookrightarrow \str{A}$ will preserve the $1$-types and the nested equivalence relations $E_1,\dots,E_K$; though not necessarily other atoms.
Formally, this requirement expresses as:
\begin{itemize}
  \item For every $b \in B_n$, $\type{\str{B}_n}{b} = \type{\str{A}}{b}$.
  \item For every $k \in [K]$, $E^{\str{B}_n}_k = E^{\str{A}}_k \cap B^2_n$.
\end{itemize}

To guide the construction, we will additionally define a sequence of mappings $(\rho_n)_{n \ge 1}$:
\begin{enumerate}
  \item[$(\dagger)$]\label{invariant-of-rho} Let $n \ge 1$. If $(b_1,b_2) \in B^2_n$ is such that $\str{B}_n \models G_t(b_1,b_2)$ for some index $t \in \cI$,
  then the mapping $\rho_n$ sends $(b_1,b_2)$ to a pair $(a_1,a_2) \in A^2$ such that $(b_1,b_2) \sim (a_1,a_2)$ and
  \begin{itemize}
    \item $\type{\str{B}_n}{b_1} = \type{\str{A}}{a_1}$ when $b_1 = b_2$;
    \item $\type{\str{B}_n}{b_1,b_2} = \type{\str{A}}{a_1,a_2}$ when $b_1 \neq b_2$.
  \end{itemize}
\end{enumerate}

We now formulate a sufficent criterion for satisfaction of the sentence $\phi$ in $\fB_{\omega}$.

A pair of elements $(b_1,b_2) \in B_n^2$ is \emph{satisfied} in $\str{B}_n$ if $\str{B}_n \models G_t(b_1,b_2) \rightarrow \exists y\;\psi_t(b_1,b_2,y)$ for every $t \in \cI$, and otherwise it is \emph{unsatisfied}.
It is readily verified that $\fB_{\omega} \models \phi$ whenever:
\begin{enumerate}
  \item For every $n \ge 1$, the structure $\str{B}_n$ interprets $E_1,\dots,E_K$ as nested equivalence relations.
  \item For every $n \ge 1$, we have $\str{B}_n \models \Psi$.
  \item Every pair $(b_1,b_2) \in B_n^2$ with $n \geq 1$ is eventually satisfied in some structure $\str{B}_\ell$ with $\ell \geq n$;
although no finite global upper bound on such~$\ell$ is required for all pairs simultaneously.
\end{enumerate}

\smallskip\noindent\emph{Base case.}
Let $\str{B}_1 \eqdef \str{A}\,\restr\,\{ a \}$, where $a \in B$ is an arbitrarily chosen element. 
The corresponding mapping is $\rho_1 \eqdef \{ (a,a) \mapsto (a,a) \}$.

\smallskip\noindent\emph{Inductive step.}
Let $n \ge 1$.
Assume that $\str{B}_n$ and $\rho_n$ have already been defined.
If there are no unsatisfied pairs in $\str{B}_n$, we conclude our construction and declare $\str{B}_{\omega} \eqdef \str{B}_n$ as the final model.
Yet, suppose that an unsatisfied pair exists:
in the current step choose one such pair, denote it as $(b_1,b_2) \in B^2_n$.
Let $t \in \cI$ be an index for which the pair $(b_1,b_2)$ lacks a witness, that is,
such that $\str{B}_n \models G_t(b_1,b_2)$ holds, but $\str{B}_n \models \neg \exists y\;\psi_t(b_1,b_2,y)$.

We require that pairs are chosen in a \emph{fair way}, which means that no pairs remain omitted for infinitely many steps. It is clear that such a strategy for selecting unsatisfied pairs exists.

Now, we show how to construct the successive structure $\str{B}_{n+1}$ extending $\str{B}_n$ by one fresh element $b_3 \in B \setminus B_n$ so that $\str{B}_{n+1} \models \psi_t(b_1,b_2,b_3)$ and inductive invariants are preserved. 
We consider only the case $b_1 \neq b_2$, as the case $b_1 = b_2$ follows a similar line of arguments.

Let $(a_1,a_2) \eqdef \rho_n(b_1,b_2)$.
By the inductive invariant of $\rho_n$, we have that $(a_1,a_2) \sim (b_1,b_2)$ and $\type{\fA}{a_1,a_2} = \type{\str{B}_n}{b_1,b_2}$.
Fix a witness $a_3 \in A$ for the pair $(a_1,a_2)$, that is, an element satisfying $\str{A} \models \psi_t(a_1,a_2,a_3)$.
By item~\ref{l:propermodelsitem1} of Lemma~\ref{l:propermodels}, the witness $a_3$ can be chosen to be distinct from $a_1$ and $a_2$.
Applying Lemma~\ref{l:patterns}, with $F \eqdef B_n$, we obtain $b_3 \in B \setminus B_n$ satisfying
$(a_1,a_3) \sim (b_1,b_3)$ and $(a_2,a_3) \sim (b_2,b_3)$.

We now construct the structure $\str{B}_{n+1}$ by extending $\str{B}_{n+1}$ with the element $b_3$.
\begin{itemize}
  \item Initialise $\str{B}_{n+1} \eqdef \str{B}_{n}$.
  \item The witnessing $3$-type: assign $\type{\str{B}_{n+1}}{b_1,b_2,b_3} \eqdef \type{\str{A}}{a_1,a_2,a_3}$.
  \item $2$-types induced by the distinguished equivalence relations:
  for every $b \in B_n \setminus \{ b_1,b_2\}$
  such that $(b,b_3) \in E^{\str{A}}_K$, assign $\type{\str{B}_{n+1}}{b,b_3} \eqdef \type{\str{A}}{b,b_3}$.
  \item Remaining facts whose truth values has been not specified are set to false.
\end{itemize}

The construction of $\str{B}_{n+1}$ is now finished.
Observe that $\str{B}_{n+1} \,\restr\, B_n = \str{B}_n$,
and also that $\str{B}_{n+1}$ interprets $E_1,\dots,E_K$ in the same way as $\str{A} \,\restr\, B_{n+1}$ does.

We now argue that $\str{B}_{n+1} \models \Psi$.
Recall that $\Psi$ is a conjunction of purely universal prenex sentences.
Consider one of its conjuncts: \(\forall x_1,\dots,x_k \, (F(x_1,\dots,x_k) \to \chi(x_1,\dots,x_k)),\)
where $1 \leq k \leq 3$, $F$ is a guard, and $\chi$ is quantifier-free.
Take any tuple $b_1,\dots,b_k \in B_{n+1}$ such that \(\str{B}_{n+1} \models F(b_1,\dots,b_k).\)
Since $\chi$ is quantifier-free, the truth of $\chi(b_1,\dots,b_k)$ depends only on the atomic type induced by the chosen elements.
Thus, the requirement $\str{B}_{n+1} \models \Psi$ reduces to showing the following: every \emph{guarded} type realised in $\str{B}_{n+1}$ is already realised in the original model~$\str{A}$,
where a $k$-type $\tau$ is said to be guarded if it contains a positive literal $\gamma$ with
\(\FreeVars(\gamma)=\{x_1,\dots,x_k\},\) that is, if some positive literal uses all variables of~$\tau$.

The structure $\str{B}_{n+1}$ is obtained from $\str{B}_n$ by copying atomic types from~$\str{A}$ while preserving all previously assigned types.
This construction cannot introduce new guarded types---in contrast to non-guarded ones.
It follows that $\str{B}_{n+1} \models \Psi$.

We now extend the mapping $\rho_n$.
For brevity, we write $G(x_1,x_2) \eqdef \bigvee_{t \in \cI} G_t(x_1,x_2)$.
\begin{itemize}
  \item Initialise $\rho_{n+1} \eqdef \rho_n$.
  \item For every $(i,j) \in \{1,2,3\}^2 \setminus \{ 1,2\}^2$,
  if $\str{B}_{n+1} \models G(b_i,b_j)$, then put $\rho_{n+1}(b_i,b_j) \eqdef (a_i,a_j)$.
  \item For every element $b \in B_n \setminus \{ b_1,b_2\}$,
  if $\str{B}_{n+1}\models G(b,b_3)$, then put $\rho_{n+1}(b,b_3) \eqdef (b,b_3)$;
  and do symmetrically for $(b_3,b)$: if $\str{B}_{n+1}\models G(b_3,b)$, then put $\rho_{n+1}(b_3,b) \eqdef (b_3,b)$.
\end{itemize}

This concludes the inductive step. The following claim highlights the key points:

\begin{claim}\label{constructioninvariantclaim}
  \, 
  \begin{enumerate}
    \item $\str{B}_{n+1}~\restr~B_n = \str{B}_n$ and $B_{n+1} \setminus B_n = \{ b_3 \}$.
    \item For every $b \in B_{n+1}$, $\type{\str{B}_{n+1}}{b} = \type{\str{A}}{b}$.
    \item For every $k \in [K]$, $E^{\str{B}_{n+1}}_k = E^{\str{A}}_k \cap B^2_{n+1}$.
    \item The invariant~$(\dagger)$ of the mapping $\rho_{n+1}$ is preserved.
    \item $\str{B}_{n+1} \models \Psi$.
    \item $\fB_{n+1} \models \psi_t(b_1,b_2,b_3)$.
  \end{enumerate}
\end{claim}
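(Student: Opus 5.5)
We verify the six items of Claim~\ref{constructioninvariantclaim} in order, mostly by unfolding the construction of $\str{B}_{n+1}$ and $\rho_{n+1}$. Item~1 holds because $b_3 \in B \setminus B_n$ is fresh, so every tuple we assign involves $b_3$. Consequently no fact over $B_n$ alone is touched, and $\str{B}_{n+1}\restr B_n = \str{B}_n$.

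For item~2, the only new $1$-type is that of $b_3$. It is copied from $\type{\str{A}}{a_3}$ via the witnessing $3$-type. Since $(a_1,a_3)\sim(b_1,b_3)$ gives $\type{\str{A}}{a_3}=\type{\str{A}}{b_3}$, we get $\type{\str{B}_{n+1}}{b_3} = \type{\str{A}}{b_3}$.

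For item~3, we check each pair $(b,b_3)$.
\begin{itemize}
  \item If $b\in\{b_1,b_2\}$, the $E_k$-facts come from $\type{\str{A}}{a_i,a_3}$. The relation $\sim$ transfers them: $E_1$ directly by its definition, and $E_k$ for $k\ge 2$ because $a_i E_k b_i$ and $a_3 E_k b_3$, so by transitivity $a_i E_k a_3 \iff b_i E_k b_3$.
  \item If $b\notin\{b_1,b_2\}$ and $(b,b_3)\in E^{\str{A}}_K$, the type is copied literally from $\str{A}$.
  \item Otherwise $(b,b_3)\notin E^{\str{A}}_K$, hence no $E_k$ holds by nesting, and the default ``false'' is correct.
\end{itemize}
In all cases $E_k^{\str{B}_{n+1}} = E_k^{\str{A}}\cap B^2_{n+1}$. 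Equivalence and nesting are then inherited from $\str{A}$.

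For item~4, consider a newly guarded pair. If it is $(b_i,b_j)$ with $\{i,j\}\ni 3$, its image is $(a_i,a_j)$. The required type equality holds by construction, and $(b_i,b_j)\sim(a_i,a_j)$ comes from Lemma~\ref{l:patterns} together with symmetry of $\sim$. For $(3,3)$ we need $(b_3,b_3)\sim(a_3,a_3)$, which is derived from the same data. If it is $(b,b_3)$ with $b\notin\{b_1,b_2\}$, a guard can hold only when the type was copied, i.e.\ $(b,b_3)\in E^{\str{A}}_K$. The image $(b,b_3)$ itself trivially satisfies $\sim$ and type equality, by reflexivity and items~2--3.

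The substantial item is~5, $\str{B}_{n+1}\models\Psi$. It suffices to show that every guarded $k$-type ($k\le3$) realised in $\str{B}_{n+1}$ is realised in $\str{A}$. Guarded tuples within $B_n$ are handled by induction. Tuples within $\{b_1,b_2,b_3\}$ carry types from $\type{\str{A}}{a_1,a_2,a_3}$. Pairs $(b,b_3)$ with copied types are realised in $\str{A}$ by $(b,b_3)$ itself. The main obstacle is a guarded tuple that mixes $b_3$ with elements of $B_n$ outside $\{b_1,b_2\}$, or that is a triple like $(b,b',b_3)$. The key observation is that a guard atom covering all positions must be a fact containing $b_3$ together with the other elements. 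We only create facts whose tuples lie entirely in $\{b_1,b_2,b_3\}$ or in some $\{b,b_3\}$, and we set everything else false. So a guarded triple must have its elements among $\{b_1,b_2,b_3\}$ (possibly with repetitions) or among some $\{b,b_3\}$. Its type is then a subtype or variable-identification of a copied $\str{A}$-type. Equality-freeness matters here: repeated elements are fine, since types without ``$=$'' are obtained by substitution. For $k=1$ we use item~2.

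Item~6 is immediate, since $\type{\str{B}_{n+1}}{b_1,b_2,b_3}=\type{\str{A}}{a_1,a_2,a_3}$ and $\str{A}\models\psi_t(a_1,a_2,a_3)$ with $\psi_t$ quantifier-free. This uses the type equality for $(a_1,a_2)$ from $\rho_n$, which guarantees that the copied $3$-type is consistent with the already fixed $2$-type of $(b_1,b_2)$.
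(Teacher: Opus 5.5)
Your proposal is correct and follows the paper's own argument: you unfold the one-point extension, transfer the $E_k$-facts through $\sim$ and transitivity, and get $\str{B}_{n+1}\models\Psi$ from the observation that copying types from $\str{A}$ without altering $\str{B}_n$ creates no new guarded types. One small point: the 3-type and 2-type assignments do fix facts over $B_n$, so item~1 actually rests on two consistency facts. These are $\type{\str{A}}{a_1,a_2}=\type{\str{B}_n}{b_1,b_2}$ from $(\dagger)$, which you only invoke in item~6, and $\type{\str{A}}{b}=\type{\str{B}_n}{b}$ for the copied $\type{\str{A}}{b,b_3}$, which you use only implicitly.
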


\subparagraph*{Bounding index.}
To finish the proof of Lemma~\ref{l:GF3-fin-property},
it remains to bound the number of $E_1$-classes in every $E_2$-class of $\str{B}_{\omega}$.
Since $B_{\omega} \subseteq B$ and $\str{B}_{\omega} \hookrightarrow \str{A}$ preserves both $E_1$ and $E_2$, we may alternatively bound the number of $E_1$-classes in every $E_2$-class of $\str{A} \,\restr\, B$.

By definition of $B$, every $\cD \in B/E^{\str{A}}_2$ is a $3$-truncation of some $\cC \in A/E^{\str{A}}_2$, i.e., $\cD = T_3(\cC)$.
The number of distinct class-sorts---i.e., subsets of $\AAA$---is $2^{|\AAA|}$.
Since at most $3$ realisations of every class-sort $\Gamma \subseteq \AAA$ are included in $T_3(\cC)$,
we obtain that $|T_3(\cC)/E^{\str{A}}_1| \le 3 \cdot  2^{|\AAA|}$.

\subsection{Elimination of Distinguished Symbols}\label{sec:GF3-reduction}

We now show a reduction from $\GF^3[\KEQ{K}]$ (where $K \ge 2$) to $\GF^3[\KEQ{1}]$.

\begin{lemma}\label{l:GF3-reduction}
  Let $K \ge 2$,
  and let $\phi$ be a constant-free, equality-free sentence of $\GF^3[\KEQ{K}]$.
  Then there exists a constant-free, equality-free sentence $\phi'$ of $\GF^3[\KEQ{1}]$ such that:
  \begin{itemize}
    \item The sentences $\phi$ and $\phi'$ are equisatisfiable.
    \item If $\phi'$ has a model over a domain $D$, then $\phi$ has also.
    \item $|\phi'|$ is bounded by a $(K{-}1)$-exponential function of $|\phi|$.
    \item $\phi'$ is computable from $\phi$ in $(K{-}1)$-exponential time.
  \end{itemize}
\end{lemma}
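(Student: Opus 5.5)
The plan is to reduce $\GF^3[\KEQ{K}]$ to $\GF^3[\KEQ{(K{-}1)}]$ by eliminating $E_1$, with one exponential blow-up, and then iterate $K{-}1$ times.

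Start by putting $\phi$ into the normal form of Assumption~\ref{assumption:GF3-normal-form}; this is polynomial and preserves the domains of models. Let $\AAA$ be the set of $1$-types over its signature and put $N \eqdef 3\cdot 2^{|\AAA|}$. Then $N$ is exponential in $|\phi|$, and $\lceil\log N\rceil$ is polynomial. Introduce fresh unary predicates $U_1,\dots,U_m$ with $m=\lceil \log N\rceil$, so that an element can carry one of $N$ colours. Build $\phi_1$ by replacing every atom $x E_1 y$ by
\[
x E_2 y \wedge \bigwedge\nolimits_{i=1}^{m}\big(U_i(x)\leftrightarrow U_i(y)\big).
\]
When $E_1$ occurs as a guard, keep $E_2(x,y)$ as the guard and move the colour condition into the body. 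Finally, shift indices so that $E_2,\dots,E_K$ become $E_1,\dots,E_{K-1}$.

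The defined relation is always an equivalence relation refining $E_2$, so any model of $\phi_1$ becomes a model of $\phi$ over the same domain once $E_1$ is interpreted by it. This gives the second bullet and one direction of equisatisfiability.

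For the converse I use Lemma~\ref{l:GF3-fin-property}. It gives a model in which each $E_2$-class contains at most $N$ inner $E_1$-classes. Inside each $E_2$-class, assign distinct colours in $[N]$ to these $E_1$-classes, and colour each element by its class. Then the defined formula coincides exactly with $E_1$, and $\phi_1$ holds.

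The size bound needs care, and this is where I expect the main obstacle. Done naively, the step is polynomial, since $m$ is polynomial in $|\phi|$. But the next elimination step must use the $1$-types of the new signature. The new signature has more unary predicates, but $|\AAA'|$ stays exponential in $|\phi_1|$, so each step looks polynomial. That would give a polynomial overall reduction, contradicting the lower bounds. The flaw is that the formula $\bigwedge_i (U_i(x)\leftrightarrow U_i(y))$ is not expressible in \emph{normal form} with $3$ variables without it being counted in the guarded types.

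More importantly, the truncation argument behind Lemma~\ref{l:GF3-fin-property} relies on the index bound $N$ depending on the number of $1$-types. After renormalisation, $\phi_1$ contains $m$ new unary predicates, so the number of $1$-types of $\phi_1$ is $2^{\mathrm{poly}}\cdot N$. The next $N'$ is therefore $3\cdot 2^{|\AAA|\cdot N}$, i.e.\ exponential in the previous $N$.

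So I will track the number of $1$-types $\alpha_j$ rather than $|\phi|$. The recurrence $\alpha_{j+1}\le \alpha_j\cdot 2^{\log(3\cdot2^{\alpha_j})}$ gives only singly exponential growth per step. To keep formula length consistent with this, I will write $\phi'$ with unary predicates indexed by explicit colour values, or as a disjunction over colours. This makes the length exponential per elimination step. Over $K{-}1$ steps it yields the claimed $(K{-}1)$-exponential bound on $|\phi'|$ and on computation time.

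The computability claim is then immediate, since each step is a syntactic rewriting whose output is exponential in its input.
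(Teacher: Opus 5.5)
Your construction and its correctness argument are essentially the paper's. You eliminate $E_1$ with $m=\lceil\log_2 N\rceil$ fresh unary colour predicates, where $N=3\cdot 2^{|\AAA|}$. The defined relation is an equivalence relation refining $E_2$, which gives the domain-preserving direction. For the converse, Lemma~\ref{l:GF3-fin-property} lets you colour the at most $N$ inner $E_1$-classes of each $E_2$-class. Substituting the definition for $E_1$, instead of conjoining \eqref{naive-reduction-axiom1}--\eqref{naive-reduction-axiom2} and keeping $E_1$ as a free symbol, is an inessential variation.

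The gap is in the third and fourth bullets. Your premise that $\lceil\log N\rceil$ is polynomial in $|\phi|$ is false. In a constant-free, equality-free signature every relation symbol contributes exactly one atom over $x_1$, so $|\AAA|=2^{|\sigma|}$ is exponential in $|\phi|$. Hence $N$ is doubly exponential and $m=|\AAA|+\cO(1)$ is already exponential. This is why the step is not polynomial; the contradiction with the lower bounds that you noticed is a symptom of this miscount. It is also all that is needed. One step gives $|\phi_1|\le|\phi|\cdot\cO(m)=2^{\cO(|\phi|)}$. The same estimate applies to $\phi_1$ with its own signature, so $K{-}1$ iterations yield the $(K{-}1)$-exponential bounds on length and running time. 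Your recurrence for $\alpha_j$ is correct for exactly this binary encoding. The claim that the colour test cannot be put into three-variable normal form is a red herring: it is a quantifier-free formula in two variables sitting in the body of a guarded formula.

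The repair you finally commit to, however, breaks the statement. Writing the colour test as a disjunction over colour values, or using one unary predicate per colour, gives a formula of length at least $N=3\cdot 2^{2^{|\sigma|}}$. Take $K=2$: there is a single elimination step and $|\phi'|$ must be singly exponential in $|\phi|$, yet this length is doubly exponential. With one predicate per colour, the next signature also has about $N$ symbols, so every further step costs two exponentials. Keep the binary encoding $\bigwedge_{i=1}^{m}(U_i(x)\leftrightarrow U_i(y))$ you started with and count the $1$-types correctly, as the paper does: $|\phi'|=|\phi|+\cO(|\AAA|)=2^{\cO(|\phi|)}$ per step.
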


\subparagraph*{Proof of Lemma~\ref{l:GF3-reduction}.}
Let $K \ge 2$,
and let $\phi \in \GF^3[\KEQ{K}]$ be a constant-free, equality-free sentence; w.l.o.g $\phi$ is in normal form as in Assumption~\ref{assumption:GF3-normal-form}.
Write $\sigma$ for the signature of $\phi$,
and $\AAA$ for the set of all $1$-types over~$\sigma$.

Set $n = \lceil \log_2 N \rceil$,
where $N = 3 \cdot 2^{|\AAA|}$ is the bound of Lemma~\ref{l:GF3-fin-property}.
We axiomatise the finest distinguished symbol $E_1$ using fresh unary predicates $\big\{ U_i \big\}^n_{i=1}$:
we declare two elements $E_1$-equivalent if and only if they are $E_2$-related and agree on $U_i$ for all $i \in [n]$.

Let $\phi'$ denote the sentence $\phi$ with conjoined axioms:
\begin{flalign}\label{naive-reduction-axiom1}
  \forall x_1,x_2\;\Big(&x_1 E_1 x_2 \rightarrow x_1 E_2 x_2\Big) \\
  \forall x_1,x_2\;\Big(&x_1 E_2 x_2 \rightarrow \Big(x_1 E_1 x_2 \leftrightarrow \bigwedge\nolimits^{n}_{i=1} \big(U_i(x_1) \leftrightarrow U_i(x_2)\big)\Big)\Big)\label{naive-reduction-axiom2}
\end{flalign}

In $\phi'$ the symbol $E_1$ is treated as a non-distinguished binary symbol, i.e., its interpretation is not constrained.
Hence, it can be naturally viewed as a sentence of $\GF^3[\KEQ{(K{-}1)}]$, with the distinguished symbols being $E_2,\dots,E_K$;
this numeration of distinguished symbols does not follow our convention, but it is convenient in the following claim. 

\begin{claim}\label{claim:GF3-reduction-correct}
  \,
  \begin{enumerate}
    \item The sentences $\phi$ and $\phi'$ are equisatisfiable.
    \item The $\sigma$-reduct of any model $\str{A}\models\phi'$ satisfies $\phi$.
    \item $|\phi'|$ is bounded by an exponential function of $|\phi|$.
  \end{enumerate}
\end{claim}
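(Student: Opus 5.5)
The plan is to prove the three items in the order (2), (1), (3), since (2) gives one direction of equisatisfiability for free.

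\emph{Item 2.} Let $\str{A}\models\phi'$, where $E_2,\dots,E_K$ are interpreted as nested equivalences and $E_1$ is unconstrained. By axioms~\eqref{naive-reduction-axiom1} and~\eqref{naive-reduction-axiom2}, $E^{\str{A}}_1$ coincides with the relation ``$E_2$-related and agreeing on all $U_i$''. This is the intersection of the equivalence $E^{\str{A}}_2$ with the kernel of the map $a\mapsto(U_1(a),\dots,U_n(a))$. Hence it is an equivalence relation contained in $E^{\str{A}}_2$, and so in every $E^{\str{A}}_k$ with $k\ge 2$. The $\sigma$-reduct therefore interprets $E_1,\dots,E_K$ as nested equivalences. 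Since $\phi$ is a conjunct of $\phi'$ and mentions no $U_i$, the reduct satisfies $\phi$. This is exactly the ``same domain'' transfer required by Lemma~\ref{l:GF3-reduction}.

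\emph{Item 1, converse direction.} Suppose $\phi$ is satisfiable. By Lemma~\ref{l:GF3-fin-property}, there is a model $\str{A}\models\phi$ in which every $E_2$-class $\cC$ contains at most $N=3\cdot 2^{|\AAA|}\le 2^n$ classes of $E_1$. For each $\cC\in A/E^{\str{A}}_2$, fix an injection $f_\cC\colon \cC/E^{\str{A}}_1\to\{0,1\}^n$. Then expand $\str{A}$ by setting $U_i(a)$ true iff the $i$-th bit of $f_{\cC}(\absclass{\str{A}}{1}{a})$ is $1$, where $\cC$ is the $E_2$-class of $a$. Axiom~\eqref{naive-reduction-axiom1} holds by nesting. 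For axiom~\eqref{naive-reduction-axiom2}, take $E_2$-related $a,b$. They agree on all $U_i$ iff their $E_1$-classes receive the same code under the same injection, which happens iff they are $E_1$-related. The expansion therefore models $\phi'$ when $E_1$ is regarded as a free symbol. Here we need that the choice of injections is possible, which uses the axiom of choice only across classes, and that $U_i$ is fresh.

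\emph{Item 3.} The added axioms have length $\cO(n)$, and $n=\lceil\log_2(3\cdot 2^{|\AAA|})\rceil = \cO(|\AAA|)$. Moreover, $|\AAA|\le 2^{\cO(|\sigma|)}$, since a $1$-type is determined by truth values of atoms over a single variable; equality-free and constant-free, there are at most $|\sigma|$ such atoms. So $|\phi'|$ is at most exponential in $|\phi|$, and the axioms are guarded by $E_1$ and $E_2$ with two variables, so $\phi'$ stays in $\GF^3$.

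The main obstacle is really Lemma~\ref{l:GF3-fin-property} itself, which is already established; here the only delicate point is item~1's verification that the $U$-encoding reproduces $E_1$ exactly rather than merely a coarsening. This works because codes are injective within each $E_2$-class, and the biconditional is only demanded under the guard $x_1E_2x_2$.
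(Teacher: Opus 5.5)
Your proposal is correct and follows essentially the same route as the paper. The reduct direction comes from the two added axioms forcing $E_1$ to be ``$E_2$-related and agreeing on all $U_i$''. The converse decorates a model from Lemma~\ref{l:GF3-fin-property} with $U$-codes that are injective within each $E_2$-class. The length bound follows from $n=\cO(|\AAA|)$ and $|\AAA|\le 2^{|\sigma|}$; your only addition is an explicit check that $E_1$ is a nested equivalence in the reduct, which the paper calls immediate.
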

\begin{proof}
It is immediate that the $\sigma$-reduct of any model for $\phi'$ is also a model for $\phi$,
and hence satisfiability of $\phi'$ implies that of $\phi$.

The converse direction is more delicate:
let $n$ and $N$ be as above.
Let $\str{A}\models\phi$ be a model satisfying Lemma~\ref{l:GF3-fin-property}.
Since there are $2^n \ge N$ possible $1$-types over $\{ U_i \}^{n}_{i=1}$,
we have enough combinations to decorate the elements of $\str{A}$ with the predicates $U_i$
so that any $E^{\str{A}}_2$-related elements get the same combination precisely when they are also related by $E^{\str{A}}_1$.

Finally, since $\sigma$ is constant-free, the number of $1$-types is singly exponential in $|\phi|$; more precisely: $|\AAA| = 2^{|\sigma|} \le 2^{|\phi|}$.
We thus have $|\phi'| = |\phi| + \cO(n) = |\phi| + \cO(|\AAA|) = 2^{\cO(|\phi|)}$.
\end{proof}

Having Claim~\ref{claim:GF3-reduction-correct} established, we adjust the numeration of the distinguished symbols:
that is, we rename $E_1$ to a fresh symbol and shift $E_{k+1} \mapsto E_k$ for all $k \ge 1$.
In this way, we obtain a proper sentence of $\GF^3[\KEQ{(K{-}1)}]$.
By recursively applying the above reduction $K-1$ times, 
we reach the final sentence in $\GF^3[\KEQ{1}]$ with the required properties.

\subsection{Finite Model Property}\label{sec:GF3-fmp}

We establish the finite model property via a reduction to an auxiliary fragment denoted \GFU{}.
This fragment is just \GF{} extended with a single distinguished binary symbol~$U$.
Models of \GFU{}-sentences are constrained to interpret $U$ as the universal relation. 
(In expressive power, \GFU{} coincides with the Triguarded Fragment, see~\cite{RS18} for further details.)

To obtain the reduction to $\GFU^3$,
we first embed $\GF^3[\KEQ{K}]$ into $\GF^3[\KEQ{(K{+}1)}]$:
we relativise quantifiers with the coarsest distinguished symbol $E_{K+1}$ (this is the same technique as employed in Proposition~\ref{thm:GFequality}).
The symbol $E_{K+1}$ can now act as the universal relation,
and Lemma~\ref{l:GF3-reduction} produces a sentence which can be viewed as a $\GFU^3$-sentence.
The finite model property for the constant-free, equality-free $\GF^3[\EQ]$ follows then from the same property for the equality-free fragment of $\GFU^3$, which is established in~\cite{KR21}.

The described reduction yields decidability of satisfiability, but without tight complexity bounds for fragments $\GF^3[\KEQ{K}]$ parametrised by $K \ge 1$.
Indeed, Lemma~\ref{l:GF3-reduction} produces $\GFU^3$-sentences of $K$-exponential length, and the complexity of satisfiability for $\GFU^3$ is \NExpTime{}-complete~\cite{RS18}.
In consequence,
under standard complexity-theoretic assumptions,
we obtain only a non-optimal $(K{+}1)$-\NExpTime{} bound.
The optimal $(K{+}1)$-\ExpTime{} bound is established in the following subsection.

\subsection{Decision Procedure}\label{sec:GF3-decision-procedure}

\begin{proposition}\label{prop:GF3-decision-procedure}
  The satisfiability problem for the constant-free, equality-free $\GF^3[\KEQ{K}]$ is decidable in $(K{+}1)$-\ExpTime{}.
\end{proposition}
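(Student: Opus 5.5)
The plan is to combine the reduction of Lemma~\ref{l:GF3-reduction} with a type-elimination procedure for the base fragment $\GF^3[\KEQ{1}]$. The procedure must run in time polynomial in the number of candidate objects it manipulates, rather than in time exponential in the length of the input sentence. Applying Lemma~\ref{l:GF3-reduction} naively and then invoking a $2$-\ExpTime{} procedure on a $(K{-}1)$-exponentially long sentence would give the right bound only if that procedure were \emph{singly} exponential in the number of $1$-types. So I will measure complexity in terms of $|\AAA'|$, the number of $1$-types of the reduced sentence $\phi'$, and of the binary part of the signature, which stays polynomial.

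\textbf{Step 1: bookkeeping of the reduction.}
First I trace the reduction. Each application of Lemma~\ref{l:GF3-reduction} adds $n=\lceil\log_2(3\cdot 2^{|\AAA|})\rceil = \cO(|\AAA|)$ fresh unary predicates and adds no new binary or ternary symbols, apart from reinterpreting $E_1$ as free. Hence the number of $1$-types grows from $|\AAA|$ to $|\AAA|\cdot 2^{\cO(|\AAA|)}$.

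Starting from $|\AAA|\le 2^{|\phi|}$, after $K{-}1$ rounds the final sentence $\phi'\in\GF^3[\KEQ{1}]$ has $|\AAA'|$ that is $K$-exponential in $|\phi|$. It is computable in $(K{-}1)$-exponential time, and its non-unary part is still of size polynomial in $|\phi|$.

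I also make sure the normal form of Assumption~\ref{assumption:GF3-normal-form} is preserved. The added axioms~\eqref{naive-reduction-axiom1}--\eqref{naive-reduction-axiom2} are purely universal with guard $E_1$ or $E_2$, so no renormalisation blow-up occurs.

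\textbf{Step 2: a decision procedure for $\GF^3[\KEQ{1}]$.}
For the base fragment I run a type-elimination algorithm whose candidate objects are \emph{class-sorts} $\Gamma\subseteq\AAA'$, that is, descriptions of which $1$-types occur in an $E_1$-class. There are $2^{|\AAA'|}$ of them, which is $(K{+}1)$-exponential.

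The algorithm maintains a set $\mathcal{S}$ of surviving pairs $(\alpha,\Gamma)$ with $\alpha\in\Gamma$. A pair is removed if some guarded pair containing an element of type $\alpha$ in a $\Gamma$-class cannot be completed. Completability is checked with $2$-types and $3$-types consistent with $\Psi$, and with the third element placed either in the same $E_1$-class, so that its type lies in $\Gamma$, or in a different class whose sort is still in $\mathcal{S}$.

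The soundness direction is the key part. It mirrors the proof of Lemma~\ref{l:GF3-fin-property}: from a surviving set, build a model by the same fair chain construction $\str{B}_1\subseteq\str{B}_2\subseteq\cdots$. Infinitely many copies of each $1$-type per class, as in Lemma~\ref{l:propermodels}, let fresh witnesses always be added without conflicting with previously fixed facts. As in Claim~\ref{constructioninvariantclaim}, only guarded types are ever copied, so $\Psi$ remains true. Completeness is immediate: the pairs realised in a model never get eliminated.

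Each elimination round examines the pairs $(\alpha,\Gamma)$ together with a bounded number of companion types. So one round costs time polynomial in $2^{|\AAA'|}$ and in the number of $3$-types, which is $2^{\mathrm{poly}(|\phi|)}\cdot|\AAA'|^3$. There are at most $|\AAA'|\cdot 2^{|\AAA'|}$ rounds, giving $(K{+}1)$-\ExpTime{} overall.

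\textbf{Main obstacle.}
The delicate step is designing the elimination conditions for the witness search. A guarded pair $(b_1,b_2)$ may lie in one $E_1$-class or straddle two classes, and the witness $b_3$ may sit in either class or in a third one. So the local certificate must record the class-sorts of up to three classes at once, and I must check that this never forces enumeration of more than polynomially many tuples of sorts in $2^{|\AAA'|}$.

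I would first try certificates indexed by the pairs $(\Gamma_1,\Gamma_2)$ of sorts for the classes of $b_1$ and $b_2$, together with the $E_1$-relationship between them, exactly in the style of the similarity relation $\sim$. The $3$-element witness condition then quantifies existentially over a third sort $\Gamma_3$ from the surviving set. This keeps each round polynomial in $2^{|\AAA'|}$, and therefore within $(K{+}1)$-exponential time.
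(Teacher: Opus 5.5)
\textbf{Gap in Step~2.} Your Step~1 bookkeeping is correct but not needed. The bound $2^{2^{\mathrm{poly}(\tet{K-1}{p(|\phi|)})}}$ is already $(K{+}1)$-exponential, so composing Lemma~\ref{l:GF3-reduction} with any $2$-\ExpTime{} procedure for $\GF^3[\KEQ{1}]$ suffices, and this is exactly what the paper does.

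The genuine gap is Step~2. It carries the whole proof, and you leave it essentially undesigned. Worse, the granularity you propose cannot work. Surviving pairs $(\alpha,\Gamma)$ do not record which guarded $2$-types are admissible. Yet whether a guarded pair can be completed depends on its full $2$-type and on the labels of both ends. Moreover, a witness $3$-type $\tau$ creates new guarded pairs whose $2$-types $\type{\tau}{x_1,x_3}$ and $\type{\tau}{x_2,x_3}$ must in turn be completable.

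\begin{itemize}
\item If you delete $(\alpha,\Gamma)$ as soon as some guarded $2$-type over $\alpha$ cannot be completed, the procedure is incomplete. The conjunct $\forall x_1,x_2\,(R(x_1,x_2)\rightarrow\exists y\,\bot)$ makes every $R$-guarded $2$-type uncompletable and thus kills every $\alpha$, although that sentence is satisfiable with $R$ empty.
\item If you only check that the witness's $1$-type and sort survive, the procedure is unsound. The chain construction may then create a guarded pair whose $2$-type has no witness.
\end{itemize}

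The objects to eliminate must be labelled $2$- and $3$-types, tied together by closure under reducts. This is the paper's set $\Omega$ with $P_{1,\Omega}$, and with the witness conditions $P_{3,\Omega}$, $P_{4,\Omega}$ stated for labelled $1$- and $2$-types.

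\textbf{Missing counterpart of $P_{5,\Omega}$.} When the fresh witness $b_3$ is placed in the $E_1$-class of $b_1$, transitivity makes $b_3$ related by $E_1$, hence guarded, to every other member of that class. These $2$-types are not part of the witness $3$-type. Still, they must satisfy $\Psi$, and since existential conjuncts may be guarded by $E_1$, they must themselves be completable.

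In the proof of Lemma~\ref{l:GF3-fin-property}, such $2$-types were copied from the given model $\fA$, and Lemma~\ref{l:propermodels} is a property of that model. In the soundness direction there is no $\fA$. So the surviving set itself must supply, for any two $1$-types in a label, an $E_1$-connecting labelled $2$-type carrying that same label. Knowing only that the new element's $1$-type lies in $\Gamma$ does not give this.

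Once you eliminate over labelled types subject to $P_{1,\Omega}$--$P_{5,\Omega}$, your complexity count goes through unchanged. There are at most $N_3\cdot(2^{|\AAA'|})^3$ labelled types, where $N_3$ is the number of $3$-types, so recording the sorts of three classes at once is no obstacle.
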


By Lemma~\ref{l:GF3-reduction},
to show the proposition, it suffices to design a $2$-\ExpTime{} procedure for $\GF^3[\KEQ{1}]$.
The idea employed here is inspired by Kiero\'nski and Malinowski's approach for $\GFU$~\cite{KM20},
as well as Pratt-Hartmanns's procedure for deciding the pure \GF{}~\cite{PH23}.

The procedure is based on a criterion that
expresses satisfiability of a sentence in terms of the existence of a certain set of \emph{labelled types}---let us first introduce this notion.

\subparagraph*{Labelled types.}
Let $\sigma$ be a constant-free signature, and let $\AAA$ denote the set of all $1$-types over $\sigma$.
Suppose that $\tau$ is a $k$-type over $\sigma$ for some $k \in \N$, and that $\ell \colon \operatorname{dom}(\tau) \to 2^{\AAA}$ assigns to each variable in $\operatorname{dom}(\tau) \eqdef \{x_1,\dots,x_k\}$, the \emph{domain} of $\tau$,
a set of $1$-types.
A pair $(\tau,\ell)$ is then called a \emph{labelled $k$-type} (over $\sigma$) if the following holds:
\begin{itemize}
  \item $\tau$ interprets $E_1$ as an equivalence relation. 
  \item For every $x_i \in \operatorname{dom}(\tau)$, we have $\type{\tau}{x_i} \in \ell(x_i)$.
  \item For every $x_i,x_j \in \operatorname{dom}(\tau)$, if $\tau \models x_i E_1 x_j$, then $\ell(x_i) = \ell(x_j)$.
\end{itemize}

For brevity, we sometimes write just $\tau$ for $(\tau,\ell)$ and refer to the associated labelling by~$\ell_\tau$.
A \emph{labelled type} is simply a labelled $k$-type for some $k$.

The intended role of the labelling $\ell_\tau$ is to capture information about $1$-types realised in the $E_1$-classes of elements of $\tau$: whenever $a_1,\dots,a_k$ are distinct elements of a structure $\str{A}$,
then the tuple $\langle a_1,\dots,a_k \rangle$ \emph{realises} the labelled $k$-type $(\tau,\ell_{\tau})$, where
\begin{itemize}
  \item $\tau = \type{\str{A}}{a_1,\dots,a_k}$;
  \item for every $i \in [k]$, $\ell_{\tau}(x_i) = \{ \type{\str{A}}{b} \mid b \in \absclass{\str{A}}{1}{a_i} \}$.
\end{itemize}

To illustrate, consider a structure $\str{A}$ over a domain $A = \{ 1,2,3,4 \}$.
Suppose that $E^{\str{A}}_1$ has three equivalence classes: $\{ 1,3 \}$, $\{ 2 \}$, and $\{ 4 \}$.
Then the labelled $2$-type of the pair $(1,2)$ is given by $(\type{\str{A}}{1,2}, \ell)$, where $\ell$ maps $x_1 \mapsto \{ \type{\str{A}}{1},\,\type{\str{A}}{3} \}$ and $x_2 \mapsto \{ \type{\str{A}}{2} \}$. 

Let $k,m \in \N$ be such that $1 \le k \le m$.
Suppose that $\tau$ is a labelled $k$-type and $\tau'$ is a labelled $m$-type.
We say that $\tau$ is a \emph{reduct} of $\tau'$ if
there exists an injective mapping $\rho \colon \operatorname{dom}(\tau) \hookrightarrow \operatorname{dom}(\tau')$
such that the following holds:
\begin{itemize}
  \item $\tau = \type{\tau'}{\rho(x_1),\dots,\rho(x_k)}$.
  \item $\ell_{\tau}(x_i) = \ell_{\tau'}(\rho(x_i))$ for every $x_i \in \operatorname{dom}(\tau)$.
\end{itemize}
Further, we say that $\tau'$ \emph{canonically extends} $\tau$ if the mapping $\rho$ is an inclusion, that is,
\begin{itemize}
  \item $\rho(x_i) = x_i$ for every $x_i \in \operatorname{dom}(\tau)$.
\end{itemize}

\subparagraph*{Satisfiability criterion.}
With necessary tools in place, we formulate the satisfiability criterion.

\begin{lemma}\label{l:GF3-criterion}
  Let $\phi$ be a constant-free, equality-free sentence of $\GF^3[\KEQ{1}]$,
  and let $\sigma$ be the signature of $\phi$.
  Assume that $\phi$ is in normal form as in Assumption~\ref{assumption:GF3-normal-form}, that is,
  \begin{flalign*}
    \phi
    \quad=\quad
    \Psi
    \quad\wedge\quad
    \bigwedge\nolimits_{t \in \cI}\forall x_1,x_2 \, \big(G_t(x_1,x_2) \rightarrow \exists y\, \psi_t(x_1,x_2,y)\big).
  \end{flalign*}

  Then $\phi$ is satisfiable if and only if there exists a non-empty set $\Omega$ of labelled types over $\sigma$ such that
\(
  P_{1,\Omega}(\tau), \dots, P_{5,\Omega}(\tau)
\)
hold for every $\tau \in \Omega$, where
\[
  P_{1,\Omega}(\tau)
  \iff
  \text{every reduct of $\tau$ belongs to $\Omega$.}
\]
\[
  P_{2,\Omega}(\tau)
  \iff
  \tau \models \Psi.
\]
\[
\begin{aligned}
  P_{3,\Omega}(\tau)
  \iff{}&
  \text{if $\tau$ is a labelled $1$-type and }
  \tau \models G_t(x_1,x_1)
  \text{ for some } t \in \cI,
  \\
  &\text{then there exists a labelled $2$-type }
  \tau' \in \Omega
  \text{ canonically extending $\tau$}
  \\
  &\text{such that }
  \tau' \models \psi_t(x_1,x_1,x_2).
\end{aligned}
\]
\[
\begin{aligned}
  P_{4,\Omega}(\tau)
  \iff{}&
  \text{if $\tau$ is a labelled $2$-type and }
  \tau \models G_t(x_1,x_2)
  \text{ for some } t \in \cI,
  \\
  &\text{then there exists a labelled $3$-type }
  \tau' \in \Omega
  \text{ canonically extending $\tau$}
  \\
  &\text{such that }
  \tau' \models \psi_t(x_1,x_2,x_3).
\end{aligned}
\]
\[
\begin{aligned}
  P_{5,\Omega}(\tau)
  \iff{}&
  \text{if $\tau$ is a labelled $1$-type, then for all }
  \alpha_1,\alpha_2 \in \ell_{\tau}(x_1),
  \\
  &\text{there exists a labelled $2$-type }
  \beta \in \Omega
  \text{ such that}
  \\
  &\type{\beta}{x_1} = \alpha_1,
  \qquad
  \type{\beta}{x_2} = \alpha_2,
  \\
  &\beta \models x_1 E_1 x_2,
  \qquad
  \ell_{\beta}(x_1)=\ell_{\beta}(x_2)=\ell_{\tau}(x_1).
\end{aligned}
\]
\end{lemma}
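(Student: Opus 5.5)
The plan is to prove the two directions separately: soundness (a model yields $\Omega$) by collecting realised labelled types, and completeness ($\Omega$ yields a model) by a tree-like unravelling in which $E_1$-classes are materialised as separate blocks.

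\emph{Soundness.} Suppose $\fA \models \phi$. By Lemma~\ref{l:propermodels} we may assume item~\ref{l:propermodelsitem1}, so every guarded pair has a witness distinct from both components. Define $\Omega$ as the set of labelled $k$-types, $k \le 3$, realised by tuples of distinct elements of $\fA$.
\begin{itemize}
  \item $P_{1,\Omega}$ holds because reducts of realised labelled types are realised by subtuples.
  \item $P_{2,\Omega}$ holds because $\fA \models \Psi$ and $\Psi$ is universal and quantifier-free under its guards.
  \item $P_{3,\Omega}$ and $P_{4,\Omega}$ follow from the existence of distinct witnesses.
  \item For $P_{5,\Omega}$, given $\alpha_1,\alpha_2 \in \ell_\tau(x_1)$, pick realisers in the $E_1$-class of the element realising $\tau$. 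If $\alpha_1=\alpha_2$ is realised only once, the two realisers coincide. To avoid this, first apply the $\N$-copying of Lemma~\ref{l:propermodels}, so that every type in a class is realised infinitely often and two distinct realisers can always be chosen.
\end{itemize}

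\emph{Completeness.} Given $\Omega$, I would build a model as a chase-style union of finite structures, in the spirit of the construction in Section~\ref{sec:GF3-fin-property}. Each element $b$ carries a labelled $1$-type from $\Omega$, and the structure is arranged so that its $E_1$-class realises exactly the 1-types in $\ell(b)$. Every guarded tuple is assigned a labelled type in $\Omega$ matching its actual atomic type and labels. The steps are as follows.
\begin{enumerate}
  \item Start from a labelled $1$-type in $\Omega$, realised by a single element.
  \item For every element $b$ with label $\Gamma=\ell(b)$ and every $\alpha\in\Gamma$, use $P_{5,\Omega}$ with $\alpha_1=\type{}{b}$ and $\alpha_2=\alpha$ to add a fresh $E_1$-neighbour of type $\alpha$. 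Its $2$-type with $b$ is $\beta$, and its label is again $\Gamma$.
  \item For every guarded pair or singleton lacking a witness, use $P_{3,\Omega}$ or $P_{4,\Omega}$ to add a fresh element carrying the extending labelled type.
  \item Close $E_1$ transitively: any two elements of the same intended $E_1$-class that are not yet connected receive $E_1$ together with an otherwise \emph{unguarded} $2$-type, all other atoms false. This is consistent because labels agree within classes, as the third condition in the definition of labelled type requires.
  \item Set every remaining unspecified atom to false.
  \item Take the limit under a fair scheduling.
\end{enumerate}

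Consistency of $\Psi$ then reduces, as in the earlier proof, to showing that every guarded type appearing in the structure is a member of $\Omega$ (via $P_{1,\Omega}$), which satisfies $\Psi$ by $P_{2,\Omega}$.

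\emph{Main obstacle.} The delicate step is the transitive closure of $E_1$, because $E_1$ is itself a guard. A pair $(c,d)$ made $E_1$-related only through closure is guarded by $E_1$, so its $2$-type must belong to $\Omega$. Likewise, any $3$-tuple whose three pairs are all $E_1$-related is guarded if $\Psi$ uses a ternary guard; with a binary guard only, the $3$-type need not be in $\Omega$.
\begin{itemize}
  \item For the closure pairs, I would choose the $2$-type via $P_{5,\Omega}$ applied to $\type{}{c}$ and $\type{}{d}$, which both lie in the common label. This yields an in-$\Omega$ $2$-type $\beta$ with $E_1$ and the correct labels, and $P_{4,\Omega}$ then supplies witnesses for such new pairs.
  \item I expect ternary guards $F_t$ never to be made true spontaneously: ternary atoms are only copied from $\Omega$-types, so a triple is guarded only if it was created together.
  \item The residual issue is that a pair $(c,d)$ may lie inside a witnessing $3$-type previously copied from $\Omega$ while also being an $E_1$ pair assigned separately. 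I would enforce that each pair receives its $2$-type exactly once, namely when the later of its two elements is created. New elements are always fresh, so triples added in step~3 fix all pairs among them at creation time. A newly added element also obtains, through step~4, $2$-types from $P_{5,\Omega}$ with older members of its class. These are fixed once and are compatible, because only $2$-types and not $3$-types over those pairs are constrained.
\end{itemize}
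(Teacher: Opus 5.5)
Your proposal is correct and follows essentially the paper's route: the realised labelled types of a model as in Lemma~\ref{l:propermodels} give one direction, and for the other you build a fair chase of finite structures in which each fresh element receives a $P_{3,\Omega}$/$P_{4,\Omega}$ witnessing type, its $2$-types to older members of its $E_1$-class are fixed once via $P_{5,\Omega}$ at creation time, and everything else is false. Two minor points. Step~4 as first written is wrong: a pair related by $E_1$ is guarded, so its $2$-type must come from $\Omega$, which is exactly your later $P_{5,\Omega}$ fix. Step~2 is unnecessary, since the paper only maintains that the types realised in a class are contained in its label.
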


Intuitively, the conditions of Lemma~\ref{l:GF3-criterion} play the following roles.
The predicate $P_{1,\Omega}(\tau)$ expresses closure under reducts, a property satisfied by the set of types realised in any structure.
The predicate $P_{2,\Omega}(\tau)$ ensures satisfaction of the universal conjunct~$\Psi$ in any structure constructed from types in~$\Omega$.
The predicates $P_{3,\Omega}(\tau)$ and $P_{4,\Omega}(\tau)$ guarantee the existence of suitable witnessing extensions required by the existential conjuncts.
The predicate $P_{5,\Omega}(\tau)$ ensures that $E_1$-classes can be completed by providing the necessary labelled $2$-types.
Together, these conditions characterise the existence of a model of $\phi$.

\subparagraph*{Completeness.}
Let $\str{A} \models \phi$ be a model as in Lemma~\ref{l:propermodels}, and construct the set
\[ \Omega \;\eqdef\; \bigcup_{1 \le k \le 3}\big\{\, \text{$(\tau,\ell_{\tau})$ is a labelled $k$-type realised in $\str{A}$} \,\big\}. \]
One can verify that the conditions of Lemma~\ref{l:GF3-criterion} are satisfied for the constructed $\Omega$.

\subparagraph*{Soundness.}
Write $\AAA$ for the set of all $1$-types over the signature $\sigma$.

Suppose that $\Omega$ is a set of labelled types over $\sigma$ as in Lemma~\ref{l:GF3-criterion}.
We prove that $\phi$ is satisfiable by constructing a chain of finite structures
\[ \str{B}_1 \subseteq \str{B}_2 \subseteq \str{B}_3 \subseteq \dots, \quad\quad \text{where } \str{B}_{n+1} \,\restr\, B_n = \str{B}_n \text{ for all } n \ge 1. \]
The natural limit structure $\str{B}_{\omega} = \bigcup_{n \ge 1}{\str{B}_n}$ will be a model of $\phi$.

To guide the construction,
we define in parallel a sequence of functions $(\ell_n)_{n \in \N}$,
where each $\ell_n \colon B_n \rightarrow 2^{\AAA}$ assigns to every element of the domain of $\str{B}_n$ a set of $1$-types.
The label $\ell_n(a)$ serves as a \emph{declaration} of which $1$-types are admitted in the $E_1$-class of $a$.
The following conditions shall hold during the construction:
\begin{itemize}
  \item \( \{ \type{\str{B}_n}{b} \mid b \in \absclass{\str{B}_n}{1}{a} \} \subseteq \ell_n(a) \) for all $a \in B_n$.
  \item $\ell_n(a) = \ell_n(b)$ for all $(a,b) \in E^{\str{B}_n}_1$.
\end{itemize}
It is not required that all $1$-types from $\ell_n(a)$ are actually realised in the $E_1$-class of $a$.

We require each function $\ell_n$ to satisfy the following invariant:
\begin{enumerate}
  \item[($\ddagger$)]
  Let $(b_1,b_2) \in B_n^2$ be such that $\str{B}_n \models G_t(b_1,b_2)$ for some $t \in \cI$.
  \begin{itemize}
    \item If $b_1 = b_2$, then the labelled $1$-type $(\alpha,\ell_{\alpha})$ belongs to $\Omega$, where
    \[
      \alpha \eqdef \type{\str{B}_n}{b_1},
      \qquad
      \ell_{\alpha}(x_1) \eqdef \ell_n(b_1).
    \]
    \item If $b_1 \neq b_2$, then the labelled $2$-type $(\beta,\ell_{\beta})$ belongs to $\Omega$, where
    \[
      \beta \eqdef \type{\str{B}_n}{b_1,b_2},
      \qquad
      \ell_{\beta}(x_1) \eqdef \ell_n(b_1),
      \qquad
      \ell_{\beta}(x_2) \eqdef \ell_n(b_2).
    \]
  \end{itemize}
\end{enumerate}

\smallskip\noindent\emph{Base case.}
Since $\Omega$ is non-empty and $P_{1,\Omega}$ holds,
there exists a labelled $1$-type $(\alpha,\ell_{\alpha}) \in \Omega$.
Define $\str{B}_1$ to be the singleton structure with an element $b$ realising $\alpha$,
and set $\ell_1(b) \eqdef \ell_{\alpha}(x_1)$.

\smallskip\noindent\emph{Inductive step.}
Let $n \ge 1$.
Assume that $\str{B}_n$ and $\ell_n$ have already been defined.
Choose in a \emph{fair way} a pair $(b_1,b_2) \in B^2_n$
such that $\str{B}_n \models G_t(b_1,b_2)$, but $\str{B}_n \models \neg \exists y\; \psi_t(b_1,b_2,y)$ for some $t \in \cI$.
We consider only the case $b_1 \neq b_2$, as the case $b_1 = b_2$ is analogous.

Let $(\beta,\ell_{\beta})$ be the labelled $2$-type defined by
\[
  \beta \eqdef \type{\str{B}_n}{b_1,b_2},
  \qquad
  \ell_\beta(x_1) \eqdef \ell_n(b_1),
  \qquad
  \ell_\beta(x_2) \eqdef \ell_n(b_2).
\]
By the inductive invariant $(\ddagger)$, $(\beta,\ell_{\beta}) \in \Omega$.
Hence, by $P_{4,\Omega}$, there exists a labelled $3$-type $(\tau,\ell_{\tau}) \in \Omega$ that canonically extends $(\beta,\ell_{\beta})$ and satisfies $\tau \models \psi_t(x_1,x_2,x_3)$.

We now construct the structure $\str{B}_{n+1}$ by extending $\str{B}_n$ with a fresh element $b_3$.
\begin{itemize}
  \item Initialise \( \str{B}_{n+1} \eqdef \str{B}_n.\)
  \item The witnessing $3$-type: assign
  \( \type{\str{B}_{n+1}}{b_1,b_2,b_3} \eqdef \tau.\)
  (Since $\tau$ canonically extends $\beta$, this step does not modify the $2$-type $\type{\str{B}_n}{b_1,b_2}$.)
  \item Closure under~$E_1$:
  If either $(x_1,x_3)$ or $(x_2,x_3)$ belongs to $E_1^\tau$, then the corresponding transitive connections involving $b_3$ must be added.
  Assume w.l.o.g. that $(x_1,x_3) \in E_1^\tau$.

  For every
  \(b \in \absclass{\str{B}_n}{1}{b_1} \setminus \{b_1,b_2\},\)
  $P_{5,\Omega}$ yields a labelled $2$-type $(\gamma,\ell_\gamma) \in \Omega$ such that
  \[
  \begin{aligned}
    & \type{\gamma}{x_1} = \type{\str{B}_{n+1}}{b}, \qquad \type{\gamma}{x_2} = \type{\str{B}_{n+1}}{b_3}, \\
    & \gamma \models x_1 E_1 x_2, \quad\quad\qquad\;\;\;\; \ell_\gamma(x_1)=\ell_\gamma(x_2)=\ell_\tau(x_3).
  \end{aligned}
  \]
  We then set
  \(\type{\str{B}_{n+1}}{b,b_3} \eqdef \gamma.\)
\end{itemize}

We now have $\str{B}_{n+1} \models \psi_t(b_1,b_2,b_3)$.
To complete the inductive step, we declare
\[ \ell_{n+1} \;\eqdef\; \ell_n \cup \{ b_3 \mapsto \ell_{\tau}(x_3) \}.\]

\subparagraph*{Decision procedure.}
To conclude the proof, it remains to show that the existence of a set $\Omega$ satisfying the conditions of Lemma~\ref{l:GF3-criterion} can be decided in $2$-\ExpTime{}.
We call such a set a \emph{satisfiability witness}.
The decision procedure is given in Algorithm~\ref{algos}.

\begin{algorithm}[t]
  \caption{Decide satisfiability for constant-free, equality-free $\GF^3[\KEQ{1}]$}\label{algos}
  \KwIn{A normal-form sentence $\phi$ over a signature $\sigma$.}
  \KwOut{\textsc{Sat} if $\phi$ is satisfiable, and \textsc{UnSat} otherwise.}
  $\Omega \gets \{$all labelled $k$-types over $\sigma$ for $k \in \{1,2,3\}\}$\;
  \While{\rm there exists $\tau \in \Omega$ violating one of~$P_{1,\Omega}(\tau),\dots,P_{5,\Omega}(\tau)$}{
    choose such a $\tau$\;
    $\Omega \gets \Omega \setminus \{\tau\}$\;
  }
  \leIf{\rm $\Omega \neq \emptyset$}{\Return \textsc{Sat}}{\Return \textsc{UnSat}}
\end{algorithm}

The algorithm is clearly sound: it always terminates, and upon termination returns \textsc{Sat} if and only if the resulting set $\Omega$ is a satisfiability witness.

Completeness follows from the following observation.
Suppose that a satisfiability witness $\Omega^*$ exists.
Then no element of $\Omega^*$ is ever removed during the execution of the algorithm.
Consequently, upon termination, the algorithm produces a set $\Omega \supseteq \Omega^*$,
which is itself a satisfiability witness.
In particular, $\Omega$ is the unique maximal satisfiability witness.

To analyse the running time, observe that the procedure runs in polynomial time in the number of labelled types over at most three variables.
Each labelled type is a pair $(\tau,\ell_\tau)$, where $\tau$ is a $k$-type with $k \leq 3$ and
\(\ell_\tau \colon \operatorname{dom}(\tau) \to 2^{\AAA}.\)
The number of $k$-types with $k \leq 3$ is a priori doubly exponential in~$|\phi|$.
The number of possible labellings is bounded by
\((2^{|\AAA|})^3,\) which is likewise doubly exponential in~$|\phi|$;
indeed, since $\phi$ is constant-free, the number of $1$-types (= the size of $\AAA$) is only singly exponential in~$|\phi|$.
Hence, the total number of labelled $k$-types with $k \le 3$ is doubly exponential in~$|\phi|$, and the procedure runs in deterministic doubly-exponential time.
This concludes the proof of Proposition~\ref{prop:GF3-decision-procedure}.

\section{Related Work}\label{sec:related}

\subparagraph*{Two-Variable Fragment.}
With one equivalence relation, \FOt{} has the finite model property and \NExpTime-complete satisfiability~\cite{KO12}.
With two equivalence relations (without nesting condition), the finite model property is lost; both satisfiability and finite satisfiability are \TwoNExpTime-complete~\cite{KMPHT14}.
With three equivalence relations (again without nesting conditions), \FOt{} is undecidable~\cite{KO12}.

The work of~\cite{BB07} extends the results on \FOt{} with equivalences to the setting of words with \emph{nested data}, i.e., structures equipped with a linear order, nested equivalence relations, and additional free unary predicates.
It is shown there that satisfiability remains decidable when the linear order is accessible only through the successor relation, but becomes undecidable in the presence of both the order and the successor relation.

A limitation of~\cite{BB07} is the absence of free binary predicates.
This was recently addressed in~\cite{FKM25}, where it was shown that $\FOt{}[\EQ]$ enjoys the finite model property and has an \NExpTime-complete satisfiability problem.
Several further variants are investigated there:
\begin{itemize}
    \item nested equivalences combined with a linear order (without successor);
    \item nested total preorders (i.e., equivalence classes are linearly ordered);
    \item nested total preorders equipped with the induced successor relation.
\end{itemize}
These extensions retain elementary complexity, namely \NExpTime-completeness in the first two cases and \ExpSpace-completeness in the third.
Finally,~\cite{FKM25} establishes that \FOt{} with two independent families of nested equivalence relations is undecidable.

\subparagraph*{Logics for trees.}
An alternative perspective on nested equivalence relations is to interpret them as unranked trees of bounded depth.
When only $K$ distinguished predicates $E_1, \dots, E_K$ are considered, the corresponding structures can be viewed as describing the leaves of unranked trees of height $K{+}1$.
Domain elements correspond to leaves at level~$0$; the equivalence classes of $E_1$ form nodes at level~$1$; those of $E_2$ form nodes at level~$2$; and so on, up to level~$K$, with an implicit root located at level~$K{+}1$.
In this view, for every $1 \le k \le K$, the predicate $a E_k b$ holds precisely when $a$ and $b$ share a common ancestor at level~$k$.

This interpretation differs from standard logics over trees, where all tree nodes belong to the domain and the structure is navigated via predicates such as \emph{parent}, \emph{child}, or \emph{descendant}.
In this setting, \FOt{} was
investigated in numerous variants—over both ranked and unranked trees of unbounded depth, over data trees, and over trees with counting (see, e.g.,~\cite{BMS09,CW16b,BBC16}).

\subparagraph*{Equivalence guards.}
\GFt{} with two equivalence relations (without nesting condition) is $2$-\ExpTime-complete, but becomes undecidable with three equivalence relations~\cite{Kie05,KPT17}.
For an arbitrary number of equivalence relations, \GF{} remains decidable provided that equivalences are restricted to occur only in guards—so-called \emph{equivalence guards}~\cite{Kie05,KT18}.
However, in this setting the expressive power of the distinguished symbols is severely limited; in particular, one cannot express that equivalence relations are nested.

\subparagraph*{Other logics.}

Description logics with transitive roles ($\mathcal{S}$), role hierarchies ($\mathcal{H}$), and inverse roles ($\mathcal{I}$) are expressive enough to define nested equivalence relations.
Indeed, given transitive roles $T_1,T_2,\dots$, one can axiomatise them as nested equivalence relations using the inclusions
\[
  \top \sqsubseteq \exists T_k.\top,
  \qquad
  T_k \sqsubseteq T^{-1}_k,
  \qquad
  T_k \sqsubseteq T_{k+1}.
\]

Another formalism capable of defining families of nested equivalence relations is the Unary Negation Fragment extended with transitive roles, nominals, and role hierarchies.
The work of~\cite{DK19} showed that this logic is \TwoExpTime{}-complete.

\section{Concluding Remarks}\label{sec:remarks}

We conclude with brief observations on slight extensions of our results.

\subparagraph*{Cross products.}

A limitation of \GF{} is that it cannot express the following sentence:
\begin{equation*}
  \forall x,y\; \big(\big({\rm Elephant}(x) \wedge {\rm Mouse}(y)\big) \rightarrow {\rm biggerThan}(x,y)\big),
\end{equation*}
known in the context of description logics as a \emph{cross product}~\cite{RudolphKH08,BMP17,BORGIDA1996353}.
This sentence is, however, expressible in an extension of \GF{}, named the \emph{Triguarded Fragment} (\TGF{}), introduced in~\cite{RS18}.
\TGF{} extends the syntax of \GF{} (Definition~\ref{def:GFsyntax}) with the following rule:
\begin{itemize}
  \item If $\psi(x,y) \in \TGF$ has two free variables, then both $\forall x\,\psi(x,y)$ and $\exists x\,\psi(x,y)$ are in \TGF{}.
\end{itemize}

\TGF{} captures both \GF{} and \FOt{}, and in expressive power coincides with \GFU{} (defined in \autoref{sec:GF3-fmp}).
It turns out that our approach for $\GF[\EQ]$ extends to $\TGF[\EQ]$:
roughly speaking, the resulting bounds mirror those of Theorem~\ref{thm:GFmain},
but with deterministic complexity classes replaced by their nondeterministic counterparts (see \autoref{sec:TGF-complexity}).

\subparagraph*{Nominals.}

By Proposition~\ref{thm:GFequality},
for every $K \ge 1$, the satisfiability problem for \GF{}$[\KEQ{K}]$ with equality is undecidable.
Nevertheless, our proof techniques allow a restricted use of equality: the atoms $c = x$ and $c = c'$, where $x$ is a variable and $c$, $c'$ constants, pose no problem (see \autoref{sec:nominals}).
Even such a restricted use of equality has natural applications;
for example, in the standard translation of description logics with \emph{nominals} (singleton concepts).

\subparagraph*{Undecidable extensions.}

Our decidability results do not extend to the Loosely Guarded Fragment (\LGF{})~\cite{Gra99}.
In fact, an arbitrary \FO{} sentence
embeds into $\LGF[\KEQ{1}]$ by employing $\bigwedge_{x,x' \in \xs} E_1(x,x')$ as a loose guard for the quantifiers $\forall \xs'$ and $\exists \xs'$ with $\xs' \subseteq \xs$.

We cannot also support conjunctive queries: $\FO^2$ embeds into $\GF^2[\KEQ{1}]$; yet $\FO^2$ with (negations of) conjunctive queries is undecidable~\cite{Rosati07}.

\bibliography{master}

\appendix

\section{Missing Details from Section~\ref{sec:GF3-lower}: Succinct Formulas for 2-Numerals}\label{appendix:GF3-lower}

Fix $n \in \N$.
Recall that $2$-numerals are integers from the range $[0,2^{2^n}-1]$.
We show an encoding of $2$-numerals on the level of individual elements.
This encoding requires the presence of constants and an unbounded number of variables.
Importantly, the constructed formulas are of polynomial length in~$n$; more precisely of length $\cO(n^2)$.
The technique employed here is similar to one introduced in~\cite{RS18}.

Let $0$, $1$ be constant symbols which we identify with the usual numbers $0$ and $1$,
and let $BB$ be a predicate of arity $n+1$.
Given a structure $\str{A}$, we define an element $a \in A$ to \emph{represent} the $2$-numeral
\[ \num(a) \eqdef \sum_{k=0}^{2^n-1}{b_k \cdot 2^k}, \]
where $b_k = 1$ if
\( \str{A}\models BB(a,j_{n-1},\dots,j_0) \)
holds for the sequence $j_0,\dots,j_{n-1} \in \{ 0,1 \}$
encoding the binary representation of $k$, i.e., $k = \sum_{i=0}^{n-1}{j_i \cdot 2^i}$;
and otherwise $b_k = 0$.

Let $S$ and $Q$ be binary symbols and let $Z$ be unary.
We describe a theory such that its every model $\str{A}$ satisfies: if $(a,b) \in E^{\str{A}}_K$, then:
\begin{itemize}
    \item $\str{A} \models S(a,b)$ iff $\num(a) + 1 \equiv \num(b) \mod 2^{2^n}-1$.
    \item $\str{A} \models Z(a)$ iff $\num(a) = 0$.
    \item $\str{A} \models Q(a,b)$ iff $\num(a) = \num(b)$.
\end{itemize}

In what follows, we use variables $x$ and $y$,
as well as a ``generic'' tuple of variables denoted as $\bar{z} = z_1,\dots,z_k$;
the length $k \in \N$ is assumed to be such that the arities are respected.

We first axiomatise an auxiliary symbol $G$ of arity $n+1$:
$G(x,\bar{c})$ holds for every element $x$ and every sequence $\bar{c} \in \{ 0,1\}^n$.
We generate all possible sequences from the all-zero sequence:
\begin{flalign}
    &\forall x\, G(x,0^n) \; \wedge \;
    \bigwedge\nolimits_{i=1}^{n} \forall x,\bar{z}\, \big(G(x,\bar{z}) \rightarrow G(x,\bar{z}[z_i/1])\big)
\end{flalign}
Above, the notation $[z_i/1]$ means the substitution of the variable $z_i$ by the constant $1$.

Let $CB$ be a predicate of arity $n+1$. We axiomatise carry bits in an expected way:
\begin{flalign}
    &\forall x\, CB(x,0^n) \\
    &\bigwedge_{i=0}^{n-1} \, \forall x,\bar{z}\, \Big(
        G(x,\bar{z},0,1^i) \rightarrow
        \Big(CB(x,\bar{z},1,0^i) \leftrightarrow \big(CB(x,\bar{z},0,1^i) \wedge BB(x,\bar{z},0,1^i)\big)\Big)\Big)
\end{flalign}

We now axiomatise the successor predicate $S$.
Let $ChkS$ be a fresh predicate of arity $n+2$,
and let $\mu$ be a formula verifying that bit values agree with the carry:
\[ \mu(x,y,\bar{z}) \; \eqdef \; BB(y,\bar{z}) \leftrightarrow \big(BC(x,\bar{z}) \oplus BB(x,\bar{z})\big). \]

We express that if $x$ and $y$ are $E_K$-related and represent consecutive $2$-numerals,
then they are linked by the predicate $S$:
\begin{flalign}
  &\forall x,y\,\big(x E_K y \rightarrow ChkS(x,y,0^n)\big) \\
  &\bigwedge_{i=0}^{n-1}\; \forall x,y,\bar{z}\,\Big(\big(ChkS(x,y,\bar{z},0,1^i) \wedge \mu(x,y,\bar{z},0,1^i)\big) \rightarrow ChkS(x,y,\bar{z},1,0^i)\Big) \\
  &\forall x,y\,\big(\big(ChkS(x,y,1^n) \wedge \mu(x,y,1^n)\big)\rightarrow S(x,y)\big)
\end{flalign}

For converse, we verify that any $S$-successive elements represent consecutive $2$-numerals:
\begin{flalign}
  &\forall x,y\, \big(S(x,y) \rightarrow \big(\mu(x,y,1^n) \wedge ChkS(x,y,1^n)\big)\big) \\
  &\bigwedge_{i=0}^{n-1}\; \forall x,y,\bar{z}\,\Big(ChkS(x,y,\bar{z},1,0^i) \rightarrow \big(\mu(x,y,\bar{z},0,1^i) \wedge ChkS(x,y,\bar{z},0,1^i)\big)\Big)
\end{flalign}

The axiomatisation of successor is now finished; equality and zero tests are treated in a similar way.

\section{Missing Details from Section~\ref{sec:GF-upper}: Proofs for the General Case}\label{appendix:GF-upper}

In \autoref{sec:GF-upper}, for simplicity, we worked with the three-variable fragment without constants.
In this appendix, we remove this assumption and establish the results of \autoref{sec:GF-upper} in full generality.
As a consequence, we obtain Theorems~\ref{thm:GF-fmp} and~\ref{thm:GFmain}.

We assume familiarity with Section~\ref{sec:GF-upper}.
The overall ideas remain unchanged, and most definitions and claims generalise in a straightforward manner.
Accordingly, we omit detailed explanations and focus only on the necessary modifications.

\subsection{Preparations}

\subparagraph*{Normal form.}
A $\GF[\KEQ{K}]$-sentence $\phi$ is in \emph{normal form} if it has the shape:
\begin{flalign}\label{GF:normal-form-shape}
  \phi \quad=\quad \Psi \quad\wedge\quad \bigwedge\nolimits_{t \in \cI} \forall \xs\, \big(G_t(\xs) \rightarrow \exists y\, \psi_t(\xs,y)\big),
\end{flalign}
where $\Psi$ is a conjunction of purely universal prenex sentences, each $G_t(\xs)$ is a guard (possibly being a distinguished predicate), and each $\psi_t(\xs,y)$ is quantifier-free.
Here, $\xs$ is a tuple of variables, whereas $y$ is a single variable.
To simplify notation, all conjuncts have a common quantifier prefix ``$\forall \xs$''.
We do not insist that all variables of $\xs$ are actually utilised by $G_t(\xs)$.
Nevertheless, we require that $\FreeVars(\psi_t) \setminus \{ y \} \subseteq \FreeVars(G_t) \subseteq \xs$.

The standard normal-form reduction (see, e.g.,~\cite{Gra99,PH23}) remains sound for the fragments $\GF[\KEQ{K}]$, as well as for their constant-free and finite-variable subfragments.

\begin{lemma}\label{l:normal-form}
  For every equality-free sentence $\phi\in\GF[\KEQ{K}]$ over a signature $\sigma$,
  there exists a normal-form equality-free sentence $\phi_{\rm nf}\in\GF[\KEQ{K}]$ over an extended signature $\sigma_{\rm nf} \supseteq \sigma$ such that:
  \begin{itemize}
    \item $\phi$ and $\phi_{\rm nf}$ are equisatisfiable.
    \item The $\sigma$-reduct of any model $\str{A}\models\phi_{\rm nf}$ satisfies $\phi$.
    \item $\phi_{\rm nf}$ is computable in polynomial time from $\phi$.
  \end{itemize}
\end{lemma}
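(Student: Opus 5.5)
The plan is the standard Scott-style normal-form transformation for \GF{} (as in~\cite{Gra99,PH23}). The point to check is that no step relies on equality or alters the interpretation of the distinguished symbols $E_1,\dots,E_K$. First I would bring $\phi$ into negation normal form and rename bound variables apart, so that each guarded quantification has the shape $\exists\xs\,(\gamma\wedge\psi)$ or $\forall\xs\,(\gamma\rightarrow\psi)$. These rewritings are clearly polynomial.

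Next, I proceed bottom-up over the subformula tree, innermost quantified subformulas first. Take a guarded subformula $\theta(\ys)=\exists\xs\,(\gamma(\xs,\ys)\wedge\psi(\xs,\ys))$ whose $\psi$ is already quantifier-free. I introduce a fresh predicate $R_\theta$ of arity $|\ys|$ (a fresh $0$-ary or unary symbol when $|\ys|\le 1$) and replace $\theta$ by the atom $R_\theta(\ys)$. I then add the definitional conjunct $\forall\ys\,(R_\theta(\ys)\rightarrow\exists\xs\,(\gamma\wedge\psi))$. For universally guarded subformulas the conjunct is $\forall\ys\,(R_\theta(\ys)\rightarrow\forall\xs\,(\gamma\rightarrow\psi))$, which is equivalent to the purely universal $\forall\ys,\xs\,(\gamma\rightarrow(R_\theta(\ys)\rightarrow\psi))$ with guard $\gamma$. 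A one-directional implication suffices because, after negation normal form, every occurrence is positive. The guard condition $\FreeVars(\psi)\subseteq\FreeVars(\gamma)$ ensures $\ys\subseteq\FreeVars(\gamma)$.

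An existential block $\exists x_1\dots x_m$ is then split into single-variable steps using fresh predicates $R^j(\ys,x_1,\dots,x_j)$. Here I encounter what I expect to be the main obstacle: the intermediate tuples must themselves be guarded, and the witness $y$ must satisfy $\FreeVars(\psi_t)\setminus\{y\}\subseteq\FreeVars(G_t)$. To handle this, I will let the last step produce the whole guard atom. The conjuncts are $\forall\ys\,(R_\theta(\ys)\rightarrow\exists x_1\,R^1(\ys,x_1))$, then $\forall\ys,x_1\,(R^1(\ys,x_1)\rightarrow\exists x_2\,R^2(\ys,x_1,x_2))$, and so on. The final step is the universal $\forall\ys,\xs\,(R^m(\ys,\xs)\rightarrow(\gamma\wedge\psi))$. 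Each existential conjunct now has a fresh, non-distinguished guard $G_t=R^{j}$ and a quantifier-free $\psi_t$, so the result has exactly the shape~\eqref{GF:normal-form-shape}. Padding with dummy variables gives all conjuncts the common prefix $\forall\xs$. The top-level sentence becomes a fresh $0$-ary atom asserted true; unguarded $\exists x\,\psi(x)$ at the top level is handled by a fresh unary guard.

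For correctness, the $\sigma$-reduct direction is immediate: by induction, positivity of the replaced occurrences gives $R_\theta(\bar a)\Rightarrow\theta(\bar a)$ in any model of $\phi_{\rm nf}$, and the reduct leaves $E_1,\dots,E_K$ untouched, so it satisfies $\phi$. Conversely, given $\str A\models\phi$, I expand it by interpreting each $R_\theta$ as the set of tuples satisfying $\theta$, and each $R^j$ as the tuples admitting an extension to a witness. The distinguished relations are unchanged, hence still nested equivalences, and the domain is preserved. No step introduces equality, constants are only carried along, and the number of variables grows by at most a constant, so the constant-free and finite-variable subfragments are preserved. Every subformula contributes conjuncts of size linear in its own size, giving polynomial time overall.
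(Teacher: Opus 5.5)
Your proposal is correct and follows the same route as the paper. The paper simply invokes the standard Scott-style normal-form reduction of Gr\"adel and Pratt-Hartmann, observing that it introduces no equality and leaves the interpretation of $E_1,\dots,E_K$ untouched; your explicit splitting of existential blocks through fresh intermediate guards $R^j$ is a valid way to reach the single-witness shape the paper's normal form requires.
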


\subparagraph*{Standard name assumption.}
We work under the \emph{standard name assumption}, that is, we assume that constant symbols are interpreted in structures by themselves.
In particular, two different constants are interpreted by different elements.
W.r.t.~satisfiability, this assumption is~w.l.o.g.: whenever $\phi$ has a model $\fA$ such that $c_1^{\fA} = c_2^{\fA}$, then we replace occurrences of $c_2$ in $\phi$ with $c_1$ without affecting satisfiability.
Given a structure $\fA$, we partition the domain $A = A_0 \uplus \Cons{}$, where $A_0$ and $\Cons$ are disjoint sets of \emph{unnamed} elements and constants, respectively.

\subparagraph*{Types.}
Let $\sigma$ be a signature.
For $k \in \N$, let $\operatorname{Lit}_k(\sigma)$ denote the set of all literals over relation and constant symbols from $\sigma$ and variables $x_1,\dots,x_k$; literals with ``$=$'' are not included.
A \emph{$k$-type} over $\sigma$ is a maximal consistent subset of $\operatorname{Lit}_k(\sigma)$.
A \emph{type} is a $k$-type for some $k \ge 1$.

A $k$-type $\tau$ is \emph{guarded} if there exists a positive literal $\gamma \in \tau$ such that $\FreeVars(\gamma) = \{ x_1,\dots,x_k \}$, that is, if some positive literal uses all variables of $\tau$.

Given a $k$-tuple of distinct \emph{unnamed} elements $\langle a_1,\dots,a_k \rangle$ of a $\sigma$-structure $\fA$, we write 
\( \type{\fA}{a_1,\dots,a_k} \) 
for the $k$-type \emph{realised} by $\langle a_1,\dots,a_k \rangle$ in $\fA$.  
Formally, $\type{\fA}{a_1,\dots,a_k}$ is the set of literals $\gamma \in \operatorname{Lit}_k(\sigma)$ such that $\fA,f \models \gamma$, where $f$ sends $x_i \mapsto a_i$ for all $i \in [k]$.

We view $k$-types as $\sigma$-structures over the canonical domain $\{x_1,\dots,x_k\} \cup \Cons$, where $\Cons \subseteq \sigma$ is the set of constants. We thus allow structure-like notation also for types.

\subsection{Finite-Index Nesting Property}\label{sec:proof-of-GF-fin-property}

\begin{lemma}\label{l:GF-fin-property}
  Let $K \ge 2$, and let $\phi$ be an equality-free sentence of $\GF[\KEQ{K}]$ in normal form.
  Let $m \in \N$ denote the number of variables in~$\phi$,
  and let $\AAA$ denote the set of all $1$-types over the signature of~$\phi$.
  If $\phi$ is satisfiable, then it has a model $\fA \models \phi$ such that
  \[
    |\cC / E^{\fA}_1| \;\le\; m \cdot 2^{|\AAA|}
    \qquad\text{for every}\qquad
    \cC \in A / E^{\fA}_2.
  \]
\end{lemma}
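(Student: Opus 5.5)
The plan is to mirror the proof of Lemma~\ref{l:GF3-fin-property}, replacing the parameter $3$ by $m$ and adapting each step to constants and to witnesses of the form $\forall \xs\,(G_t(\xs) \rightarrow \exists y\,\psi_t(\xs,y))$ with $|\xs| \le m-1$.

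\emph{Step 1: a proper model.} Work under the standard name assumption and fix a model $\fA_0 \models \phi$. As in Lemma~\ref{l:propermodels}, blow up the unnamed elements to $A_0 \times \N$. Constants are kept as they are. A tuple of copies inherits every atom of the underlying tuple, including atoms that mention constants. This yields two properties. First, each guarded tuple $\bar a$ has a witness $a_y \notin \bar a$; if the original witness is a constant, we keep it and treat it separately. Second, every $1$-type in $\classsort{\fA}{a}$ is realised infinitely often among the unnamed elements of each $E_1$-class. The $1$-types now also record atoms involving constants. Since constants are single elements, we keep $\Cons$ entirely in the new domain.

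\emph{Step 2: the $m$-truncation.} For every $\cC \in A/E^{\fA}_2$, fix $T_m(\cC)$, a union of $E_1$-classes of $\cC$ that contains exactly $\inf\{\classspectrum{\fA}{\cC}(\Gamma), m\}$ classes of each sort $\Gamma$. It must also contain every $E_1$-class that holds a constant. Let $B$ be the union of all $T_m(\cC)$. Forcing the constant classes in might push the count of some sort above $m$. To avoid this, we let the $1$-type of an element include its $E_1$- and $E_2$-relations to each constant. Then a class containing a constant has a sort that records this, and the $E_1$-class of a constant $c$ is the unique class of its sort. Under this convention the count stays within $m\cdot 2^{|\AAA|}$.

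\emph{Step 3: generalising Lemma~\ref{l:patterns}.} Extend $\sim$ from pairs to tuples $(a_1,\dots,a_r)$ with $r \le m-1$. The conditions are: the same equality pattern, the same $1$-types and class-sorts, the same $E_1$-pattern among the components, and $(a_i,b_i) \in E^{\fA}_k$ for all $k \ge 2$. The lemma is then: given $\bar a \sim \bar b$ with $\bar b \subseteq F \subseteq B$ and $F$ finite, and $a_y \notin \bar a$, there is $b_y \in B\setminus F$ with $\bar a a_y \sim \bar b b_y$. The case split is the same as before. If $a_y$ is $E_1$-related to some $a_i$, pick a fresh element of the right $1$-type in the class of $b_i$. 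Otherwise, $a_y$'s class must avoid the at most $m-1$ classes of $b_1,\dots,b_{m-1}$. Since $T_m(\cC)$ keeps $m$ copies of each sort, or all of them if there are fewer, a suitable class exists. The same property is needed to preserve the relations to constants, which the enriched $1$-types take care of.

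\emph{Step 4: the chain construction.} Rebuild $\fB_1 \subseteq \fB_2 \subseteq \cdots$ as before. Start from $\fA \restr \Cons$ together with one element. Maintain the invariant $(\dagger)$ for guarded tuples of size up to $m-1$. At each step add a witness $b_y$ for an unsatisfied guarded tuple, copying $\type{\fA}{\bar a,a_y}$ onto $\bar b b_y$, and copy the $E_K$-induced $2$-types from $\fA$. Then $\fB_\omega \models \Psi$, because every guarded type it realises is copied from $\fA$. Moreover $\fB_\omega$ preserves $E_1,\dots,E_K$ and lies inside $B$, so its $E_1$-index in each $E_2$-class is at most $m\cdot 2^{|\AAA|}$.

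I expect the main obstacle to be the interaction with constants. A guarded type copied onto $\bar b b_y$ also fixes atoms mixing $b_y$ with constants. These must not conflict with facts already present, and relations between new elements and constants must not create new guarded types. Enriching the $1$-types with all atoms mentioning constants and using the $1$-types of constants, as described above, is the first remedy I would try.
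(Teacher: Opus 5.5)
Your proposal is correct and follows essentially the same route as the paper: a blown-up model, an $m$-truncation of each $E_2$-class by class-sorts, a tuple version of the similarity lemma with the same two-case argument, and the same chain construction that copies guarded types and $E_K$-induced $2$-types from $\fA$. The constant-related obstacle you anticipate is already absorbed by the paper's setup. Its $1$-types contain all literals mentioning constants, so the $E_1$-class of a constant is automatically the unique class of its sort and lies in $T_m$. It also blows up the constants into unnamed copies, so every $E_1$-class contains unnamed elements and witnesses can always be taken unnamed. This removes your separate constant-witness case, which is trivial in any event, since $\type{\fB_n}{\bar b}=\type{\fA}{\bar a}$ already fixes all atoms involving constants.
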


\subparagraph*{Proof of Lemma~\ref{l:GF-fin-property}.}

Fix $K \ge 2$, and let $\phi$ be a sentence of $\GF[\KEQ{K}]$ in the shape given by \eqref{GF:normal-form-shape}.
Let $\sigma$ denote the signature of $\phi$, and let $\Cons \subseteq \sigma$ denote the set of constants.
Write $m$ for the number of variables occurring in $\phi$, and $\AAA$ for the set of $1$-types over $\sigma$.

\begin{lemma}\label{l:propermodels-full}
  If $\phi$ is satisfiable,
  then it has a model~$\str{A} \models \phi$ with the following properties.
  \begin{enumerate}
    \item\label{l:propermodelsitem1-full} For every assignment $f\colon \xs \rightarrow A$ and every index $t \in \cI$, whenever $\str{A},f \models G_t$,
    then there exists $a \in A \setminus (f(\xs) \cup \Cons)$ such that
    \[ \str{A}, f \cup \{ y \mapsto a \} \models \psi_t. \]
    \item\label{l:propermodelsitem2-full} If $\cC \in A/E^{\str{A}}_1$ and $\alpha \in \AAA$, then the set
    \[ \{ a \in \cC \setminus \Cons \mid \type{\str{A}}{a} = \alpha \} \]
    is either empty or countably infinite.
    \item\label{l:propermodelsitem3-full}
    For every $\cC \in A/E^{\str{A}}_1$, we have $\cC \setminus \Cons \neq \emptyset$.
  \end{enumerate}
\end{lemma}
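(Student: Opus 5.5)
The plan is to generalise the product construction from the proof of Lemma~\ref{l:propermodels}, with the twist that constants (standard names) are never duplicated. Starting from any model $\str{A}_0 \models \phi$ with domain $A_0 = A'_0 \uplus \Cons$, we define $\str{A}$ over $(A'_0 \times \N) \cup \Cons$. Let $\pi$ be the projection sending $(a,i) \mapsto a$ and fixing constants. For each relation symbol $R$ of arity $k$, we put a tuple $(e_1,\dots,e_k)$ into $R^{\str{A}}$ iff $(\pi(e_1),\dots,\pi(e_k)) \in R^{\str{A}_0}$. Constants are interpreted as themselves.

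The first step is to check that the semantics of the distinguished symbols survives this definition. Since $E_k^{\str{A}}$ is the $\pi$-preimage of $E_k^{\str{A}_0}$, it is again an equivalence relation, and the nesting $E_k \subseteq E_{k+1}$ is preserved. Note that all copies of one element become $E_k$-related for every $k$, since $E_k^{\str{A}_0}$ is reflexive.

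The second step is to show $\str{A} \models \phi$. The key observation is that $\pi$ is a surjective strong homomorphism: atoms are both preserved and reflected. Because $\phi$ is equality-free, a routine induction on formulas gives, for every assignment $f$ and formula $\chi$ of $\phi$'s language, that $\str{A}, f \models \chi$ iff $\str{A}_0, \pi\circ f \models \chi$. Quantifier cases use surjectivity of $\pi$, and constants are fine because $\pi$ fixes them. This is the only point where equality-freeness is essential.

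The three items then follow directly.
\begin{itemize}
  \item For item~\ref{l:propermodelsitem1-full}, suppose $\str{A}, f \models G_t$. Then $\str{A}_0, \pi\circ f \models G_t$, so some witness $a_0 \in A_0$ satisfies $\psi_t$ there.
  \begin{itemize}
    \item If $a_0$ is unnamed, choose a copy $(a_0,i)$ with $i$ large enough that it avoids the finite set $f(\xs)$; it is not a constant.
    \item If $a_0$ is a constant $c$, we cannot copy it. We need an unnamed element in $\str{A}$ that realises the same type as $c$ relative to the rest.
  \end{itemize}
  This is the main obstacle. I would handle it by modifying $\str{A}_0$ beforehand: adjoin to the domain infinitely many unnamed ``shadow'' copies $c^{(i)}$ of each constant $c$, with $\pi(c^{(i)}) = c$, and apply the same pullback definition. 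Atoms about a shadow copy are then exactly those about $c$, with other elements and constants in other positions. The shadow is unnamed, so the unnamed-witness requirement is met. The homomorphism argument is unaffected, since $\pi$ remains a surjective strong homomorphism fixing constants. One caveat must be checked: a shadow $c^{(i)}$ satisfies $x = c$-free atoms such as $R(c^{(i)}, c)$ iff $R(c,c)$ holds, and only equality could distinguish $c^{(i)}$ from $c$. Since equality is excluded, this is harmless.
  \item For item~\ref{l:propermodelsitem2-full}, every unnamed element now has infinitely many copies (or shadows) in its $E_1$-class, all with the same $1$-type, because $\pi$ reflects the type. So each realised unnamed $1$-type occurs infinitely often, and the domain is countable if $\str{A}_0$ is taken countable, which the downward L\"owenheim--Skolem theorem allows.
  \item Item~\ref{l:propermodelsitem3-full} holds because every $E_1$-class contains either an unnamed copy of some element or a shadow of a constant, and neither of these is a constant.
\end{itemize}
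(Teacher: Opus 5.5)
Your proof is correct and is essentially the paper's approach: the paper just invokes the product construction over $A_0 \times \N$ from Lemma~\ref{l:propermodels}, where each constant is one of its own copies and the other copies are exactly your unnamed shadows $c^{(i)}$. Your explicit use of downward L\"owenheim--Skolem to make the sets in item~2 countable fills in a detail that the paper leaves implicit.
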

\begin{proof} The lemma follows by the technique of Lemma~\ref{l:propermodels}. \end{proof}

Assuming that $\phi$ is satisfiable, we fix a model $\str{A} \models \phi$ as in Lemma~\ref{l:propermodels-full},
and subsequently, construct a new model $\str{B} \models \phi$ that will satisfy Lemma~\ref{l:GF-fin-property}.

\subparagraph*{Sorts and spectra.}
We adapt the definitions from \autoref{sec:GF3-fin-property}.
\begin{itemize}
  \item A \emph{class-sort} of an element $a \in A$ is the following set of $1$-types:
  \[ \classsort{\str{A}}{a} \eqdef \big\{ \type{\str{A}}{b} \bigmid b \in \absclass{\str{A}}{1}{a} \setminus \Cons \big\} \subseteq \AAA. \]
  By item~\autoref{l:propermodelsitem3-full} of Lemma~\ref{l:propermodels-full}, the above set is non-empty.

  We also define $\classsort{\str{A}}{\cC} \eqdef \classsort{\str{A}}{a}$, where
  $\cC \in A/E^{\str{A}}_1$ and $a \in \cC$ is any representative.
  For $\Gamma \subseteq \AAA$, the $E_1$-class $\cC$ is said to \emph{realise} $\Gamma$ if $\classsort{\str{A}}{\cC} = \Gamma$.

  \item Let $S \subseteq A$ be such that $S/E^{\str{A}}_1 \subseteq A/E^{\str{A}}_1$.
  A \emph{class-spectrum} of $S$ is the function $\classspectrum{\str{A}}{S} \colon 2^{\AAA} \rightarrow \N$ that sends each $\Gamma \subseteq \AAA$ to
\[ \classspectrum{\str{A}}{S}(\Gamma) \eqdef \big|\big\{ \cC \in S/E^{\str{A}}_1 \bigmid \classsort{\str{A}}{\cC} = \Gamma \big\}\big|. \]
\end{itemize}

\subparagraph*{Truncation.}
Let $\cC \in A/E^{\str{A}}_{2}$.
A subset $T \subseteq \cC$ is an \emph{$m$-truncation} of $\cC$ if
\begin{romanenumerate}
  \item $T/E^{\str{A}}_1 \subseteq \cC/E^{\str{A}}_1$, and 
  \item for every $\Gamma \subseteq \AAA$, we have
  \[ \classspectrum{\str{A}}{T}(\Gamma) \;=\; \inf\big\{\,\classspectrum{\str{A}}{\cC}(\Gamma),\, m \,\big\}. \]
\end{romanenumerate}

For every $\cC \in A/E^{\str{A}}_2$, fix an $m$-truncation $T_m(\cC)$. Define
\[B \eqdef \bigcup_{\cC \in A/E^{\str{A}}_2}T_m(\cC).\]

\subparagraph*{Similar tuples.}
Let $k \in [0,m]$, and let $\bar{a} = (a_1,\dots,a_k)$ and $\bar{b} = (b_1,\dots,b_k)$ be tuples of distinct elements from~$A \setminus \Cons$.
We write $\bar{a} \sim \bar{b}$ if the following conditions hold: 
\begin{romanenumerate}
  \item \label{patternitem1-full} For $i \in [k]$, $\type{\str{A}}{a_i} = \type{\str{A}}{b_i}$.
  \item \label{patternitem2-full} For $i \in [k]$, $\classsort{\str{A}}{a_i} = \classsort{\str{A}}{b_i}$.
  \item For $i,j \in [k]$, $(a_i,a_j) \in E^{\str{A}}_1$ if and only if $(b_i,b_j) \in E^{\str{A}}_1$.
  \item \label{patternitem3-full} For $i \in [k]$ and $\ell \in [2,K]$, $(a_i,b_i) \in E^{\str{A}}_\ell$.
\end{romanenumerate}

\begin{lemma}\label{l:patterns-full}
  Let $k \in [0,m{-}1]$, and let $\bar{a} = (a_1,\dots,a_k)$ and $\bar{b} = (b_1,\dots,b_k)$ be tuples of distinct elements from $A \setminus \Cons$ and from $B \setminus \Cons$, respectively.
  Suppose that
  \[ \text{$\bar{a} \sim \bar{b}$ \quad and \quad $F \subseteq A$ is a finite set with $\bar{b} \subseteq F$.}\]
  Then, for any $a_{k+1} \in A \setminus (\bar{a} \cup \Cons)$,
  one can find $b_{k+1} \in B \setminus (F \cup \Cons)$ such that
  \[ (a_1,\dots,a_k,a_{k+1}) \sim (b_1,\dots,b_k,b_{k+1}). \]
\end{lemma}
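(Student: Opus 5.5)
The plan is to mimic the proof of Lemma~\ref{l:patterns}, extending its two-case analysis from pairs to $k$-tuples. Fix $a_{k+1} \in A \setminus (\bar{a} \cup \Cons)$. The target is an element $b_{k+1} \in B \setminus (F \cup \Cons)$ satisfying four conditions: it has the $1$-type and class-sort of $a_{k+1}$; it is $E^{\str{A}}_\ell$-related to $a_{k+1}$ for all $\ell \in [2,K]$; and for every $i \in [k]$ we have $(b_i,b_{k+1}) \in E^{\str{A}}_1$ iff $(a_i,a_{k+1}) \in E^{\str{A}}_1$. Distinctness of the extended tuple is automatic once $b_{k+1} \notin F \supseteq \bar{b}$. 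The conditions for pairs $(i,j)$ with $i,j \le k$ are inherited from $\bar{a} \sim \bar{b}$.

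\emph{Case 1:} $(a_i,a_{k+1}) \in E^{\str{A}}_1$ for some $i \in [k]$. Choose $b_{k+1}$ in $\absclass{\str{A}}{1}{b_i} \setminus (F \cup \Cons)$ with $\type{\str{A}}{b_{k+1}} = \type{\str{A}}{a_{k+1}}$. Such an element exists for two reasons. First, $\type{\str{A}}{a_{k+1}} \in \classsort{\str{A}}{a_i} = \classsort{\str{A}}{b_i}$, since $a_{k+1}$ is unnamed. Second, by item~\ref{l:propermodelsitem2-full} of Lemma~\ref{l:propermodels-full} this type is realised infinitely often by unnamed elements of that class, while $F$ is finite. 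The class-sort of $b_{k+1}$ equals that of $b_i$, which equals that of $a_i$, which equals that of $a_{k+1}$. For each $j \in [k]$, the $E_1$-relation of $b_{k+1}$ to $b_j$ coincides with that of $b_i$ to $b_j$, which by the hypothesis $\bar a\sim\bar b$ coincides with that of $a_i$ to $a_j$, and hence with that of $a_{k+1}$ to $a_j$. Finally, $b_{k+1} E_1 b_i$, $b_i E_\ell a_i$ and $a_i E_1 a_{k+1}$ together give $b_{k+1} E_\ell a_{k+1}$ for all $\ell \ge 2$, using nesting and transitivity. We also need $b_{k+1} \in B$. This holds because $B$ is a union of whole $E_1$-classes and $b_i \in B$.

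\emph{Case 2:} $a_{k+1}$ is $E_1$-related to none of $a_1,\dots,a_k$. Put $\cC = \absclass{\str{A}}{2}{a_{k+1}}$ and $\Gamma = \classsort{\str{A}}{a_{k+1}}$. We need an $E_1$-class $\cD' \in T_m(\cC)/E^{\str{A}}_1$ that realises $\Gamma$ and avoids the classes $\absclass{\str{A}}{1}{b_i}$, $i \in [k]$. At most $k \le m-1$ classes have to be avoided. If $\classspectrum{\str{A}}{\cC}(\Gamma) \ge m$, the truncation keeps $m$ realisations, so one survives the avoidance. If the spectrum is smaller, the truncation keeps all realisations, and we argue as follows. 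Suppose some $b_i$ lies in $\cC$ in a $\Gamma$-class. Then $a_i \in \cC$ as well (since $a_i E_2 b_i$), and $a_i$ lies in a class of sort $\Gamma$. Distinct $b_i$'s in distinct classes correspond to $a_i$'s in distinct classes, by the $E_1$-pattern. So the number of bad classes is at most the number of $\Gamma$-classes of $\cC$ containing some $a_i$. None of these contains $a_{k+1}$, hence the class of $a_{k+1}$ is an extra one and the count of bad classes is strictly below the spectrum. Either way a suitable $\cD'$ exists.

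Inside $\cD'$, pick $b_{k+1} \in \cD' \setminus (F \cup \Cons)$ of type $\type{\str{A}}{a_{k+1}} \in \Gamma$, again by infinite realisation. Then $b_{k+1} E_2 a_{k+1}$, hence $E_\ell$ for all $\ell \ge 2$, and $b_{k+1}$ is $E_1$-related to no $b_i$, as required. The main obstacle is this counting step in the small-spectrum subcase, which replaces the ``at most $2$'' argument of Lemma~\ref{l:patterns}. It needs the injective correspondence between the bad classes and the $\Gamma$-classes of $\cC$ containing the $a_i$, supplied by conditions (iii)--(iv) of $\sim$. The case $k = 0$ is the degenerate instance of Case~2 with nothing to avoid.
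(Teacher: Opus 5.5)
Your proposal is correct and follows the paper's route. The paper proves this lemma by generalising the two-case argument of Lemma~\ref{l:patterns}: either $a_{k+1}$ shares an $E_1$-class with some $a_i$, or a fresh $E_1$-class of the right class-sort is chosen from the truncation $T_m(\cC)$. That is exactly what you do. Your injective counting in the small-spectrum subcase spells out the step the paper only asserts (``this suffices to avoid the bad $E_1$-classes''), and the counting is needed, since $\absclass{\str{A}}{1}{a_{k+1}}$ itself may contain some $b_i$.
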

\begin{proof}
  The lemma follows by generalising the strategy from the proof of Lemma~\ref{l:patterns}.
\end{proof}

\subparagraph*{Model construction.}
We now construct a chain of finite structures
\[ \str{B}_1 \subseteq \str{B}_2 \subseteq \str{B}_3 \subseteq \dots, \quad\quad \text{where } \str{B}_{n+1} \,\restr\, B_n = \str{B}_n \text{ for all } n \ge 1. \]
The natural limit structure $\str{B}_{\omega} = \bigcup_{n \ge 1}{\str{B}_n}$ will be a model for $\phi$ satisfying Lemma~\ref{l:GF-fin-property}.

For every $n \ge 1$, the structure $\str{B}_n$ will be defined over a subset $B_n \subseteq B$, and
the inclusion mapping $\str{B}_n \hookrightarrow \str{A}$ will preserve the $1$-types and the nested equivalence relations; though not necessarily other atoms. Formally:
\begin{itemize}
  \item For every $b \in B_n \setminus \Cons$, $\type{\str{B}_n}{b} = \type{\str{A}}{b}$.
  \item For every $k \in [K]$, $E^{\str{B}_n}_k = E^{\str{A}}_k \cap B^2_n$.
\end{itemize}

In parallel, we define a sequence of mappings $(\rho_n)_{n \ge 1}$:
\begin{itemize}
  \item Let $n \ge 1$, and let $\bar{b}$ be a tuple of distinct elements from $B_n \setminus \Cons$ such that 
  the type $\type{\str{B}_n}{\bar{b}}$ is guarded.
  Then the mapping $\rho_n$ sends $\bar{b}$ to a tuple $\bar{a}$ of distinct elements from $A \setminus \Cons$ such that
  \[ |\bar{a}| = |\bar{b}|, \quad \bar{a} \sim \bar{b}, \quad\text{and}\quad \type{\fA}{\bar{a}} = \type{\str{B}_n}{\bar{b}}. \]
\end{itemize}

Let $f\colon \xs \rightarrow B_n$ be an assignment, and let $t \in \cI$ be the index of a conjunct of $\phi$.
We say that the pair $(f,t)$ is \emph{satisfied} in~$\str{B}_n$ if $\str{B}_n,f \models G_t \rightarrow \exists y\;\psi_t$, and otherwise it is \emph{unsatisfied}.

\begin{claim}
We have $\fB_{\omega} \models \phi$ whenever the following holds:
\begin{enumerate}
  \item For every $n \ge 1$, the structure $\str{B}_n$ interprets $E_1,\dots,E_K$ as nested equivalence relations.
  \item For every $n \ge 1$, we have $\str{B}_n \models \Psi$.
  \item Every pair $(f,t)$, where $f \colon \xs \rightarrow B_n$ for some $n \ge 1$ and $t \in \cI$,
  eventually becomes satisfied in some structure $\str{B}_{\ell}$ with $\ell \ge n$.
\end{enumerate}
\end{claim}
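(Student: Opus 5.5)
To prove the claim, I will verify directly that $\fB_{\omega}$ satisfies each conjunct of $\phi$ in the shape~\eqref{GF:normal-form-shape}, using the three hypotheses in turn. Since the chain is increasing with $\str{B}_{n+1} \restr B_n = \str{B}_n$, every finite tuple of $B_\omega$ lies in some $B_n$, and any atom over elements of $B_n$ has the same truth value in $\str{B}_n$ and in $\fB_{\omega}$. This locality observation is what all three steps rest on.

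\emph{Distinguished symbols.} Reflexivity, symmetry, transitivity of each $E_k^{\fB_\omega}$ and the inclusion $E_k^{\fB_\omega} \subseteq E_{k+1}^{\fB_\omega}$ each mention at most three elements. These all lie in a common $B_n$, where the property holds by hypothesis~1. By locality it transfers to $\fB_{\omega}$, so $\fB_\omega$ is an admissible structure for $\GF[\KEQ{K}]$.

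\emph{Universal part.} Each conjunct of $\Psi$ is a universal prenex sentence $\forall \bar{x}\,\chi$ with $\chi$ quantifier-free. If it failed in $\fB_\omega$ under some assignment, then the finitely many elements in the range of that assignment, together with the constants (which are in every $B_n$ since the structures are $\sigma$-structures), lie in some $B_n$. Since $\chi$ is quantifier-free, $\str{B}_n$ would then also falsify $\forall \bar{x}\,\chi$, contradicting hypothesis~2. Equivalently, universal sentences are preserved under substructures, and each $\str{B}_n$ is a substructure of $\fB_\omega$; here I use the converse direction through the finite support.

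\emph{Existential conjuncts.} Take $t \in \cI$ and $f\colon \xs \to B_\omega$ with $\fB_\omega, f \models G_t$. The range of $f$ is finite, so $f\colon \xs \to B_n$ for some $n$, and $\str{B}_n, f \models G_t$ by locality. By hypothesis~3, the pair $(f,t)$ is satisfied in some $\str{B}_\ell$ with $\ell \ge n$. By locality again $\str{B}_\ell, f \models G_t$, so there is $a \in B_\ell$ with $\str{B}_\ell, f \cup \{y \mapsto a\} \models \psi_t$. Since $\psi_t$ is quantifier-free, this lifts to $\fB_\omega$.

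There is no real obstacle; the only point needing care is the locality of quantifier-free truth along the chain. This holds because the chain consists of substructures agreeing on overlaps and the constants are present from $\str{B}_1$ onward.
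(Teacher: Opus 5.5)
Your proof is correct. It is the routine verification the paper has in mind: the paper states this claim without proof, calling it ``readily verified.'' Your argument transfers quantifier-free truth between each $\str{B}_n$ and $\fB_\omega$, using the constants present from $\str{B}_1$ on, and then checks the equivalence axioms, $\Psi$ via finite support, and the $\forall\exists$ conjuncts via hypothesis~3; this is a sound version of that verification. One cosmetic point: for $\Psi$ you need preservation of universal sentences under unions of chains, not under substructures. Your direct finite-support argument already proves exactly that.
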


\smallskip\noindent\emph{Base case.}
Let $a \in B \setminus \Cons$ be any unnamed element.
Define $\str{B}_1 \eqdef \str{A}~\restr~(\{a\} \cup \Cons)$.
The corresponding mapping is $\rho_1 \eqdef \{ () \mapsto (), (a) \mapsto (a) \}$.

\smallskip\noindent\emph{Inductive step.}
Let $n \ge 1$.
Assume that $\str{B}_n$ and $\rho_n$ have already been defined.
If there are no unsatisfied pairs in $\str{B}_n$, we declare $\str{B}_{\omega} \eqdef \str{B}_n$ as the final model.
Yet, suppose that an unsatisfied pair exists:
choose such a pair $(f,t)$, where $f\colon \xs \rightarrow B_n$ and $t \in \cI$.
Naturally, it must be chosen in a \emph{fair way}: no pair is omitted infinitely often.

We construct now the next structure $\str{B}_{n+1}$ by extending $\str{B}_n$ with one fresh element and making the pair $(f,t)$ satisfied.

Let $\bar{b} = (b_1,\dots,b_k)$ enumerate $f(\xs) \setminus \Cons$.
W.l.o.g.~$\{ b_1,\dots,b_k \} = f(\FreeVars(G_t)) \setminus \Cons$;
that is, we discard elements that are not in the scope of the guard $G_t$ under $f$.

Write $\bar{a} = (a_1,\dots,a_k)$ for $\rho_n(\bar{b})$.
By the inductive invariant of $\rho_n$, we have
\[ |\bar{a}| = |\bar{b}|, \quad \bar{a} \sim \bar{b}, \quad\text{and}\quad \type{\fA}{\bar{a}} = \type{\str{B}_n}{\bar{b}}. \]

Define an assignment $g\colon \xs \rightarrow A$ in parallel to $f$,
that is,
\[
  g(x) = \begin{cases}
    \, a_i \quad & \text{if} \quad f(x) = b_i, \\
    \, f(x) \quad & \text{if} \quad f(x) \in \Cons.
  \end{cases}
\]

Since $\str{A} \models \phi$, and in particular $\str{A},g \models G_t$,
there exists a witness $a_{k+1} \in A$ for the pair $(g,t)$, that is, an element satisfying
\[ \str{A},g \cup \{ y \mapsto a_{k+1} \} \models \psi_t. \]
By item~\ref{l:propermodelsitem1-full} of Lemma~\ref{l:propermodels-full}, we take $a_{k+1} \in A \setminus (\bar{a} \cup \Cons)$.
Applying Lemma~\ref{l:patterns-full} with $F \eqdef B_n$, we obtain $b_{k+1} \in B \setminus B_n$ such that
\[ (a_1,\dots,a_{k+1}) \sim (b_1,\dots,b_{k+1}).\]

We now construct the structure $\str{B}_{n+1}$ by extending $\str{B}_{n}$ with the element $b_{k+1}$.
\begin{itemize}
  \item Initialise $\str{B}_{n+1} \eqdef \str{B}_{n}$,
  \item Assign the $(k{+}1)$-type of $(b_1,\dots,b_{k+1})$: \[\type{\str{B}_{n+1}}{b_1,\dots,b_{k+1}} \eqdef \type{\str{A}}{a_1,\dots,a_{k+1}}.\]
  Since $\type{\str{B}}{\bar{b}} = \type{\str{A}}{\bar{a}}$, this step ensures that $\str{B}_{n+1} \restr B_n = \str{B}_n$.
  \item Define the remaining $2$-types:
  for every $b \in B_n \setminus (\bar{b} \cup \Cons)$ with $(b,b_{k+1}) \in E^{\str{A}}_K$, set
  \[ \type{\str{B}_{n+1}}{b,b_{k+1}} \eqdef \type{\str{A}}{b,b_{k+1}}.\]
  Again, since $\type{\str{B}}{b} = \type{\str{A}}{b}$, this does not modify $\str{B}_{n+1} \restr B_n$.
  \item The facts whose truth values remain unspecified are set to false.
\end{itemize}

The construction of $\str{B}_{n+1}$ is now finished. We next extend the mapping $\rho_n$ to $\rho_{n+1}$.
\begin{itemize}
  \item Initialise $\rho_{n+1} \eqdef \rho_n$.
  \item For every $\ell \in [0,k]$ and every choice $j_1,\dots,j_{\ell} \in [k]$ of pairwise distinct indices,
  whenever $\type{\str{B}_{n+1}}{b_{j_1},\dots,b_{j_{\ell}},b_{k+1}}$ is guarded, set
  \[ \rho_{n+1}(b_{j_1},\dots,b_{j_t},b_{k+1}) \eqdef (a_{j_1},\dots,a_{j_t},a_{k+1}).\]
  \item For every $b \in B_n \setminus (\bar{b} \cup \Cons)$
  such that $\type{\str{B}_{n+1}}{b,b_{k+1}}$ is guarded, set
  \[ \rho_{n+1}(b,b_{k+1}) \eqdef (b,b_{k+1}) \quad\text{and}\quad \rho_{n+1}(b_{k+1},b) \eqdef (b_{k+1},b).\]
\end{itemize}

This concludes the inductive step. The following claim serves as a brief summary of the most important inductive invariants.

\begin{claim}\label{constructioninvariantclaim-full}
  \,
  \begin{enumerate}
    \item\label{invariantclaim1-full} $\str{B}_{n+1}~\restr~B_n = \str{B}_n$ and $B_{n+1} \setminus B_n = \{ b_{k+1} \}$.
    \item\label{invariantclaim2-full} The inclusion $\str{B}_{n+1} \hookrightarrow \str{A}$ preserves the $1$-types and the distinguished predicates $E_1,\dots,E_K$; that is, $\type{\fB_{n+1}}{b} = \type{\fA}{b}$ for every $b \in B_{n+1}$ and $E_k^{\str{B}_{n+1}} = E^{\str{A}}_k \cap B^2_{n+1}$ for every $k \in [K]$.
    \item\label{invariantclaim3-full} For every tuple $\bar{b}$ of distinct elements from $B_{n+1} \setminus \Cons$ such that $\type{\str{B}_{n+1}}{\bar{b}}$ is guarded,
    we have
    $\rho_{n+1}(\bar{b}) \sim \bar{b}$ and $\type{\str{A}}{\rho_{n+1}(\bar{b})} = \type{\str{B}_{n+1}}{\bar{b}}$.
    \item $\fB_{n+1} \models \Psi$.
    \item $\fB_{n+1},f \cup \{ y \mapsto b_{k+1} \} \models \psi_t$.
  \end{enumerate}
\end{claim}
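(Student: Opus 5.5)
We verify the five items of Claim~\ref{constructioninvariantclaim-full} one at a time, following the three-variable argument for Claim~\ref{constructioninvariantclaim}. Items~1 and~2 are essentially bookkeeping. By Lemma~\ref{l:patterns-full}, $b_{k+1}\notin B_n$, so $B_{n+1}\setminus B_n=\{b_{k+1}\}$. Every fact we assign mentions $b_{k+1}$, except those fixed by the $(k{+}1)$-type and by the $2$-types with $b\in B_n$. Their restrictions to old elements coincide with $\type{\str{B}_n}{\bar b}=\type{\str{A}}{\bar a}$, resp.\ with $\type{\str{A}}{b}=\type{\str{B}_n}{b}$, so $\str{B}_{n+1}\restr B_n=\str{B}_n$.

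For the distinguished predicates in item~2 we argue as follows. The condition $\bar a\sim\bar b$ together with the choice of $b_{k+1}$ gives $(a_i,b_i)\in E^{\str{A}}_\ell$ for $\ell\ge2$ and $i\le k+1$. Hence $(a_i,a_{k+1})\in E_\ell^{\str{A}}$ iff $(b_i,b_{k+1})\in E_\ell^{\str{A}}$ for $\ell\ge2$. For $\ell=1$ this equivalence is item~(iii) of $\sim$. So the copied $(k{+}1)$-type agrees with $\str{A}$ on $E_1,\dots,E_K$. For an old $b\notin\bar b$, if $(b,b_{k+1})\in E^{\str{A}}_K$ we copy the true $2$-type from $\str{A}$. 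Otherwise $(b,b_{k+1})$ is in no $E_\ell^{\str{A}}$ by nesting, and setting everything false is correct. Facts involving constants and $b_{k+1}$ are covered by the $(k{+}1)$-type, where $\Cons$ is part of the domain of types. The $1$-type of $b_{k+1}$ equals that of $a_{k+1}$, which equals $\type{\str{A}}{b_{k+1}}$ by~(i).

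For item~3 we need $\rho_{n+1}(\bar c)\sim\bar c$ and a type match for every new guarded tuple $\bar c$ containing $b_{k+1}$. If a positive atom covers $\bar c$, it lies within a single assigned block. That block is either a sub-tuple of $(b_1,\dots,b_{k+1})$ or a pair $(b,b_{k+1})$, because all other facts are false. The only exception is an atom $R(\bar u)$ with $\bar u$ consisting of constants and $b_{k+1}$, but then $\bar c=(b_{k+1})$ and this falls in the first case.

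In the first case, $\rho_{n+1}$ maps the tuple to the corresponding sub-tuple of $\bar a$ extended by $a_{k+1}$. The type match holds because the type was copied. Similarity is inherited, since the restriction of $\sim$ to sub-tuples and permutations is again $\sim$. In the second case, $\rho_{n+1}$ is the identity and the type is literally $\type{\str{A}}{b,b_{k+1}}$. Here $\sim$ is reflexive, and requires only that $b,b_{k+1}\notin\Cons$ are distinct. Tuples not containing $b_{k+1}$ keep their old images, since types there are unchanged.

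Item~4 then follows as in the three-variable section. Each conjunct of $\Psi$ has the form $\forall\bar x\,(F\to\chi)$ with $F$ an atom. A tuple satisfying $F$ in $\str{B}_{n+1}$, after removing constants and repetitions, has a guarded type. By item~3 this type is realised in $\str{A}\models\Psi$, and $\chi$ depends only on the type including constants. Item~5 holds because $g$ maps variables in parallel with $f$, and $\psi_t$ is quantifier-free over $\FreeVars(G_t)\cup\{y\}$. Its truth depends only on the type of $\bar b,b_{k+1}$ with the constants, which we copied from $\bar a,a_{k+1}$.

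The main obstacle is item~3: confirming that no guarded type arises across assigned blocks. This relies on the nesting argument, namely that pairs outside $E_K^{\str{A}}$ carry no distinguished atoms. Without it, setting facts to false would break the equivalence relations.
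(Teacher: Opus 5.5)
Your proposal is correct and follows essentially the same route as the paper, which only sketches this claim. Items~1--2 are bookkeeping from $\bar a\sim\bar b$, transitivity of the $E_\ell^{\str{A}}$ and the nesting $E_\ell^{\str{A}}\subseteq E_K^{\str{A}}$, while items~3--5 rest on the observation that copying types from $\str{A}$ without altering earlier choices, and setting all other facts false, creates no new guarded types. One small correction to your closing remark: the nesting argument is what secures item~2, whereas item~3 only needs that every true fact involving $b_{k+1}$ lies inside one of the copied blocks.
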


\subparagraph*{Bounding index.}

Since at most $m$ realisations of every class-sort $\Gamma \subseteq \AAA$ are included in $T_m(\cC)$,
we obtain \[ |T_m(\cC)/E^{\str{A}}_1| \le m \cdot  2^{|\AAA|}.\]
The desired bound on index follows, as the inclusion mapping $\str{B}_{\omega} \hookrightarrow \str{A} \restr B_{\omega}$ preserves both $E_1$ and $E_2$.
This concludes the proof of Lemma~\ref{l:GF-fin-property}.

\subsection{Elimination of Distinguished Symbols}\label{sec:succinct-reduction}

\begin{lemma}\label{l:succinct-reduction}
  Let $K \ge 2$,
  and let $\phi$ be an equality-free sentence of $\GF[\KEQ{K}]$.
  Then there exists an equality-free sentence $\phi'$ of $\GF[\KEQ{1}]$ such that the following holds.
  \begin{itemize}
    \item The sentences $\phi$ and $\phi'$ are equisatisfiable.
    \item If $\phi'$ has a model over a domain $D$, then $\phi$ has also.
    \item $|\phi'|$ is bounded by a $(K-1)$-exponential function of $|\phi|$.
    \item $\phi'$ is computable from $\phi$ in polynomial time in $|\phi'|$.
  \end{itemize}
\end{lemma}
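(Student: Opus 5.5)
We follow the scheme of Lemma~\ref{l:GF3-reduction}: strip off one distinguished symbol at a time, replacing Lemma~\ref{l:GF3-fin-property} with Lemma~\ref{l:GF-fin-property}. Start from an equality-free $\phi\in\GF[\KEQ{K}]$ and apply Lemma~\ref{l:normal-form} to get a normal-form $\phi_{\rm nf}$ of polynomial size. Since the $\sigma$-reduct of any model of $\phi_{\rm nf}$ satisfies $\phi$, this step preserves both equisatisfiability and the domain condition. Let $m$ be the number of variables and $\AAA$ the set of $1$-types over the signature. Put $N \eqdef m\cdot 2^{|\AAA|}$ and $n\eqdef\lceil\log_2 N\rceil$. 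Now add fresh unary predicates $U_1,\dots,U_n$, treat $E_1$ as a free binary symbol, and conjoin the axioms \eqref{naive-reduction-axiom1} and \eqref{naive-reduction-axiom2}.

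Correctness goes exactly as in Claim~\ref{claim:GF3-reduction-correct}. For the forward direction, any model of the new sentence interprets $E_1$ as the intersection of $E_2$ with the kernel of the $U$-colouring. This is an equivalence relation finer than $E_2$, so the reduct is a genuine model of $\phi$ over the same domain. This gives the domain clause, and iterating preserves it. For the backward direction, take the model from Lemma~\ref{l:GF-fin-property}. Inside each $E_2$-class there are at most $N\le 2^n$ inner $E_1$-classes, so we colour them injectively by $U$-combinations, including the classes that contain constants. Renumbering the distinguished symbols yields a sentence of $\GF[\KEQ{(K{-}1)}]$, and we recurse $K-1$ times until one equivalence remains.

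\textbf{Size analysis.} This is where constants differ from Section~\ref{sec:GF3-reduction}. With constants, a $1$-type includes the atoms linking $x_1$ to the constants, so $|\AAA|$ can be doubly exponential: there are $2^{\text{poly}}$ atoms, since predicates of arity up to $|\phi|$ can be filled with constants. Hence $n = \cO(\log m + |\AAA|)$ may be exponential, and a single step costs one exponential. Over the recursion, the $i$-th step works with a sentence of $i$-exponential length. The key observation is that the added symbols are unary and there are only $n$ of them, while the constants and the arity stay unchanged. The number of atoms involving constants therefore grows only additively; more precisely, $\log|\AAA|$ increases by $n$ at each step. Tracking this, $|\AAA_i|\le 2^{p(|\phi|)+\sum_{j<i}n_j}$ and $n_{i}=\cO(|\AAA_i|)$. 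This gives a tower growing one level per step, so after $K-1$ steps the length is $(K{-}1)$-exponential in $|\phi|$, possibly after absorbing the initial exponential into the first level.

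\textbf{Main obstacle.} I expect the difficulty to lie in this bookkeeping: making sure the first step does not already cost two exponentials. If it does, I would replace the bound $N$ by one counting only the $1$-types restricted to the signature without constant-relational atoms. Concretely, I would reprove Lemma~\ref{l:GF-fin-property} with class-sorts over a polynomially-bounded projection of the types, if the given bound proves too crude. Computability in time polynomial in $|\phi'|$ is immediate, since each step only writes down $n$ new conjunct fragments.
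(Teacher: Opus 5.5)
\textbf{There is a genuine gap.} Coding $E_1$-classes with unary predicates cannot meet the $(K{-}1)$-exponential length bound once constants and unbounded arity are present.

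\textbf{The first step already costs two exponentials.} You set $n=\lceil\log_2(m\cdot 2^{|\AAA|})\rceil\ge|\AAA|$. As you note yourself, $|\AAA|$ can be doubly exponential in $|\phi|$. So the $n$ unary predicates in \eqref{naive-reduction-axiom2} make the first elimination step cost two exponentials, not one; your claim that $n$ ``may be exponential'' contradicts your own estimate of $|\AAA|$. Your bookkeeping $|\AAA_i|\le 2^{p(|\phi|)+\sum_{j<i}n_j}$ hides this by taking $p$ polynomial. In fact the number of atoms over $x_1$ and the constants is exponential in $|\phi|$. The additive tracking itself is sensible, since only unary symbols are added. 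But started from the correct value it yields $n_{K-2}$, and hence $|\phi'|$, of $K$-exponential size, and an exponential cannot be ``absorbed''. Already for $K=2$ you produce a doubly exponential sentence where a singly exponential one is required.

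\textbf{The fallback cannot work either.} The trouble is not that the bound of Lemma~\ref{l:GF-fin-property} is crude: the index of $E_1$ in $E_2$ can genuinely be forced to be triply exponential by a polynomial-size sentence. Take the lifted counter of \autoref{sec:lifting-lower-bounds} with $K=2$. There every element carries a $2$-numeral, every $E_1$-class contains all $2^{2^n}$ such values, and every $E_2$-class contains $2^{2^{2^n}}$ distinct $E_1$-classes, while the sentence has length polynomial in $n$. Axiom \eqref{naive-reduction-axiom2} with $n'$ unary predicates caps the index at $2^{n'}$. Equisatisfiability therefore forces $n'\ge 2^{2^n}$, so no refined counting can save unary coding. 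In addition, the truncation argument needs class-sorts over full $1$-types, because Lemma~\ref{l:patterns-full} must produce a witness with exactly the same $1$-type, constant atoms included.

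\textbf{The missing idea is to name $E_1$-classes succinctly.} Give each element a $2$-numeral, using an $(n'{+}1)$-ary predicate $BB$ and the constants $0,1$ as in Appendix~\ref{appendix:GF3-lower}, with $n'=\lceil\log_2\lceil\log_2 N\rceil\rceil$ so that $2^{2^{n'}}\ge N$. This $n'$ is roughly $\log_2|\AAA|$, hence only singly exponential in the current length. Then replace \eqref{naive-reduction-axiom2} by
\[
\mu_{n'}\wedge\forall x_1,x_2\,\big(x_1E_2x_2\rightarrow(x_1E_1x_2\leftrightarrow Q(x_1,x_2))\big),
\]
whose length is $\cO(n'^2)$. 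The correctness argument is unchanged: decorate the model of Lemma~\ref{l:GF-fin-property} injectively inside each $E_2$-class. Now each elimination step costs exactly one exponential in the length of the current sentence, which gives the $(K{-}1)$-exponential bound.
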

\begin{proof}
Let $K \ge 2$,
and let $\phi \in \GF{}[\KEQ{K}]$ be an equality-free sentence.
By Lemma~\ref{l:normal-form}, we assume that $\phi$ is in normal form.
Write $\AAA$ for the set of $1$-types over the signature of $\phi$.

Set $n = \lceil \log_2 N \rceil$, where $N = m \cdot 2^{|\AAA|}$ is the bound of Lemma~\ref{l:GF-fin-property}.

In the proof of Lemma~\ref{l:GF3-reduction},
we gave a strategy to replace the predicate $E_1$
with a family of unary predicates $\{ U_i \}^{n}_{i=1}$.
This strategy remains sound also for the current setting,
and thus Lemma~\ref{l:GF3-reduction} establishes a reduction from $\GF[\KEQ{K}]$ to $\GF[\KEQ{(K{-}1)}]$,
and by iterative applications, to $\GF[\KEQ{1}]$.
However, this reduction is inefficient for $\GF[\KEQ{K}]$ with constants,
as it yields sentences of $(2K-2)$-exponential length.

Indeed: elimination of one distinguished symbol causes the length of $\phi$ to increase by $\cO(|\phi|+|\AAA|)$.
With unbounded-arity symbols and constants, the size of $\AAA$ is doubly exponential in~$|\phi|$.
Hence, every elimination step increases length doubly exponentially.

To meet the requirement on the length, we improve the reduction to increase length only at exponential rate.
The technique is to employ the succinct axiomatisation of $2$-numerals, i.e., integers from $[0,2^{2^n}-1]$, instead of unary-coding via predicates $\{ U_i \}^{n}_{i=1}$.
We change axiom~\eqref{naive-reduction-axiom2} from the proof of Lemma~\ref{l:GF3-reduction} to
\begin{flalign}
  \mu_n \; \wedge \; \forall x_1,x_2\;\Big(&x_1 E_2 x_2 \rightarrow \Big(x_1 E_1 x_2 \leftrightarrow Q(x_1,x_2)\Big)\Big),
\end{flalign}
where the sentence $\mu_n$ axiomatises the theory of $2$-numerals (see Appendix~\ref{appendix:GF3-lower} for construction), and $Q$ is the equality test: $Q(x_1,x_2)$ holds iff the elements $x_1,x_2$ with $(x_1,x_2) \in E_2$ represent the same $2$-numeral.
The rest of the proof stays as in Lemma~\ref{l:GF3-reduction}.
\end{proof}

\subsection{Finite Model Property}\label{sec:FMP-full}

We establish now the finite model property for the equality-free $\GF[\EQ]$, together with tight bounds on minimal model size.

Let $\GFU$ denote the fragment $\GF$ extended with a single distinguished binary symbol~$U$.
The interpretation of $U$ is constrained to be the universal relation. 

The finite model property for $\GF[\EQ]$ follows from the same property for \GFU{}:
we first embed $\GF[\KEQ{K}]$ into $\GF[\KEQ{(K{+}1)}]$ by relativising quantifiers with the additional distinguished symbol $E_{K+1}$.
The reduction of Lemma~\ref{l:succinct-reduction} yields then a $\GF[\KEQ{1}]$-sentence of $K$-exponential length, which can be viewed as a $\GFU$-sentence.
It is shown in~\cite{KR21} that every satisfiable equality-free sentence of \GFU{} admits a doubly exponential model.
Thus, we obtain a $(K{+}2)$-exponential upper bound on the size of minimal models.

The fixed-variable fragment $\GF^m[\KEQ{K}]$ follows the same scheme: for every $m \in \N$, satisfiable sentences of $\GFU^m$ admit smaller models of singly exponential size.
This yields a $(K{+}1)$-exponential upper bound on minimal model size for equality-free $\GF^m[\KEQ{K}]$.

The constant-free $\GF[\KEQ{K}]$ requires, however, a different treatment, as $\GFU$ without constants can still enforce models of doubly exponential size.
We achieve the tight bound on model size as follows:
given a sentence of the constant-free $\GF[\KEQ{K}]$,
we first reduce it to a sentence in $\GF[\KEQ{(K{-}1)}]$ with constant; the reduction is as described in \autoref{sec:succinct-reduction}.
This reduction incurs only a polynomial increase in the sentence length.
Indeed, without constants the number of $1$-types is only singly exponential,
and hence the bound on nesting-index (Lemma~\ref{l:GF-fin-property}) is doubly exponential;
so succinct formulas of polynomial length from Appendix~\ref{appendix:GF3-lower} suffice for the reduction.
The result follows then from the $(K{+}1)$-exponential upper bound already shown for $\GF[\KEQ{(K{-}1)}]$ with constants.

\subsection{Decision Procedure}\label{sec:decision-procedure}

\begin{proposition}
  The satisfiability problem for the equality-free $\GF[\KEQ{1}]$ is decidable in $3$-\ExpTime{}.
\end{proposition}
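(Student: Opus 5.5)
We generalise the labelled-type approach of Section~\ref{sec:GF3-decision-procedure} (Lemma~\ref{l:GF3-criterion} and Algorithm~\ref{algos}) to unbounded arity and constants. Assume the input $\phi$ is in normal form \eqref{GF:normal-form-shape} (Lemma~\ref{l:normal-form}) and work under the standard name assumption.

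\emph{Labelled types.} A labelled $k$-type is a pair $(\tau,\ell_\tau)$ where $\tau$ is a $k$-type over $\sigma$, viewed as a structure on $\{x_1,\dots,x_k\}\cup\Cons$, and the label $\ell_\tau$ assigns to each $x_i$ a set of $1$-types (the sort of the unnamed part of its $E_1$-class). Two requirements are added. First, $\tau$ must interpret $E_1$ as an equivalence relation on $\{x_1,\dots,x_k\}\cup\Cons$. Second, labels must agree on $E_1$-related variables. Since every guarded tuple of unnamed elements has length at most $m$, the number of variables, we only consider $k\le m$.

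\emph{Key step: the criterion.} We need the analogue of Lemma~\ref{l:GF3-criterion}. The first issue is that the facts about constants must be fixed globally. So we guess, or later enumerate, a single structure $\fC$ on $\Cons$, together with labels for the constants' $E_1$-classes. Every labelled type in $\Omega$ must restrict to $\fC$ on the constants.

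The conditions on $\Omega$ are then as follows. $P_1$ asks for closure under reducts. $P_2$ asks that $\Psi$ holds, evaluated inside each type together with the constants. $P_{3/4}$ is a single condition: for each guarded $k$-type and each assignment $f$ of $\xs$ into its variables and constants with $G_t$ true, $\Omega$ contains a canonical extension to a $(k{+}1)$-type satisfying $\psi_t$. $P_5$ requires, as before, that each label can be completed by a labelled $2$-type. The completion must also be compatible with any constant lying in the same $E_1$-class, i.e.\ the $1$-type of the new element relative to $\Cons$ must be drawn from the label.

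\emph{Completeness and soundness.} Completeness uses a model from Lemma~\ref{l:propermodels-full}, taking all realised labelled types of unnamed tuples. Soundness follows the chain construction of Section~\ref{sec:GF3-decision-procedure}, generalised to tuples. We add one fresh element for the chosen $(k{+}1)$-type and close $E_1$ via $P_5$, taking the sets of $2$-types with old members of the class from $\Omega$. Each new element's relations to constants are fixed by its $1$-type. Consistency with $\fC$ holds by construction.

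The main obstacle is the invariant that every guarded tuple in $\fB_n$ realises a labelled type in $\Omega$. The newly created guarded tuples are subtuples of the added $(k{+}1)$-tuple, handled by $P_1$, or pairs created by $E_1$-closure. In the latter case a guarded pair may be guarded by $E_1$ itself, and $P_5$ provides exactly such a type in $\Omega$. We also need $E_1$-transitivity through constants, which is guaranteed because each $1$-type records its $E_1$-relations to the constants.

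\emph{Complexity.} We run the elimination Algorithm~\ref{algos} for each candidate $\fC$. The set $\AAA$ is now doubly exponential in $|\phi|$, because of the literals over constants and unbounded arities. Labels are therefore subsets of a doubly exponential set, giving triply exponentially many labels. The number of $k$-types for $k\le m\le|\phi|$ is doubly exponential. Hence there are triply exponentially many labelled types in total.

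Each check of $P_1$--$P_5$ is polynomial in this number. The number of structures $\fC$ on $\Cons$ is only doubly exponential. The whole procedure therefore runs in deterministic $3$-\ExpTime{}.
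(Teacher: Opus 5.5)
Your proposal is correct and follows essentially the paper's argument. Both use labelled types over $\{x_1,\dots,x_k\}\cup\Cons$ with labels in $2^{\AAA}$, and a criterion made of reduct-closure, $\Psi$, a single witness-extension condition and an $E_1$-completion condition; both run the elimination loop once per fixed labelled structure on the constants (your $\fC$ is the paper's labelled $0$-type $\pi$ together with $P_{0,\Omega}$), and both count triply exponentially many labelled types because $|\AAA|$ is doubly exponential.

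Three small points to tidy:
\begin{itemize}
\item State explicitly that $\type{\tau}{x_i}\in\ell_\tau(x_i)$. This appears in the Section~\ref{sec:GF3-decision-procedure} definition you generalise but is missing from your two listed requirements, and the criterion is unsound without it.
\item State that a variable $E_1$-related to a constant carries that constant's label. The $E_1$-closure step in soundness relies on this.
\item Enumerating the constants' labels adds a triply (not doubly) exponential factor. This still keeps the total within $3$-\ExpTime{}.
\end{itemize}
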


The procedure is based on a satisfiability criterion (Lemma~\ref{l:decision-procedure}),
expressing satisfiability of a sentence in terms of the existence of a certain set of \emph{labelled types}.

\subparagraph*{Labelled types.}
Let $\sigma$ be a signature with $\Cons \subseteq \sigma$ being the set of constants.
Write $\AAA$ for the set of $1$-types over $\sigma$.
Suppose that $\tau$ is a $k$-type over $\sigma$ for some $k \in \N$, and that $\ell \colon \operatorname{dom}(\tau) \to 2^{\AAA}$ is a function, where $\operatorname{dom}(\tau) \eqdef \{ x_1,\dots,x_k \} \cup \Cons$ is the \emph{domain} of $\tau$.
A pair $(\tau,\ell)$ is called a \emph{labelled $k$-type} (over $\sigma$) if
there exists a $\sigma$-structure $\str{A}$ over a domain $A$ with $\operatorname{dom}(\tau) \subseteq A$ such that the following conditions hold:
\begin{itemize}
  \item $\type{\str{A}}{x_1,\dots,x_k} = \tau$.
  \item For every $x \in \operatorname{dom}(\tau)$,
  \( \ell(x) = \{ \type{\str{A}}{a} \bigmid a \in \absclass{\str{A}}{1}{x} \setminus \Cons \}. \)
\end{itemize} 

We sometimes write just $\tau$ for $(\tau,\ell)$ and refer to the associated labelling by~$\ell_\tau$.
A \emph{labelled type} is simply a labelled $k$-type for some $k$.

Suppose that $\tau$ is a labelled $k$-type and $\tau'$ is a labelled $m$-type, where $0 \le k \le m$.
We say that $\tau$ is a \emph{reduct} of $\tau'$ if
there exists an injective mapping $\rho \colon \operatorname{dom}(\tau) \hookrightarrow \operatorname{dom}(\tau')$
such that:
\begin{itemize}
  \item $\rho$ is the identity on $\Cons$ and injectively sends variables of $\tau$ to variables of $\tau'$, that is,
  \begin{itemize}
    \item for every $c \in \Cons$, $\rho(c) = c$;
    \item for every $i \in [k]$, $\rho(x_i) = x_j$ for some $j \in [m]$;
    \item for all $1 \le i \le j \le k$, $\rho(x_i) \neq \rho(x_j)$.
  \end{itemize}
  \item $\tau = \type{\tau'}{\rho(x_1),\dots,\rho(x_k)}$.
  \item $\ell_{\tau}(x) = \ell_{\tau'}(\rho(x))$ for every $x \in \operatorname{dom}(\tau)$.
\end{itemize}

Further, we say that $\tau'$ \emph{canonically extends} $\tau$, and write $\tau' \models \tau$,
if $\rho$ is the inclusion mapping, that is, $\rho(x) = x$ for all $x \in \operatorname{dom}(\tau)$.

\subparagraph*{Satisfiability criterion.}
Lemma~\ref{l:decision-procedure} establishes the desired criterion for satisfiability.

\begin{lemma}\label{l:decision-procedure}
  Let $\phi$ be an equality-free sentence of $\GF[\KEQ{1}]$ in normal form, that is,
  \[ \phi \quad=\quad \Psi \quad\wedge\quad \bigwedge\nolimits_{t \in \cI} \forall \xs\, \big(G_t(\xs) \rightarrow \exists y\, \psi_t(\xs,y)\big). \]
  Write $\sigma$ for the signature of $\phi$, and let $\Cons \subseteq \sigma$ be the set of constants.

  Then $\phi$ is satisfiable if and only if there exists a set $\Omega$ of labelled types over $\sigma$ such that
  $P_{0,\Omega}$ holds, and
\(
  P_{1,\Omega}(\tau), \dots, P_{4,\Omega}(\tau)
\)
hold for every $\tau \in \Omega$, where
\[
  P_{0,\Omega}{\color{white}(\tau)}
  \iff
  \text{$|\Omega| \ge 2$ and $\Omega$ induces a unique $0$-type.}
\]
\[
  P_{1,\Omega}(\tau)
  \iff
  \text{every reduct of $\tau$ belongs to $\Omega$.}
\]
\[
  P_{2,\Omega}(\tau)
  \iff
  \tau \models \Psi.
\]
\[
\begin{aligned}
  P_{3,\Omega}(\tau)
  \iff{}&
  \text{if $\tau$ is a labelled $k$-type and $f \colon \xs \to \operatorname{dom}(\tau)$ is such that }
  \\
  &\operatorname{dom}(\tau) \subseteq f(\xs) \cup \Cons \;\text{ and }\;
  \tau,f \models G_t(\xs) \text{ for some } t \in \cI,
  \\
  &\text{then there exists a labelled $(k{+}1)$-type }
  \tau' \in \Omega
  \text{ canonically extending $\tau$}
  \\
  &\text{such that }
  \tau', f \cup \{ y \rightarrow x_{k+1} \} \models \psi_t(\xs,y).
\end{aligned}
\]
\[
\begin{aligned}
  P_{4,\Omega}(\tau)
  \iff{}&
  \text{for every }
  x \in \operatorname{dom}(\tau)
  \text{ and for all }
  \alpha_1,\alpha_2 \in \ell_{\tau}(x),
  \\
  &\text{there exists a labelled $2$-type }
  \beta \in \Omega
  \text{ such that}
  \\
  &\type{\beta}{x_1} = \alpha_1,
  \qquad
  \type{\beta}{x_2} = \alpha_2,
  \\
  &\beta \models x_1 E_1 x_2,
  \qquad
  \ell_{\beta}(x_1)=\ell_{\beta}(x_2)=\ell_{\tau}(x).
\end{aligned}
\]
\end{lemma}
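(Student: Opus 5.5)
We prove the two directions of Lemma~\ref{l:decision-procedure} separately, following the same template as for Lemma~\ref{l:GF3-criterion}. The only new ingredients are constants, the $0$-type, and arbitrary arity.

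\emph{Completeness.} Suppose $\phi$ is satisfiable. We fix a model $\str{A}\models\phi$ as in Lemma~\ref{l:propermodels-full}, under the standard name assumption. We let $\Omega$ be the set of all labelled $k$-types realised in $\str{A}$ by tuples of distinct unnamed elements, for $0\le k\le m$, where $m$ is the number of variables of $\phi$. Each realised pair $(\tau,\ell_\tau)$ is a labelled type by definition, with witness structure $\str{A}$ itself. The conditions are then checked as follows.
\begin{itemize}
  \item $P_{0,\Omega}$: there is a single $0$-type, since all tuples share the interpretation of constants. We have $|\Omega|\ge 2$ because unnamed elements exist by item~\ref{l:propermodelsitem3-full} of Lemma~\ref{l:propermodels-full}.
  \item $P_{1,\Omega}$: closure under reducts is immediate.
  \item $P_{2,\Omega}$: this holds because $\str{A}\models\Psi$.
  \item $P_{3,\Omega}$: item~\ref{l:propermodelsitem1-full} of Lemma~\ref{l:propermodels-full} supplies a fresh unnamed witness outside the current tuple, which yields the canonical extension.
  \item $P_{4,\Omega}$: pick any two unnamed realisations of $\alpha_1,\alpha_2$ in the $E_1$-class of $x$. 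They can be chosen distinct, since each $1$-type occurs infinitely often by item~\ref{l:propermodelsitem2-full}. Their labelled $2$-type is the required $\beta$. If $x$ is a constant, the class $\absclass{\str{A}}{1}{x}$ is still handled this way, because $\ell_\tau$ only records unnamed elements.
\end{itemize}

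\emph{Soundness.} Given $\Omega$, we build a chain $\str{B}_1\subseteq\str{B}_2\subseteq\cdots$ as in the soundness part of Lemma~\ref{l:GF3-criterion}, together with labellings $\ell_n$. We maintain the invariant that every guarded tuple $\bar b$ of distinct unnamed elements, with its labels, realises a labelled type in $\Omega$.

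For the base case we use the unique $0$-type from $P_{0,\Omega}$ to interpret the constants. Because $|\Omega|\ge2$ and $\Omega$ is closed under reducts, there is a labelled $1$-type, which gives a first unnamed element. Its restriction to $\Cons$ agrees with the $0$-type, and the labels of constants are fixed by it.

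In the inductive step we fairly choose an unsatisfied pair $(f,t)$. We restrict $f$ to the guarded elements, and by the invariant read off a labelled type $\tau\in\Omega$. Then $P_{3,\Omega}$ gives $\tau'$, and we add a fresh element realising $\tau'$ over those elements.

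Next we close $E_1$ transitively, as before. For every earlier unnamed $b$ that $\tau'$ forces into the new element's $E_1$-class, directly or through a constant, we insert a $2$-type from $P_{4,\Omega}$. This is applied to the variable or constant carrying the shared label, so that labels agree. All other facts are set false.

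$\Psi$ is preserved because every guarded type created is a reduct of a member of $\Omega$, so $P_{1,\Omega}$ and $P_{2,\Omega}$ apply. The $E_1$-classes remain equivalence classes since labels agree along $E_1$.

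\emph{Main obstacle.} The delicate step is the interaction of constants with $E_1$-closure. A new element may be $E_1$-related to a constant $c$, and hence transitively to every unnamed element in $c$'s class. Those $2$-types must be consistent with the labels $\ell(c)$ recorded in the $0$-type, and with the types already fixed for the new element with respect to constants.

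We plan to handle this by insisting that every $\tau\in\Omega$ agrees with the fixed $0$-type, including the labels of constants; this is the content of ``$\Omega$ induces a unique $0$-type''. Then the inserted $2$-types are taken from $P_{4,\Omega}$ applied at $x=c$. Since a $2$-type mentions constants, we additionally choose $\beta$ whose restrictions to each variable together with $\Cons$ coincide with the already fixed $1$-types. Here we rely on $1$-types being computed over $\operatorname{dom}$ including constants, so that $\type{\beta}{x_i}=\alpha_i$ already determines these restrictions.
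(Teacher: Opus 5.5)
Your proposal is correct and follows essentially the same route as the paper. Completeness takes the labelled $k$-types ($k \le m$) realised in a model from Lemma~\ref{l:propermodels-full}, and soundness uses a fair chain construction with labellings $\ell_n$, where $P_{3,\Omega}$ supplies witnesses and $P_{4,\Omega}$ closes $E_1$-classes, including through constants; the one cosmetic difference is that the paper takes the $2$-types from $P_{4,\Omega}$ via the new element's label $\ell_{\tau'}(x_{k+1})$, which equals the label of the constant or variable you use because labels agree along $E_1$ in every labelled type.
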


In the following, we write $\AAA$ for the set of $1$-types over $\sigma$,
and $m$ for the number of variables occurring in~$\phi$.  

\subparagraph*{Completeness.}
Let $\str{A} \models \phi$ be a model as in Lemma~\ref{l:propermodels-full}.
Define $\Omega$ as the set of labelled $k$-types with $k \le m$ \emph{realised} in $\str{A}$;
that is, for every $k \in [0,m]$ and every distinct elements $a_1,\dots,a_k \in A \setminus \Cons$, include in $\Omega$ the labelled $k$-type $(\tau,\ell)$, where
\begin{itemize}
  \item $\tau = \type{\str{A}}{a_1,\dots,a_k}$,
  \item for every $x \in \operatorname{dom}(\tau)$,
    \[\ell(x) = \big\{\, \type{\str{A}}{b} \bigmid b \in \absclass{\str{A}}{1}{x^*} \setminus \Cons \,\big\},\]
    where $x^* = x$ if $x \in \Cons$, and $x^* = a_i$ if $x = x_i$ for some $i \in [k]$.
\end{itemize}

One can verify that the conditions of Lemma~\ref{l:decision-procedure} hold for the constructed set $\Omega$.

\subparagraph*{Soundness.}
Suppose that $\Omega$ is a set of labelled types over $\sigma$ as in Lemma~\ref{l:decision-procedure}.
To prove that $\phi$ is satisfiable, we construct a chain of finite structures
\[ \str{B}_1 \subseteq \str{B}_2 \subseteq \str{B}_3 \subseteq \dots, \quad\quad \text{where } \str{B}_{n+1} \,\restr\, B_n = \str{B}_n \text{ for all } n \ge 1. \]
The natural limit structure $\str{B}_{\omega} \eqdef \bigcup_{n \ge 1}\str{B}_n$ will be a model of $\phi$.

During the construction,
we define a sequence of functions $(\ell_n)_{n \in \N}$,
where each function $\ell_n \colon B_n \rightarrow 2^{\AAA}$ assigns to every element of the domain of $\str{B}_n$ a set of $1$-types.
The label $\ell_n(a)$ serves as a \emph{declaration} of which $1$-types are admitted in the $E_1$-class of $a$.
The following conditions shall hold during the construction:
\begin{itemize}
  \item \( \{ \type{\str{B}_n}{b} \mid b \in \absclass{\str{B}_n}{1}{a} \setminus \Cons \} \subseteq \ell_n(a) \) for all $a \in B_n$.
  \item $\ell_n(a) = \ell_n(b)$ for all $(a,b) \in E^{\str{B}_n}_1$.
  \item Let $\bar{b} = (b_1,\dots,b_k)$ be a $k$-tuple of distinct elements from $B_n \setminus \Cons$.
    If the $k$-type $\type{\str{B}_n}{\bar{b}}$ is guarded,
    then there exists a labelled $k$-type $(\tau,\ell_{\tau}) \in \Omega$ such that:
    \begin{itemize}
        \item $\type{\str{B}_n}{\bar{b}} = \tau$.
        \item For each $i \in [k]$, $\ell_{\tau}(x_i) = \ell_{n}(b_i)$.
        \item For each $c \in \Cons$, $\ell_{\tau}(c) = \ell_{n}(c)$.
    \end{itemize}
\end{itemize}

\smallskip\noindent\emph{Base case.}
Let $(\tau,\ell_{\tau}) \in \Omega$ be a labelled $k$-type with $k > 0$;
such a $k$-type exists by $P_{0,\Omega}$.
Define $\str{B}_1$ as the structure with domain $\{ b\} \cup \Cons$ such that the element $b$ realises $\tau$.
The function $\ell_1$ is then defined accordingly to $\ell_{\tau}$.

\smallskip\noindent\emph{Inductive step.}
Let $n \ge 1$.
Assume that $\str{B}_n$ and $\ell_n$ have already been defined.
Choose in a \emph{fair way} an \emph{unsatisfied} pair $(f,t)$, where $f\colon \xs \rightarrow B_n$ and $t \in \cI$;
that is, a pair such that $\str{B}_n,f \models G_t$ but $\str{B}_n,f \not\models \exists y\;\psi_t$.

Let $\bar{b} = (b_1,\dots,b_k)$ enumerate $f(\xs) \setminus \Cons$.
W.l.o.g.~$\{ b_1,\dots,b_k \} = f(\FreeVars(G_t)) \setminus \Cons$,
that is, we discard elements not in the scope of the guard $G_t$ under $f$.
Construct a labelled $k$-type $(\tau,\ell_\tau)$, where
\begin{itemize}
    \item $\tau = \type{\str{B}_n}{b_1,\dots,b_k}$,
    \item for each $i \in [k]$, $\ell_\tau(x_i) = \ell_n(b_i)$,
    \item for each $c \in \Cons$, $\ell_\tau(c) = \ell_n(c)$.
\end{itemize}

Define an assignment $g\colon \xs \rightarrow \operatorname{dom}(\tau)$ in parallel to $f$:
\[
    g(x) =
    \begin{cases}
        \, x_i \quad & \text{if} \quad f(x) = b_i, \\
        \, f(x) \quad & \text{if} \quad f(x) \in \Cons.
    \end{cases}
\]

From the inductive invariant, we have $(\tau,\ell_\tau) \in \Omega$.
Then $P_{3,\Omega}$ yields a labelled $(k{+}1)$-type $(\tau',\ell_{\tau'}) \in \Omega$ canonically extending $(\tau,\ell_{\tau})$ such that $\tau', g \cup \{ y \mapsto x_{k+1} \} \models \psi_t$ holds.

We now construct the structure $\str{B}_{n+1}$ by extending $\str{B}_n$ with a fresh element $b_{k+1}$.
\begin{itemize}
  \item Assign the $(k{+}1)$-type of $(b_1,\dots,b_{k+1})$:
  \[\type{\str{B}_{n+1}}{b_1,\dots,b_{k+1}} \eqdef \tau'. \]
Since $\tau'$ canonically extends $\tau$, this step does not modify the $k$-type $\type{\str{B}_n}{b_1,\dots,b_k}$, i.e., $\str{B}_{n+1}~\restr~B_n = \str{B}_n$.
  \item If there exists $a \in \{ b_1,\dots,b_k \} \cup \Cons$ such that $\str{B}_{n+1} \models a E_1 b_{k+1}$,
        then the corresponding transitive connections need to be added.
        For every $b \in \absclass{\str{B}_n}{1}{a} \setminus (\bar{b} \cup \Cons)$,
        by $P_{4,\Omega}$, there exists a labelled $2$-type $(\beta,\ell_{\beta}) \in \Omega$ such that:
        \[
        \begin{aligned}
        &\type{\beta}{x_1} = \type{\str{B}_{n+1}}{b},
        \qquad
        \type{\beta}{x_2} = \type{\str{B}_{n+1}}{b_{k+1}},
        \\
        &\beta \models x_1 E_1 x_2,
        \qquad
        \ell_{\beta}(x_1) = \ell_{\beta}(x_2) = \ell_{\tau'}(x_{k+1}).
        \end{aligned}
        \]
      Set $\type{\str{B}_{n+1}}{b,b_{k+1}} \eqdef \beta$. 
\end{itemize}

We complete the inductive step by declaring
\[ \ell_{n+1} \eqdef \ell_n \cup \{ b_{k+1} \mapsto \ell_{\tau'}(x_{k+1}) \}.\]

\subparagraph*{Decision procedure.}

Algorithm~\ref{algos2} decides the satisfiability criterion of Lemma~\ref{l:decision-procedure}.
It follows the same principles as Algorithm~\ref{algos} from \autoref{sec:GF3-decision-procedure}, with minor modifications necessitated by the presence of constants and an unbounded number of variables.

\begin{algorithm}[t]
  \caption{Decide satisfiability for equality-free $\GF[\KEQ{1}]$}\label{algos2}
  \KwIn{A normal-form sentence $\phi$ over a signature $\sigma$.}
  \KwOut{\textsc{Sat} if $\phi$ is satisfiable, and \textsc{UnSat} otherwise.}
  $m \gets$ the number of variables in $\phi$\\
  \ForEach{\rm labelled $0$-type $\pi$ over $\sigma$}{
    $\Omega \gets \{$all labelled $k$-types $\tau$ over $\sigma$ such that $0 \le k \le m$ and $\tau \models \pi\}$\;
    \While{\rm there exists $\tau \in \Omega$ violating one of~$P_{1,\Omega}(\tau),\dots,P_{4,\Omega}(\tau)$}{
      choose such a $\tau$\;
      $\Omega \gets \Omega \setminus \{\tau\}$\;
    }
    \lIf{\rm $P_{0,\Omega}$ holds}{\Return \textsc{Sat}}
  }
  \Return \textsc{UnSat}\;
\end{algorithm}

We now analyse its running time.
The procedure runs in time polynomial in the number of labelled types over at most $m$ variables.
Each labelled type is a pair $(\tau,\ell_\tau)$, where $\tau$ is a $k$-type with $k \leq m$ and
\(\ell_\tau \colon \operatorname{dom}(\tau) \to 2^{\AAA}.\)
The number of $k$-types is doubly exponential in~$|\phi|$.
Moreover, this bound is already tight for $k=1$ in the presence of constants and unbounded-arity relation symbols.
Consequently, the number of possible labellings $\ell_\tau$ is triply exponential in~$|\phi|$.
It follows that the total number of labelled types is triply exponential in~$|\phi|$, and therefore the procedure runs in deterministic triply-exponential time.

\subsection{Complexity of Constant-Free and Finite-Variable Fragments}\label{sec:restricted-subfragments}

We now show that constant-free and fixed-variable subfragments of \GF$[\KEQ{K}]$ admit lower complexity bounds,
namely, $(K{+}1)$-\ExpTime{} instead of $(K{+}2)$-\ExpTime{}.

The bottleneck in the running time of Algorithm~\ref{algos2} comes from the number of possible labellings $\operatorname{dom}(\tau) \rightarrow 2^{\AAA}$.
When the number of $1$-types (= the size of $\AAA$) becomes singly exponential instead of doubly exponential, the running time decreases by one exponential level.
In what follows, we show that this is indeed the case.   

Without constants the size of $\AAA$ decreases to singly exponential, yielding immediately the $2$-\ExpTime{} complexity for satisfiability of $\GF[\KEQ{1}]$.
To extend this bound to $\GF[\KEQ{K}]$ with $K \ge 2$,
we use the reduction of \autoref{l:GF3-reduction};
the more succinct reduction of \autoref{l:succinct-reduction} is not applicable for constant-free fragments, as it relies on the succinct axiomatisation of $2$-numerals (described in~\autoref{appendix:GF3-lower}) that requires the presence of constants.

The fixed variable case follows by taking a \emph{$\phi$-relative} set of $1$-types, denoted $\AAA_{\phi}$, instead of the entire space $\AAA$. The set $\AAA_{\phi}$ is defined as follows.

We first introduce an equivalence relation $\sim_{\phi}$ on the set $\AAA$.
Let $V$ denote the set of variables occurring in $\phi$.
Set $\alpha_1 \sim_{\phi} \alpha_2$ if, for every atom $\gamma$ occurring in $\phi$ and every assignment of variables $f\colon V \rightarrow \{ x_1 \} \cup \Cons$,
it holds that
\[ \alpha_1,f \models \gamma \iff \alpha_2,f \models \gamma. \]
Then $\AAA_{\phi}$ includes one representative from every equivalence class in $\AAA/_{\sim_{\phi}}$.
The correctness of this construction is immediate: $\phi$ cannot distinguish any two $\sim_{\phi}$-related $1$-types.
Since $V$ is fixed, the resulting set is of singly exponential size: 
\[|\AAA_{\phi}| = |\AAA/_{\sim_{\phi}}| = 2^{|\phi|\cdot(|\Cons| + 1)^{|V|}}.\]

\section{Undecidability: Two Equivalences without Nesting Constraints}\label{sec:GF-undec}

In this section, we establish the undecidability of the constant-free, equality-free fragment $\GF^3$ extended with two distinguished binary predicates $E$ and $F$, each interpreted as an equivalence relation, but without any nesting condition between them (Theorem~\ref{thm:GFundec}).

More precisely, \autoref{sec:undec-sat} proves undecidability of the general satisfiability problem, and \autoref{sec:undec-finsat} establishes undecidability of finite satisfiability.

\subsection{General Satisfiability}\label{sec:undec-sat}

The reduction is from the \emph{infinite tiling problem}, which is known to be co-\RE{}-complete~\cite{Ber66,GK72}.

A tiling instance is a tuple $\cT = \langle \cC, c_{0}, \cH, \cV \rangle$, where $\cC$ is a non-empty, finite set of \emph{colours},  
$c_{0} \in \cC$ is the \emph{initial} colour, and $\cH, \cV \subseteq \cC \times \cC$ are the sets of \emph{horizontal} and \emph{vertical} constraints, respectively.  
The infinite tiling problem is to decide if the instance $\cT$ is \emph{solvable}, i.e.,\ if there exists a \emph{tiling} function $f\colon \N \times \N \to \cC$ such that the following holds:
\begin{itemize}
  \item The initial condition holds: $f(0,0) = c_0$.
  \item The horizontal constraints hold: for all $i,j \in \N$, we have
	\[\langle f(i,j), f(i + 1,j) \rangle \in \cH. \]
  \item The vertical constraints hold: for all $i,j \in \N$, we have 
  \[ \langle f(i,j), f(i,j + 1) \rangle \in \cV. \]
\end{itemize}

We first describe the two-way infinite grid $\Z \times \Z$;
then define the grid $\N \times \N$;
and finally express the tiling conditions.

The standard model for the grid $\Z \times \Z$ is the structure $\str{G}_\Z$ over the signature $\sigma = \{ E, F, C, R \}$,
where
\begin{itemize}
  \item $E$ and $F$ are the distinguished equivalence symbols;
  \item $C$ and $R$ are auxiliary unary symbols marking even columns and even rows, respectively.
\end{itemize}

The domain of $\str{G}_\Z$ is the set $\Z \times \Z$, and the interpretation of symbols is as follows:
\begin{itemize}
  \item $E^{\str{G}_Z}$ is the equivalence closure of the set
  \[ \big\{ \big\langle(2 i,2 j), (2 i + d_1, 2 j + d_2)\big\rangle \bigmid i,j \in \Z, \, d_1,d_2 \in \{ 0, 1 \} \big\}. \]
  \item $F^{\str{G}_Z}$ is the equivalence closure of the set
  \[ \big\{ \big\langle(2 i,2 j), (2 i - d_1, 2 j - d_2)\big\rangle \bigmid i,j \in \Z, \, d_1,d_2 \in \{ 0, 1 \} \big\}. \]
  \item $C^{\str{G}_Z} \eqdef \{ (i,j) \in \Z^2 \mid i \text{ is even}\}$.
  \item $R^{\str{G}_Z} \eqdef \{ (i,j) \in \Z^2 \mid j \text{ is even}\}$.
\end{itemize}

Note that $E^{\str{G}_Z}$ and $F^{\str{G}_Z}$ partition the domain into $2 \times 2$ squares; where every $E$-class intersects precisely four distinct $F$-classes: namely, every $E$-class contains precisely one element for every combination of literals: $\{ C, R \}$, $\{ \neg C, R \}$, $\{ C, \neg R \}$, and $\{ \neg C, \neg R \}$.
The analogous property holds also when the roles of $E$ and $F$ are swapped.
These properties allow us to capture horizontally and vertically neighbouring elements as follows.
\begin{itemize}
  \item The formula $\lambda_H(x,y)$ defines that $y$ is a horizontal successor of $x$ in the structure $\str{G}_\Z$: 
	\begin{flalign*}
	\lambda_H(x,y) \eqdef &\Big(\big(R(x) \leftrightarrow R(y)\big) \wedge C(x) \wedge \neg C(y) \wedge x E y \wedge \neg x F y \Big)\;\vee \\
	&\Big(\big(R(x) \leftrightarrow R(y)\big) \wedge \neg C(x) \wedge C(y) \wedge \neg x E y \wedge x F y \Big).
	\end{flalign*}
  \item The formula $\lambda_V(x,y)$, specifying that $y$ is a vertical successor of $x$, is defined via symmetry:
	\begin{flalign*}
	\lambda_V(x,y) \eqdef &\Big(\big(C(x) \leftrightarrow C(y)\big) \wedge R(x) \wedge \neg R(y) \wedge x E y \wedge \neg x F y \Big)\;\vee \\
	&\Big(\big(C(x) \leftrightarrow C(y)\big) \wedge \neg R(x) \wedge R(y) \wedge \neg x E y \wedge x F y \Big).
	\end{flalign*}
\end{itemize}

Define $\str{G}_\N \,\eqdef\, \str{G}_\Z ~\restr~ \N \times \N$.
In the following, we axiomatise its key properties.

First, every element has a horizontal and vertical successor:
\begin{flalign}
\label{eq:undec-first}
\forall x\;\exists y\; \lambda_H(x,y) \quad\wedge\quad \forall x\;\exists y\;\lambda_V(x,y) 
\end{flalign}
The horizontal and vertical successors properly commutes:
\begin{flalign}\label{eq:undec-commutes}
  \bigwedge_{G \in \{ E,F \}}\forall x_1,x_2\; \Big(&\big(G(x_1,x_2) \wedge \lambda_V(x_1,x_2)\big) \rightarrow \nonumber \\
  &\exists y_1,y_2\;\big(\lambda_H(x_1,y_1) \wedge \lambda_H(x_2,y_2) \wedge \lambda_V(y_1,y_2)\big)\Big)
\end{flalign}
Since $\lambda_V(x_1,x_2) \models x_1 E x_2 \vee x_1 F x_2$,
the predicate $G \in \{ E, F\}$ is in fact a vacuous guard; added only to comply with the syntax of \GF{}.

Although formula~\eqref{eq:undec-commutes} uses $4$ variables, it is equivalent to one with $3$ variables: there are no interactions between pairs of variables $(x_1,y_2)$ and $(x_2,y_1)$.
Recall that we have already employed a similar trick to rewrite formulas~\eqref{eqn:three-variable-trik1}--\eqref{eqn:three-variable-trik2} in \autoref{sec:GF3-large-counters}.

We now express that every $E$-class (resp. $F$-class) touches at most four neighbouring $F$-classes (resp. $E$-classes).
More precisely, we enforce that the respective north-east/north-west/south-west/south-east neighbour, if exists, is unique:
\begin{flalign}
  &\forall x_1,x_2\;\big(\big(x_1 E x_2 \wedge \mu(x_1,x_2)\big) \rightarrow x_1 F x_2\big) \label{eq:uses-mu1} \\
  &\forall x_1,x_2\;\big(\big(x_1 F x_2 \wedge \mu(x_1,x_2)\big) \rightarrow x_1 E x_2\big) \label{eq:uses-mu2} \\
  &\forall x_1,x_2\;\big(\big(x_1 E x_2 \wedge x_1 F x_2\big) \rightarrow \mu(x_1,x_2) \big)
  \label{eq:undec-last}
\end{flalign}
Above, $\mu(x_1,x_2)$ abbreviates the formula
\[ \mu(x_1,x_2) \eqdef \big(R(x_1) \leftrightarrow R(x_2)\big) \wedge \big(C(x_1) \leftrightarrow C(x_2)\big). \]

The axioms \eqref{eq:undec-first}--\eqref{eq:undec-last} entail the existence of a homomorphic copy of $\str{G}_{\N}$:
\begin{claim}\label{claim:infinite-grid}
  Let $\Theta_\cG$ be the conjunction of~\eqref{eq:undec-first}--\eqref{eq:undec-last}.
  Then $\Theta_\cG \models \str{G}_\N$.
  Moreover, for every model $\str{A}\models\Theta_\cG$ and for every element $a \in C^{\str{A}} \cap R^{\str{A}}$, there exists a homomorphism $h_a\colon \str{G}_\N \rightarrow \str{A}$ such that $h(0,0) = a$.
\end{claim}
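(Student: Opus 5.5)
The claim has two parts: $\str{G}_\N$ satisfies $\Theta_\cG$, and every model contains a homomorphic image of $\str{G}_\N$ rooted at any $a \in C^{\str{A}} \cap R^{\str{A}}$. For the first part I will check the axioms directly in $\str{G}_\N$. In $\N\times\N$ every $E$-class is a full $2\times 2$ square $\{2i,2i{+}1\}\times\{2j,2j{+}1\}$. The $F$-classes are the shifted squares; those on the two boundary lines are truncated.

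\begin{itemize}
\item Every point $(i,j)$ has the horizontal successor $(i{+}1,j)$ and the vertical successor $(i,j{+}1)$. By the parity of $i$ (resp.\ $j$), these satisfy the correct disjunct of $\lambda_H$ (resp.\ $\lambda_V$). This gives \eqref{eq:undec-first}.
\item For \eqref{eq:undec-commutes}, given a vertical pair $(i,j),(i,j{+}1)$, I take the witnesses $(i{+}1,j),(i{+}1,j{+}1)$.
\item For \eqref{eq:uses-mu1}--\eqref{eq:undec-last}, I use that each (possibly truncated) $E$- or $F$-class contains at most one point of each parity pattern $(C,R)$. So within a class, $\mu$ holds only for equal points, and $E\cap F$ is the identity relation.
\end{itemize}

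For the homomorphism, fix $\str{A}\models\Theta_\cG$ and $a\in C^{\str{A}}\cap R^{\str{A}}$. Since $E,F$ are equivalence relations and the grid's $E$, $F$ are equivalence closures of adjacency edges, it suffices to build $h$ such that
\[
\str{A}\models\lambda_H(h(i,j),h(i{+}1,j)) \quad\text{and}\quad \str{A}\models\lambda_V(h(i,j),h(i,j{+}1))
\]
for all $i,j$, with $h(0,0)=a$.

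\begin{itemize}
\item Each $\lambda$ fixes the parity literals $C,R$ of its endpoints from those of the first one. Starting from $a\in C\cap R$, induction then gives $h(i,j)\in C$ iff $i$ is even, and $h(i,j)\in R$ iff $j$ is even, so $C$ and $R$ are preserved.
\item The disjunct of $\lambda$ that applies is forced by these literals. Hence each grid adjacency edge is sent to an $E$-edge or an $F$-edge exactly when it is one in $\str{G}_\Z$.
\item By transitivity of $E^{\str{A}}$ and $F^{\str{A}}$, the equivalence closures are preserved.
\end{itemize}

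The construction goes column by column. Column $0$ is a $\lambda_V$-chain from $a$, obtained from \eqref{eq:undec-first}. Given column $i$, I apply \eqref{eq:undec-commutes} to each vertical pair $(h(i,j),h(i,j{+}1))$; the guard $G$ is available since $\lambda_V$ gives $E$ or $F$. This yields horizontal successors $y^j_1,y^j_2$ with $\lambda_V(y^j_1,y^j_2)$.

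The main obstacle is gluing: $y^j_2$ and $y^{j+1}_1$ are both $\lambda_H$-successors of $h(i,j{+}1)$ and need not coincide. I resolve this as follows.

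\begin{itemize}
\item Both have the same $C,R$ literals, since $\lambda_H$ forces them from those of $h(i,j{+}1)$. So $\mu(y^j_2,y^{j+1}_1)$ holds.
\item Both are related to $h(i,j{+}1)$ by the same equivalence, $E$ or $F$, again determined by $C(h(i,j{+}1))$. So they are related to each other by that equivalence.
\item By \eqref{eq:uses-mu1} or \eqref{eq:uses-mu2}, they are then related by both $E$ and $F$.
\end{itemize}

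Elements related by $E\cap F$ and agreeing on $C,R$ are interchangeable for every atom of $\lambda_H,\lambda_V$: transitivity transfers the $E$- and $F$-facts, and no other binary symbols occur. So I set $h(i{+}1,0)=y^0_1$ and $h(i{+}1,j{+}1)=y^j_2$. Then $\lambda_V(h(i{+}1,j),h(i{+}1,j{+}1))$ holds because $\lambda_V(y^{j}_1,y^{j}_2)$ holds and $y^j_1$ may be replaced by the equivalent chosen image $h(i{+}1,j)$; here $h(i{+}1,j)$ equals $y^{j-1}_2$ for $j\ge 1$ and equals $y^0_1$ for $j=0$. The horizontal edges hold by choice. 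Induction over $i$, using dependent choice, yields $h$.
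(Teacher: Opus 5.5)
The paper states this claim without proof, so there is no argument of its own to compare against. Your proof is correct and supplies one.

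It is worth noticing that, read literally, the claim is trivial. The constant map $h\equiv a$ is already a homomorphism $\str{G}_\N\to\str{A}$, because $E^{\str{A}}$ and $F^{\str{A}}$ are reflexive and $a\in C^{\str{A}}\cap R^{\str{A}}$. What the tiling reduction in Claim~\ref{claim:infinite-tiling-sat} actually needs is your stronger conclusion:
\begin{itemize}
\item $h$ preserves the $C,R$-literals exactly, and
\item $\lambda_H(h(i,j),h(i{+}1,j))$ and $\lambda_V(h(i,j),h(i,j{+}1))$ hold for all $i,j$.
\end{itemize}
This matters because the hereditary axioms fire only through $\lambda_H$ and $\lambda_V$, which contain negative literals that an arbitrary homomorphism need not respect. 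Your column-by-column construction delivers this stronger property. The key step is the gluing of the two $\lambda_H$-successors: they are related by $E$ or by $F$, then by both via \eqref{eq:uses-mu1}/\eqref{eq:uses-mu2}. Such $E\cap F$-related elements with equal $C,R$-literals are interchangeable in $\lambda_H$ and $\lambda_V$.

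Two small points should be made explicit.
\begin{itemize}
\item For \eqref{eq:undec-commutes} in $\str{G}_\N$, you must check that $\lambda_V(x_1,x_2)$ holds \emph{only} when $x_2=x_1+(0,1)$. Otherwise you have not covered every instance of the antecedent. This follows from the same one-point-per-$(C,R)$-pattern observation, together with the fact that each disjunct fixes which of the two points carries $R$.
\item In the homomorphism part, you need that each $F$-class of $\str{G}_\N$, including the truncated ones along the two axes, is connected by $F$-type adjacency edges inside $\N\times\N$. This is true, but it is what justifies reducing preservation of $E$ and $F$ to the adjacency edges.
\end{itemize}
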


Having the sentence $\Theta_\cG$, we express the infinite tiling problem:
fix an instance $\cT = \langle \cC, c_{0}, \cH, \cV \rangle$.
Introduce for each colour $c\in \cC$ a unary predicate $P_c$.
We express the initial condition and the uniqueness of colours:
\begin{flalign}\label{eqn:inf-first-tiling}
  &\exists x\;P_{c_0}(x) \; \wedge\; \forall x\;\bigvee_{c \in C} P_c(x)
\end{flalign}
\begin{flalign}\label{eqn:specjalnejtroski}
  &\forall x_1,x_2\;\Big(\big(x_1 E x_2 \wedge x_1 F x_2\big) \rightarrow \bigwedge_{\substack{c,c' \in \cC \\ c \neq c'}} \big(\neg P_c(x_1) \vee \neg P_{c'}(x_2)\big)\Big)
\end{flalign}
The hereditary conditions are easy to express:
\begin{flalign}
  \bigwedge_{G \in \{ E,F \}} \forall x_1,x_2\; \Big(&G(x_1,x_2) \rightarrow \nonumber \\
  &\Big(\lambda_H(x_1,x_2) \rightarrow \bigvee_{\langle c,c' \rangle \in \cH} \big(P_c(x_1) \wedge P_{c'}(x_2)\big)\Big) \; \wedge \nonumber \\
  &\Big(\lambda_V(x_1,x_2) \rightarrow \bigvee_{\langle c,c' \rangle \in \cV} \big(P_c(x_1) \wedge P_{c'}(x_2)\big)\Big)\Big)\label{eqn:inf-last-tiling}
\end{flalign}

\begin{claim}\label{claim:infinite-tiling-sat}
  Let $\Theta_{\mathcal{T}}$ be the conjunction of~\eqref{eqn:inf-first-tiling}--\eqref{eqn:inf-last-tiling}.
  Then $\cT$ is solvable if and only if $\Theta_{\cG} \wedge \Theta_{\mathcal{T}}$ is satisfiable.
\end{claim}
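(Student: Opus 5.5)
The plan is to prove the two directions of Claim~\ref{claim:infinite-tiling-sat} separately, relying on Claim~\ref{claim:infinite-grid}.

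\emph{Solvable implies satisfiable.} Given a tiling $f\colon \N\times\N\to\cC$, we take the structure $\str{G}_\N$ and expand it by setting $P_c \eqdef \{(i,j) \mid f(i,j)=c\}$.

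First we check $\Theta_\cG$. Every element $(i,j)$ has its horizontal successor $(i{+}1,j)$ and vertical successor $(i,j{+}1)$ inside $\N\times\N$, so \eqref{eq:undec-first} holds. The successor relations commute, which gives \eqref{eq:undec-commutes}. For \eqref{eq:uses-mu1}--\eqref{eq:undec-last}, two distinct elements that are both $E$- and $F$-related would have to lie in the same $2\times 2$ block of both partitions. The two partitions are offset by one, so their blocks intersect only in singletons. Hence $E\cap F$ is the identity relation, and $\mu$ holds trivially on it. Conversely, suppose $x E y$ and $\mu(x,y)$. Inside an $E$-block, the four elements have pairwise distinct $(C,R)$-parities, so $\mu$ forces $x=y$. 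The same argument works for $F$-blocks.

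Next we check $\Theta_\cT$. It holds because $f(0,0)=c_0$, each element carries exactly one colour, and $\lambda_H$, $\lambda_V$ define exactly the horizontal and vertical neighbours in $\str{G}_\Z$. This last point should be checked directly from the definitions of $E$, $F$, $C$, $R$ via the parity cases.

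\emph{Satisfiable implies solvable.} Let $\str{A}\models\Theta_\cG\wedge\Theta_\cT$. We aim to pull a tiling back along the homomorphism of Claim~\ref{claim:infinite-grid}.

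We first need a starting element $a\in C^{\str{A}}\cap R^{\str{A}}$ coloured $c_0$. Formula \eqref{eqn:inf-first-tiling} only guarantees some element $a'$ with $P_{c_0}(a')$, and $a'$ might have the wrong parity. There are two ways to handle this. The first is to strengthen the initial axiom to $\exists x\,(P_{c_0}(x)\wedge C(x)\wedge R(x))$, a harmless modification. The second is to observe that from any element one can move by $\lambda_H$ and $\lambda_V$ steps to a $(C,R)$ element. In that case the tiling must instead be shifted, which is awkward, so I would adopt the first fix (or read the axiom that way).

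Take $h=h_a$ and define $f(i,j)$ to be some colour $c$ with $P_c(h(i,j))$; such a colour exists by \eqref{eqn:inf-first-tiling}. Then $f(0,0)=c_0$ provided $h(0,0)$ has exactly one colour. Uniqueness of colours at a single element follows from \eqref{eqn:specjalnejtroski} applied with $x_1=x_2$, which is allowed because $E$ and $F$ are reflexive.

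For the horizontal constraint, $\str{G}_\N\models\lambda_H((i,j),(i{+}1,j))$. The main obstacle is that a homomorphism preserves only positive atoms, while $\lambda_H$ contains negated $E$/$F$ atoms and negated $C$, $R$ literals. So we need the homomorphism to preserve $\lambda_H$ and $\lambda_V$ themselves, not merely the atoms. I expect Claim~\ref{claim:infinite-grid} to be proved by building $h$ step by step with $\lambda_H$/$\lambda_V$ witnesses from \eqref{eq:undec-first} and closing squares using \eqref{eq:undec-commutes}. If so, $h$ can be taken to map grid edges to $\lambda$-edges; I would state this strengthened form of the claim explicitly and extract it from the construction.

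With that in hand, $h(i,j)$ and $h(i{+}1,j)$ satisfy $\lambda_H$. Hence they are $E$- or $F$-related, which supplies the guard $G$, and \eqref{eqn:inf-last-tiling} yields a pair in $\cH$ between their colours. Since each element has a unique colour, this pair is $\langle f(i,j),f(i{+}1,j)\rangle$. The vertical constraint is handled the same way with $\lambda_V$ and $\cV$.
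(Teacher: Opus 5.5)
Your overall route is the intended one: the paper states this claim without proof, as a consequence of Claim~\ref{claim:infinite-grid}. Your forward direction is fine. The converse, however, is not established as stated, for two reasons.

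\textbf{The initial axiom.} You prove the converse only for a modified $\Theta_\cT$. Strengthening $\exists x\,P_{c_0}(x)$ to $\exists x\,(P_{c_0}(x)\wedge C(x)\wedge R(x))$ changes the claim. Your alternative, walking from the $P_{c_0}$-element to a $(C,R)$-element and shifting, does not work either: the origin of the resulting tiling carries the colour of the new element, not $c_0$. The parity obstacle is only an artefact of insisting on a map from $\str{G}_\N$. What you actually need is a map $g\colon\N\times\N\to A$ with $\lambda_H(g(i,j),g(i{+}1,j))$ and $\lambda_V(g(i,j),g(i,j{+}1))$ for all $i,j$. Since \eqref{eq:undec-first} supplies $\lambda_H$- and $\lambda_V$-successors of every element whatever its parity, such a map can start at the $P_{c_0}$-element itself.

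\textbf{The $\lambda$-preserving map.} Its existence is exactly the part you defer, and the construction you anticipate would not work as described. You are right that a plain homomorphism is useless. Indeed the constant map onto $a$ is already a homomorphism $\str{G}_\N\to\str{A}$, so Claim~\ref{claim:infinite-grid} must be read in a stronger sense. But building $g$ step by step and closing squares with \eqref{eq:undec-commutes} is not consistent. That axiom yields \emph{some} $y_1,y_2$, and $y_1$ need not be the element already chosen as $g(i{+}1,j)$.

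The missing idea is what \eqref{eq:uses-mu1}--\eqref{eq:undec-last} are for.
\begin{itemize}
  \item By \eqref{eq:undec-last} and transitivity of $E$ and $F$, the relations $\lambda_H$ and $\lambda_V$ are invariant under the equivalence $E\cap F$ in both arguments.
  \item By \eqref{eq:uses-mu1} (successors reached via $E$) and \eqref{eq:uses-mu2} (via $F$), any two $\lambda_H$-successors of an element are $E\cap F$-related, and likewise any two $\lambda_V$-successors.
  \item The colour axiom guarded by $x_1 E x_2\wedge x_1 F x_2$ makes colours constant on $E\cap F$-classes.
\end{itemize}
Hence on $A/(E\cap F)$ the successor relations are total functions $s_H,s_V$. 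Axiom \eqref{eq:undec-commutes}, whose guard is supplied by $\lambda_V$, then gives $s_H s_V=s_V s_H$. Choose $a'$ with $P_{c_0}(a')$ and let $g(i,j)$ be any element of $s_H^i s_V^j([a'])$. This is the required map. Your final step then completes the proof of the claim as stated: an $E$- or $F$-guard from $\lambda_H$ or $\lambda_V$, followed by \eqref{eqn:inf-last-tiling} and uniqueness of colours.
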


\subsection{Finite Satisfiability}\label{sec:undec-finsat}

The undecidability of finite satisfiability follows from the \emph{toroidal-grid tiling problem};
known to be \RE{}-complete.
This problem is essentially the infinite tiling problem but with added constraint on the tiling function $f\colon \N \times \N \rightarrow \cC$. Namely, we look for solutions such that there exist $n,m \ge 1$ satisfying
\[ \text{$f(i,j) = f(i + n, j) = f(i,j + m)$ for all $i,j \in \N$.} \]

Observe that whenever the tiling function $f$ satisfies the constraint for $n$ and $m$, then it also does so for $2n$ and $2m$.
Let $\str{G}_{2n,2m}$ denote the quotient structure of $\str{G}_\Z$ via
\[ (i,j) \mapsto (i \operatorname{mod} 2n,j \operatorname{mod} 2m).\]

The sentence $\Theta_\cG$, defined in Claim~\ref{claim:infinite-grid}, models also the \emph{toroidal grid} $\str{G}_{2n,2m}$.
\begin{claim}\label{claim:toroidal-grid}
  It holds that $\Theta_\cG \models \str{G}_{2n,2m}$ for all $n,m \in \N$.
  Moreover, for every finite model $\str{A}\models\Theta_\cG$ and every element $a \in C^{\str{A}} \cap R^{\str{A}}$, there exists a homomorphism $h_a\colon \str{G}_{2n,2m} \rightarrow \str{A}$ for some $n,m \ge 1$ such that $h(0,0) = a$.
\end{claim}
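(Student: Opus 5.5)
The plan has two parts: checking that $\Theta_\cG$ holds in the toroidal grids, and building homomorphisms from them into finite models.

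\emph{Part 1: $\Theta_\cG$ holds in $\str{G}_{2n,2m}$.} The reduction $(i,j)\mapsto(i \operatorname{mod} 2n, j \operatorname{mod} 2m)$ preserves the parities of both coordinates, so it preserves $C$ and $R$. It also maps the $2\times 2$ $E$-squares and $F$-squares of $\str{G}_\Z$ onto the corresponding squares of the torus. Every axiom \eqref{eq:undec-first}--\eqref{eq:undec-last} is local: it mentions at most a pair of neighbouring squares. So, as long as the torus is large enough that no $E$-square coincides with an $F$-square, each axiom can be checked exactly as in $\str{G}_\Z$. This holds for $2n,2m\ge 4$.

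For $n=1$ or $m=1$ the $E$- and $F$-classes collapse into one another, and \eqref{eq:undec-last} may fail. In that case I would either restrict to $n,m\ge 2$, or use the remark preceding the claim that periods can be doubled. The second statement of the claim only needs some $n,m\ge 1$, so taking even periods $\ge 4$ is harmless there.

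\emph{Part 2, step 1: pass to cells.} Let $\str{A}\models\Theta_\cG$ be finite and $a\in C^{\str{A}}\cap R^{\str{A}}$. I will work with \emph{cells}, that is, classes of $E^{\str{A}}\cap F^{\str{A}}$. By \eqref{eq:undec-last}, all elements of a cell agree on $C$ and $R$. I will show that horizontal succession induces a function $h_H$ on cells. Suppose $y,y'$ are both horizontal successors of $x$, in the case $C(x)$. Then $xEy$ and $xEy'$, so $yEy'$, and $\mu(y,y')$ holds. Hence \eqref{eq:uses-mu1} gives $yFy'$, so $y,y'$ lie in the same cell. The case $\neg C(x)$ is symmetric, using $F$ and \eqref{eq:uses-mu2}. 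The same argument shows that the successor does not depend on the representative $x$ chosen in a cell. By the mirror argument, two cells with a common horizontal successor coincide, so $h_H$ is injective. Totality follows from \eqref{eq:undec-first}. Hence $h_H$ is a permutation of the finite set of cells. The same holds for $h_V$.

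\emph{Part 2, step 2: commutation.} Next I use \eqref{eq:undec-commutes} to show $h_H h_V = h_V h_H$ on cells. Given $x_1$ and a vertical successor $x_2$, it provides $y_1,y_2$ with $y_1=h_H(x_1)$, $y_2=h_H(x_2)$ and $y_2$ a vertical successor of $y_1$, all up to cells. This step is where I expect the real difficulty: the uniqueness argument above has to be applied at the level of cells, so that the existential witnesses can be replaced by the functional $h_H,h_V$.

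\emph{Part 2, step 3: periodicity and the homomorphism.} Since $h_H$ and $h_V$ are permutations of a finite set, there are $p,q\ge 1$ with $h_H^{p}=h_V^{q}=\mathrm{id}$. Horizontal succession flips $C$, so $p$ is even, and likewise $q$ is even. Doubling if necessary, I write $p=2n$ and $q=2m$ with $n,m\ge 2$. I then define $h_a(i,j)$ to be any element of the cell $h_H^{i}h_V^{j}([a])$, choosing the element $a$ itself at $(0,0)$. This map is well defined on $\Z_{2n}\times\Z_{2m}$ by commutation and periodicity, and it preserves $C$ and $R$.

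To see that $h_a$ preserves $E$, consider two points $u,v$ of one $E$-square of the torus. They are joined by a short chain of $\lambda_H$/$\lambda_V$ steps, each of which is an $E$-edge, interleaved with moves inside a cell, which are also $E$-edges. Transitivity of $E^{\str{A}}$ then gives $h_a(u)\,E\,h_a(v)$. The argument for $F$ is the same.
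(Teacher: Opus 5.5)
Your proof is correct. The paper states this claim without proof, so there is nothing to compare your route against. Your argument fills that gap in the natural way: pass to $E\cap F$-cells, show that $\lambda_H$ and $\lambda_V$ induce permutations of the finite set of cells, show they commute via \eqref{eq:undec-commutes}, and read off even periods from the $C$/$R$ flips.

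You are also right that the first assertion is false as literally stated. In $\str{G}_{2,2m}$ the points $(0,0)$ and $(1,0)$ are both $E$- and $F$-related, so \eqref{eq:undec-last} fails and $(0,0)$ has no $\lambda_H$-successor. The claim should require $n,m\ge 2$. This costs nothing in the reduction, since periods can be multiplied.

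Two minor points.

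In Part~1, the relevant condition is that every $E$-square meets every $F$-square in at most one point. It is not enough that no $E$-square coincides with an $F$-square: for $n=1$, $m\ge 2$ they never coincide, yet they share two points. Your conclusion $2n,2m\ge 4$ is nevertheless the right one.

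Step~2 is not where the difficulty lies. Your Step~1 argument also shows that $\lambda_H(x,y)$ implies $\lambda_H(x',y')$ for all $x'$ in the cell of $x$ and all $y'$ in the cell of $y$. The reason is that cells lie inside both $E$- and $F$-classes, so the literals $xEy$, $\neg xFy$, and the $C$/$R$ literals are invariant under moving inside cells. Hence \eqref{eq:undec-commutes}, whose guard is vacuous, gives $h_H h_V = h_V h_H$ at once.

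The same invariance drives your final chain argument. It combines with preservation of $C$ and $R$, which forces the $E$-disjunct of $\lambda_H$ and $\lambda_V$ for every step inside an $E$-square (and the $F$-disjunct inside an $F$-square).
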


The rest of the proof follows by the same argument as in the case of general satisfiability.

\section{Triguarded Fragment with Nested Equivalences}\label{sec:TGF-complexity}

\begin{proposition}
  The equality-free \TGF{}$[\EQ]$ enjoys the finite model property
  and has a \TOWER{}-complete satisfiability problem. 

  For every $K \ge 1$, the satisfiability problem for the equality-free $\TGF[\KEQ{K}]$ is $(K{+}2)$-\NExpTime{}-complete.
  This complexity bound decreases to $(K{+}1)$-\NExpTime{}-complete whenever either
  \begin{itemize}
    \item constants are disallowed, or
    \item the number of variables $m \ge 3$ is fixed.
  \end{itemize}
\end{proposition}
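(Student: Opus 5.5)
The plan is to transfer the whole machinery of Section~\ref{sec:GF-upper} and Appendix~\ref{appendix:GF-upper} from \GF{} to \TGF{}, replacing the final deterministic type-elimination step by a nondeterministic guess-and-check, and to reuse the lower bounds of Theorem~\ref{thm:GF3lower} unchanged.

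For the lower bounds nothing new is needed for \TOWER{}-hardness, since $\GF[\EQ] \subseteq \TGF[\EQ]$. For the nondeterministic classes I would modify the construction of Subsection~\ref{sec:GF3-complexity}: instead of alternating machines in $K$-exponential space, I would encode $(K{+}1)$- or $(K{+}2)$-exponential-size tilings, equivalently nondeterministic machines running in $(K{+}1)$-exponential time. The counters $\phi_{K,n}$ of Subsection~\ref{sec:GF3-large-counters} (or their succinct variant from Appendix~\ref{appendix:GF3-lower}) give a $K$-exponential coordinate inside each $E_K$-class. The extra exponential comes from the cross-product power of \TGF{}: a two-variable subformula $\forall x\,\psi(x,y)$ lets us relate elements of \emph{different} $E_K$-classes without a guard. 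This is exactly how \GFU{} achieves \NExpTime{}-hardness in~\cite{RS18}, namely by letting the universal relation give an exponential grid with two exponential coordinates. Concretely, I would take the two grid coordinates to be the $K$-numeral of the $E_{K-1}$-class and an additional numeral carried by the whole $E_K$-class, built one level higher. The tiling constraints between neighbouring $E_K$-classes are then enforced via unguarded $\forall x\,\forall y$ with uniqueness of each coordinate. This step, encoding a grid whose size is one exponential beyond the number of $E_K$-classes forced so far, is where I expect most of the detail. I would mirror the \NExpTime{}-hardness of $\GFU^3$ with $E_{K+1}$ acting as $U$.

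For the upper bounds and the finite model property, I would first rewrite a \TGF{}-sentence as a \GFU{}-sentence, that is, as $\GF$ with a distinguished universal symbol $U$. I would then observe that Lemma~\ref{l:GF-fin-property} still holds when $U$ is allowed as a guard. Its proof only needs the pattern-copying of Lemma~\ref{l:patterns-full}, and a $U$-guarded pair is handled like an $E_K$-unrelated pair: in the inductive step we must also copy the $2$-types of all pairs $(b,b_{k+1})$, not just $E_K$-related ones. To do this we take $\rho$ to be the identity on them, exactly as was done for $E_K$-related pairs. Lemma~\ref{l:succinct-reduction} (respectively Lemma~\ref{l:GF3-reduction} without constants) then eliminates $E_1,\dots,E_K$ down to a $\GFU$-sentence, where the top level $E_K$ itself disappears by relativising with $U$. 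That sentence has $K$-exponential length, or $(K{-}1)$-exponential length in the constant-free or fixed-variable cases, following Subsection~\ref{sec:restricted-subfragments}. The finite model property then follows from~\cite{KR21}. Finally, I would apply the \NExpTime{} procedure for $\GFU^m$ or the $2$-\NExpTime{} procedure for \GFU{}~\cite{RS18} to the reduced sentence. Composing a $K$-exponential blow-up with these yields $(K{+}1)$-\NExpTime{} and $(K{+}2)$-\NExpTime{}, since the reduction is deterministic and runs in time polynomial in its output.
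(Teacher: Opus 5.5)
Your lower-bound construction, as concretely described, does not reach the claimed classes. Your cells are $E_{K-1}$-classes, with coordinates the $K$-numeral of the cell and a $(K{+}1)$-numeral of the enclosing $E_K$-class. That gives a rectangle of width $\tet{K}{n}$ and height $\tet{K+1}{n}$, or $\tet{K+1}{n}\times\tet{K+2}{n}$ with the succinct base. Tiling such a rectangle can be decided by guessing it row by row in nondeterministic space $O(\tet{K}{n})$, so you would only be reducing from a problem in $K$-\ExpSpace{}. That is weaker even than the $(K{+}1)$-\ExpTime{}-hardness already inherited from $\GF$. Rows of length $\tet{K}{n}$ cannot hold the configurations of a machine running for $(K{+}1)$-exponentially many steps.

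What is needed is two independent $(K{+}1)$-numerals on every $E_K$-class, so that the $E_K$-classes themselves are the cells of a $\tet{K+1}{n}\times\tet{K+1}{n}$ torus. The paper does exactly this:
\begin{itemize}
  \item it makes a fresh free symbol $E_{K+1}$ universal via the \TGF{}-sentence $\forall x,y\; x E_{K+1} y$, which is the only use of the cross-product power;
  \item it keeps $\phi^{(1)}_{K+1,n},\dots,\phi^{(K)}_{K+1,n}$ and replaces the top level by two copies $\phi^{(K+1)}_X$, $\phi^{(K+1)}_Y$, both using $E_{K-1}$-classes as bits;
  \item it drops \eqref{phik-last} in favour of $\forall x_1,x_2\,\big((Q_X(x_1,x_2)\wedge Q_Y(x_1,x_2))\rightarrow x_1E_Kx_2\big)$;
  \item it demands $(S_X\wedge Q_Y)$- and $(Q_X\wedge S_Y)$-successors and writes the tiling constraints under these guards.
\end{itemize}
Your closing remark about letting $E_{K+1}$ play the role of $U$ points in this direction, but the coordinates you chose do not implement it.

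On the upper side your route is essentially the paper's. Re-proving the finite-index lemma with $U$ as a guard, by copying all $2$-types of pairs $(b,b_{k+1})$ from $\fA$, is what the paper gets for free by relativising with $E_{K+1}$ and reusing its lemmas. Invoking the procedures of~\cite{RS18} is an acceptable substitute for the paper's model-size bounds plus guess-and-verify.

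The length bookkeeping, however, breaks the constant-free case. Eliminating $E_1,\dots,E_K$ below $U$ takes $K$ steps, and each step of Lemma~\ref{l:GF3-reduction} costs one exponential. So you get a constant-free \GFU{}-sentence of $K$-exponential length, and the $2$-\NExpTime{} procedure yields only $(K{+}2)$-\NExpTime{}. The paper notes that constant-free \GFU{} can still force doubly exponential models, so no black-box improvement is available.

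The missing idea is this. Without constants $|\AAA|$ is singly exponential, so the index bound for $E_1$ in $E_2$ is only doubly exponential. The \emph{first} elimination can then be done with polynomial blow-up using the succinct $2$-numerals of Appendix~\ref{appendix:GF3-lower}, at the price of introducing the constants $0,1$. The remaining $K{-}1$ eliminations then produce a $(K{-}1)$-exponential \GFU{}-sentence with constants, whence $(K{+}1)$-exponential models and $(K{+}1)$-\NExpTime{}.

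Your claim of $(K{-}1)$-exponential length in the fixed-variable case is also wrong. Combined with the \NExpTime{} bound for $\GFU^m$ it would give $K$-\NExpTime{}, contradicting the lower bound. The correct length there is $K$-exponential, which is what your final sentence actually uses.
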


\subparagraph*{Upper bound.}

In \autoref{sec:FMP-full}, we established the finite model property for $\GF[\KEQ{K}]$ by first embedding $\GF[\KEQ{K}]$ into $\GF[\KEQ{(K{+}1)}]$ via quantifier relativisation by $E_{K+1}$, and then reducing to the fragment $\GFU$ (Lemma~\ref{l:GF3-reduction}), which has the same expressive power as \TGF{}.
The key observation underlying the reduction is that $E_{K+1}$ can be replaced by the universal relation~$U$.

Now observe that $\TGF[\KEQ{K}]$ likewise embeds into $\GF[\KEQ{(K{+}1)}]$ by the same relativisation argument.
Consequently, Lemma~\ref{l:GF3-reduction} also yields a reduction from $\TGF[\KEQ{K}]$ to~\TGF.
It follows that $\TGF[\KEQ{K}]$ enjoys the finite model property with the same upper bounds on minimal model size.
Since these bounds match the claimed complexity bounds, the result follows by a straightforward guess-and-verify argument.

\subparagraph*{Lower bound.}

We now establish matching lower bounds by adapting the constructions from \autoref{sec:GF3-lower}.
We focus on the constant-free, equality-free fragment $\TGF^3[\KEQ{K}]$.
The lower bound for equality-free $\TGF[\KEQ{K}]$ is then obtained by replacing the standard formulas encoding integers from $[0,2^n-1]$ with the succinct axiomatisation of integers from the larger range $[0,2^{2^n}-1]$ developed in \autoref{sec:lifting-lower-bounds}.

The reduction proceeds via the \emph{$(K{+}1)$-exponential tiling problem}, known to be $(K{+}1)$-\NExpTime-hard.
An instance of this problem is a tuple $\cT = \langle \cC, c_{0}, \cH, \cV, n \rangle$, where $\cC$ is a non-empty finite set of \emph{colours},  
$c_{0} \in \cC$ is the \emph{initial} colour, $\cH, \cV \subseteq \cC \times \cC$ are the sets of \emph{horizontal} and \emph{vertical} constraints, respectively, and $n \in \N$ is given in unary.  
Write $m = \tet{K+1}{n}$.
The problem is to decide if the instance $\cT$ is \emph{solvable}, that is, if there exists a \emph{tiling} function $f\colon [0,m{-}1] \times [0,m{-}1] \to \cC$ such that the following holds:
\begin{itemize}
  \item The initial condition holds: $f(i,j) = c_0$ for some $i,j \in [0,m{-}1]$.
  \item The horizontal constraints hold: for all $i,j \in [0,m{-}1]$, we have
  \[ \langle f(i,j), f(i',j) \rangle \in \cH,\]
  where $i' \equiv (i+1)\operatorname{mod}m$.
  \item The vertical constraints hold: for all $i,j \in [0,m{-}1]$, we have
  \[\langle f(i,j), f(i,j') \rangle \in \cV,\]
  where $j' \equiv (j+1)\operatorname{mod}m$.
\end{itemize}

Let $K \ge 1$ and $n \ge 1$, and write $m = \tet{K+1}{n}$.
We describe a theory of the $m \times m$ toroidal grid.
In $\TGF^3[\KEQ{K}]$, the distinguished predicates are $E_1,\dots,E_K$.
We axiomatise the additional predicate $E_{K+1}$ as the universal relation:
\begin{flalign}\label{eqn:exp-grid-first}
 \forall x,y\;x E_{K+1} y
\end{flalign}

Let $\phi_{K+1,n}^{(1)},\dots,\phi_{K+1,n}^{(K+1)}$ be sentences as constructed in \autoref{sec:GF3-large-counters}.
To axiomatise the coordinate axes, we rewrite the last sentence $\phi_{K+1,n}^{(K+1)}$ into two sentences $\phi_X^{(K+1)}$ and $\phi_Y^{(K+1)}$.

Introduce fresh symbols $B_{\diamond}$, $C_{\diamond}$, $S_{\diamond}$, $Z_{\diamond}$, and $Q_{\diamond}$ for $\diamond \in \{ X, Y\}$.
Define $\phi_X^{(K+1)}$ as the conjunction of sentences isomorphic to~\eqref{phik-first}--\eqref{phik-pre-last} but with the predicates $B_{K+1}$, $C_{K+1}$, $S_{K+1}$, $Z_{K+1}$, and $Q_{K+1}$ replaced by their respective counterparts for $\diamond = X$.
Let $\phi_Y^{(K+1)}$ be defined as $\phi_X^{(K+1)}$ but for $\diamond = Y$.
Note that in $\phi_X^{(K+1)}$ and $\phi_Y^{(K+1)}$ we omitted the last conjunct~\eqref{phik-last} of the original sentence $\phi_{K+1,n}^{(K+1)}$.
This is as expected, since~\eqref{phik-last} states that there is at most one $E_K$-class for every $(K{+}1)$-numeral.
Naturally, for the grid, we require at most one $E_K$-class for every pair of $(K{+}1)$-numerals:
\begin{flalign}\label{eqn:exp-grid-second}
  \forall x_1,x_2\;\big(\big(Q_X(x_1,x_2) \wedge Q_Y(x_1,x_2)\big) \rightarrow x_1 E_K x_2\big)
\end{flalign}
We also require proper horizontal and vertical successors:
\begin{flalign}
  &\forall x\;\exists y\;\big(S_X(x,y) \wedge Q_Y(x,y)\big)\label{eqn:exp-grid-mid} \\
  &\forall x\;\exists y\;\big(Q_X(x,y) \wedge S_Y(x,y)\big)\label{eqn:exp-grid-last}
\end{flalign}
Finally, define
\[ \Theta_{K,n} \;\eqdef\; \phi_{K+1,n}^{(1)} \wedge \dots \wedge \phi_{K+1,n}^{(K)} \wedge \phi^{(K+1)}_X \wedge \phi^{(K+1)}_Y \wedge \phi_{\eqref{eqn:exp-grid-first}} \wedge \phi_{\eqref{eqn:exp-grid-second}} \wedge \phi_{\eqref{eqn:exp-grid-mid}} \wedge \phi_{\eqref{eqn:exp-grid-last}} \]

\begin{claim}\label{claim:grid-invariant}
  The sentence $\Theta_{K,n}$ is satisfiable.
  Moreover, for every model $\str{A}\models \Theta_{K,n}$, one can choose functions
  \[ \num_X,\num_Y\colon A \longrightarrow [0,\tet{K+1}{n}-1]\]
  such that for $\diamond \in \{X,Y\}$ and $a,b \in A$ the following holds:
  \begin{itemize}
    \item $\str{A} \models a E_K b$ \, iff \, $\num_{X}(a) = \num_{X}(b)$ and $\num_{Y}(a) = \num_{Y}(b)$.
    \item $\str{A} \models S_{\diamond}(a,b)$ \, iff \,
    \(\num_{\diamond}(a) + 1 \equiv \num_{\diamond}(b) \mod \tet{K+1}{n}.\)
    \item $\str{A} \models Z_{\diamond}(a)$ \, iff \, $\num_{\diamond}(a) = 0$.
    \item $\str{A} \models Q_{\diamond}(a,b)$ \, iff \, $\num_{\diamond}(a) = \num_{\diamond}(b)$.
  \end{itemize}
\end{claim}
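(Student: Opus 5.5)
We need to prove Claim~\ref{claim:grid-invariant}: $\Theta_{K,n}$ is satisfiable, and in every model the numeral functions $\num_X,\num_Y$ exist with the stated properties. The plan is to reuse the analysis of $\phi_{K+1,n}$ from \autoref{sec:GF3-large-counters}, where the same intended interpretation was asserted for $S_k,Z_k,Q_k$, treating the two axes as two independent copies of the top level.

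\emph{Satisfiability.} We build the intended model explicitly. The domain consists of tuples $(p,q,c_{K-1},\dots,c_1,c_0)$, where $p,q \in [0,\tet{K+1}{n}-1]$ and each $c_k \in [0,\tet{k+1}{n}-1]$ is a cell index of the nested counter (so $c_0$ is a $1$-numeral). We interpret $E_k$ for $k \in [K]$ as agreement on all coordinates except $c_{k-1},\dots,c_0$. The $1$-numeral bits $BB_i$ are read off $c_0$. For $k \in [2,K]$, the bit $B_k$ of an element is bit number $c_{k-2}$ of the value $c_{k-1}$. The top-level bits are set by $B_X$ equal to bit $c_{K-1}$ of $p$, and $B_Y$ equal to bit $c_{K-1}$ of $q$. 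Carries, $S$, $Z$ and $Q$ are interpreted by their intended meaning, $E_0$ is identity on all coordinates (agreement on everything), $E_{K+1}$ is universal, and the auxiliary $W_k$ and guards are chosen as witnessing predicates. Each axiom is then a routine check: counter existence is given by cyclic successor in the corresponding coordinate; \eqref{eqn:exp-grid-second} holds because $E_K$-classes are indexed exactly by $(p,q)$; and \eqref{eqn:exp-grid-mid}--\eqref{eqn:exp-grid-last} hold by incrementing $p$ or $q$ while keeping the other coordinates.

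\emph{Invariant in every model.} We first prove, by induction on $k \in [K]$, the intended-interpretation statement for levels $1,\dots,K$: $\num_k$ is well defined, $E_{k-1}$ means ``same $E_k$-class and same $\num_k$'', and $S_k$, $Z_k$, $Q_k$ are correct within $E_{K+1}$-classes. Since $E_{K+1}$ is universal, this holds globally. The inductive step is exactly the argument implicit in the construction of $\phi^{(k)}$. Axiom \eqref{phik-first} makes $B_k$ constant on $E_{k-2}$-classes, so $\num_k$ depends only on the $E_{k-1}$-class. The carry axioms along the $S_{k-1}$-cycle, starting at $Z_{k-1}$, compute the correct carries, using the inductive hypothesis that $S_{k-1}$ is exactly cyclic successor on $E_{k-2}$-classes. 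Axioms \eqref{eqn:stinki-axiom-1}--\eqref{eqn:stinki-axiom-2} together with the XOR axiom give the direction ``$S_k$ implies successor''. The $W_k$ pair gives the converse. Then \eqref{phik-pre-last} characterises $Q_k$, and \eqref{phik-last} gives the uniqueness needed for $E_{k-1}$.

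We then define $\num_X$ and $\num_Y$ from $B_X$ and $B_Y$ over the $E_{K-1}$-classes inside an $E_K$-class, indexed by $\num_K$, exactly as $\num_{K+1}$. Because $\phi^{(K+1)}_X$ and $\phi^{(K+1)}_Y$ are verbatim copies of \eqref{phik-first}--\eqref{phik-pre-last}, the same argument as in the inductive step yields the $S_\diamond$, $Z_\diamond$ and $Q_\diamond$ clauses.

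\emph{$E_K$ clause.} For the forward direction, $E_K$-related elements lie in the same $E_K$-class and therefore have equal $\num_X$ and $\num_Y$, since these are functions of the $E_K$-class. The converse direction, equal numerals implying $E_K$, follows from $Q_X \wedge Q_Y$ together with \eqref{eqn:exp-grid-second}.

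\emph{Main obstacle.} The delicate step is the converse direction for successors: showing that two $E_{k-1}$-classes representing consecutive numerals must actually be $S_k$-linked. This is where \eqref{eqn:three-variable-trik1}--\eqref{eqn:three-variable-trik2} are used. Suppose they are not linked. The $W_k$ axioms then produce $y_1,y_2$ with $Q_{k-1}(y_1,y_2)$ at which $\chi$ fails. By the inductive correctness of $Q_{k-1}$ (relying on the universality of $E_{K+1}$ so that $E_K$-guards are not restrictive), this is a genuine disagreement at some bit position, contradicting consecutiveness. A related subtlety is that \eqref{eqn:stinki-axiom-1}--\eqref{eqn:stinki-axiom-2} must propagate $S_k$-links to every pair of matching bit-positions; here cyclicity of $S_{k-1}$ lets us reach all positions by iteration.
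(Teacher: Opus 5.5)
Your proposal is correct and follows the argument the paper intends. The paper states this claim without a written proof, relying on three things: the intended semantics of the nested counter from \autoref{sec:GF3-large-counters} (with $E_{K+1}$ forced to be universal); the fact that $\phi^{(K+1)}_X$ and $\phi^{(K+1)}_Y$ are copies of \eqref{phik-first}--\eqref{phik-pre-last}; and \eqref{eqn:exp-grid-second} taking over the role of \eqref{phik-last} for the $E_K$ clause. Your level-by-level induction together with the explicit intended model is exactly this structure. One point your phrase ``verbatim copies'' leaves implicit is that the two copies must use distinct auxiliary symbols $W_X$ and $W_Y$. A shared $W_{K+1}$ would make the theory unsatisfiable: a pair produced by \eqref{eqn:exp-grid-last} is $S_Y$-linked but not $S_X$-linked, so it would need a $W$-witness of a $\chi_Y$-violation that cannot exist.
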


Now, it is routine to express the required tiling conditions:
fix an instance $\cT = \langle \cC, c_{0}, \cH, \cV, n \rangle$.
Introduce for each colour $c\in \cC$ a unary predicate $P_c$,
and express the initial condition and the uniqueness of colours:
\begin{flalign}\label{eqn:first-tiling}
  &\exists x\;P_{c_0}(x) \; \wedge\; \forall x\;\bigvee_{c \in C} P_c(x)
\end{flalign}
\begin{flalign}\label{eqn:specjalnejtroski}
  &\forall x_1,x_2\;\Big(x_1 E_K x_2 \rightarrow \bigwedge_{\substack{c,c' \in \cC \\ c \neq c'}} \big(\neg P_c(x_1) \vee \neg P_{c'}(x_2)\big)\Big)
\end{flalign}
Finally, express the hereditary conditions:
\begin{flalign}
& \forall x_1,x_2\;\Big(\big(S_X(x_1,x_2) \wedge Q_Y(x_1,x_2)\big) \rightarrow \bigvee_{\langle c,c' \rangle \in \cH} \big(P_c(x_1) \wedge P_{c'}(x_2)\big)\Big) \\
& \forall x_1,x_2\;\Big(\big(Q_X(x_1,x_2) \wedge S_Y(x_1,x_2)\big) \rightarrow \bigvee_{\langle c,c' \rangle \in \cV} \big(P_c(x_1) \wedge P_{c'}(x_2)\big)\Big)\label{eqn:last-tiling}
\end{flalign}

\section{A Restricted Use of Equality: Capturing Nominals}\label{sec:nominals}

Our proof techniques allow a restricted use of equality: namely, the atoms $c = x$ and $c = c'$, where $x$ is a variable and $c$, $c'$ constants, pose no problem.

This follows from the general property of our model constructions:
models are built by cloning $k$-types from a fixed model of a sentence in a way that does not modify previous choices.
Since $k$-types include literals involving constants,
the local structure on constants and each \emph{guarded} tuple is isomorphically preserved.
Here, a tuple $\bar{a}$ is said to be guarded in a structure $\str{A}$ whenever $\str{A} \models R(\bar{b})$ for some relation symbol $R$ and a tuple $\bar{b}$ with $\bar{a} \subseteq \bar{b}$.

\end{document}